\documentclass[12pt]{article}

\usepackage{booktabs}

\usepackage{amssymb,latexsym}
\usepackage{graphicx,amsmath,amsfonts}
\usepackage{comment}
\usepackage{tikz}
\usepackage{pgfplots}

\usepgfplotslibrary{fillbetween}
\usepackage{url}
\usepackage{mathrsfs}
\usepackage{paralist}
\usepackage{hyperref}
\usepackage{nicefrac}
\usepackage{booktabs}
\usepackage{setspace}
\usepackage{etoolbox}
\usepackage{wrapfig}
\usepackage{natbib}
\bibliographystyle{abbrvnat}
\setcitestyle{authoryear}
\usepackage{amsthm}
\allowdisplaybreaks

\makeatletter

\oddsidemargin    0in
\evensidemargin   0in
\topmargin        -0.2in
\textwidth        6.3in
\textheight       8.5in %

\patchcmd{\NAT@citex}
  {\@citea\NAT@hyper@{%
     \NAT@nmfmt{\NAT@nm}%
     \hyper@natlinkbreak{\NAT@aysep\NAT@spacechar}{\@citeb\@extra@b@citeb}%
     \NAT@date}}
  {\@citea\NAT@nmfmt{\NAT@nm}%
   \NAT@aysep\NAT@spacechar\NAT@hyper@{\NAT@date}}{}{}

\patchcmd{\NAT@citex}
  {\@citea\NAT@hyper@{%
     \NAT@nmfmt{\NAT@nm}%
     \hyper@natlinkbreak{\NAT@spacechar\NAT@@open\if*#1*\else#1\NAT@spacechar\fi}%
       {\@citeb\@extra@b@citeb}%
     \NAT@date}}
  {\@citea\NAT@nmfmt{\NAT@nm}%
   \NAT@spacechar\NAT@@open\if*#1*\else#1\NAT@spacechar\fi\NAT@hyper@{\NAT@date}}
  {}{}

\makeatother

\renewenvironment{description}
  {\list{}{\labelwidth=15pt \leftmargin=15pt
   }}
  {\endlist}

\newcommand{\arxiv}[1]{}

\hypersetup{
colorlinks=true,citecolor=blue!80!black,linkcolor=red!60!black,
pagebackref=true
}

\newtheorem{theorem}{Theorem}

\newtheorem{corollary}{Corollary}
\newtheorem{lemma}{Lemma}

\newtheorem{definition}{Definition}
\newtheorem{example}{Example}
\newtheorem{proposition}{Proposition}

\newcommand{\score}[1]{{{\mathrm{sc}_{#1}}}}
\newcommand{\reals}{\mathbb R}

\newcommand{\naturals}{\mathbb N}

\newcommand{\calA}{\mathcal{A}}

\newcommand{\calR}{\mathcal{R}}
\newcommand{\calF}{\mathcal{F}}
\newcommand{\calG}{\mathcal{G}}
\newcommand{\calM}{\mathcal{M}}

\newcommand{\calP}{\mathcal{P}}
\newcommand{\pos}{{{{\mathrm{pos}}}}}

\newcommand{\toapproval}{{{{\mathit{Appr}}}}}
\newcommand{\toordinal}{{{{\mathit{Rank}}}}}
\newcommand{\most}{{{{\mathit{Bnd}}}}}
\newcommand{\bounded}{\most}
\newcommand{\exact}{{{{\mathit{Reg}}}}}
\newcommand{\regular}{\exact}

\newcommand{\shortcite}[1]{(\citeyear{#1})}

\sloppy

\makeatletter
\newtheorem*{rep@theorem}{\rep@title}
\newcommand{\newreptheorem}[2]{%
\newenvironment{rep#1}[1]{%
 \def\rep@title{#2 \ref{##1}}%
 \begin{rep@theorem}}%
 {\end{rep@theorem}}}
\makeatother
\newreptheorem{theorem}{Theorem}
\newreptheorem{proposition}{Proposition}
\newreptheorem{lemma}{Lemma}
\newreptheorem{corollary}{Corollary}

\newcommand{\boxify}[1]{\vspace{10px}\noindent\fbox{\parbox{0.98\textwidth}{\centering\parbox{0.96\textwidth}{#1}}}\vspace{10px}}
\newcommand{\axiom}[2]{\boxify{\vspace*{3px}\textbf{#1.} #2}}
\newcommand{\axiomset}[1]{\boxify{\vspace*{3px} #1}}

\newcommand{\posf}{{{g}}}

\newcommand{\pow}{\mathscr P}
\newcommand{\set}{\mathit{set}}

\def\argmax{\mbox{argmax}}

\usetikzlibrary{plotmarks}
\begin{filecontents}{pav.data}
0 0
1 1
2 1.5
3 1.833333333 
4 2.083333333
5 2.283333333
6 2.45 
7 2.592857143
\end{filecontents}

\begin{filecontents}{sl.data}
0 0
1 1
2 1.333333333
3 1.533333333 
4 1.676190476
5 1.787301587
6 1.878210678 
7 1.955133755
\end{filecontents}

\begin{filecontents}{cc.data}
0 0
1 1
2 1
3 1 
4 1
5 1
6 1 
7 1
\end{filecontents}

\begin{filecontents}{av.data}
0 0
1 1
2 2
3 3 
4 4
5 5
6 6 
7 7
\end{filecontents}

\begin{filecontents}{lb.data}
0   0
1	1
2	1.4285714286
3	1.7063492063
4	1.9063492063
5	2.0563492063
6	2.1674603175
7	2.2388888889
7.1 	2.2388888889
\end{filecontents}

\begin{filecontents}{ub.data}
0   0
1	1
2	2
3	2.5
4	2.8333333333
5	3.0833333333
6	3.2833333333
7	3.45
7.1	3.45
\end{filecontents}

\begin{document}

\title{Consistent Approval-Based Multi-Winner Rules\footnote{This work was presented at EC'18, the 19th ACM Conference on Economics and Computation; an abstract was published in the proceedings~\citep{ec/LacknerSkowron-consistentmwrules}.}} 

\author{Martin Lackner\\
  TU Wien\\
  Vienna, Austria
  \and 
Piotr Skowron\\
  University of Warsaw\\
  Warsaw, Poland
}
\date{}

\maketitle

\begin{abstract}
This paper is an axiomatic study of consistent approval-based multi-winner rules, i.e., voting rules that select a fixed-size group of candidates based on approval ballots.
We introduce the class of counting rules and provide an axiomatic characterization of this class 
based on the consistency axiom.
Building upon this result, we axiomatically characterize three important consistent multi-winner rules: Proportional Approval Voting, Multi-Winner Approval Voting and the Approval Chamberlin--Courant rule.
Our results demonstrate the variety of multi-winner rules and illustrate three different, orthogonal principles that multi-winner voting rules may represent: individual excellence, diversity, and proportionality.\end{abstract}

\section{Introduction}

A \emph{multi-winner} rule selects a fixed-size set of candidates---a \emph{committee}---based on the preferences of voters.
Multi-winner elections are of importance in a wide range of scenarios, which often fit in, but are not limited to, one of the following three categories \citep{elk-fal-sko-sli:c:multiwinner-rules,FSST-trends}.
The first category contains multi-winner elections aiming for proportional representation. The archetypal example of a multi-winner election is that of selecting a representative body such as a parliament, where a fixed number of seats are to be filled; and these seats are ideally filled so as to proportionally represent the population of voters.
Hence, voting rules used in parliamentary elections typically follow the intuitive principle of \emph{proportionality}, i.e., the chosen subset of candidates should proportionally reflect the voters' preferences.
The second category comprises multi-winner elections with the goal that as many voters as possible should have an acceptable representative in the committee. Consequently, there is no or little weight put on giving voters a second representative in the committee. Ensuring a representative for as many voters as possible can be viewed as an egalitarian objective. This goal may be desirable, e.g., in a deliberative democracy~\citep{ccElection,dryzek2003social}, where it is more important to represent the diversity of opinions in an elected committee rather than to include multiple members representing the same popular opinion.
Another example would be the distribution of facilities such as hospitals in a country, where voters would prefer to have a hospital close to their home but are less interested in having more than one in their vicinity.
Voting rules suitable in such scenarios follow the principle of \emph{diversity}.
The third category contains scenarios where the goal is to choose a fixed number of best candidates and where ballots are viewed as expert judgments. Here, the chosen multi-winner rule should follow the \emph{individual excellence} principle, i.e., to select candidates with the highest total support of the experts.
An example is shortlisting nominees for an award 
where a nomination itself is often viewed as an achievement.
We consider multi-winner rules based on approval ballots, which allow voters to express \emph{dichotomous preferences}. Dichotomous preferences distinguish between approved and disapproved candidates---a dichotomy. 
An approval ballot thus corresponds to a subset of (approved) candidates.
A simple example of an approval-based election can highlight the distinct nature of proportionality, diversity, and individual excellence:
There are 100 voters and 5 candidates $\{a,b,c,d,e\}$: 66 voters approve the set $\{a,b,c\}$, 33 voters approve $\{d\}$, and one voter approves $\{e\}$. Assume we want to select a committee of size three. If we follow the principle of proportionality, we could choose, e.g., $\{a,b,d\}$; this committee closely reflects the proportions of voter support. If we aim for diversity and do not consider it important to give voters more than one representative, we may choose the committee $\{a,d, e\}$: it contains one approved candidate of every voter. The principle of individual excellence aims to select the strongest candidates: $a$, $b$, and $c$ have most supporters and are thus a natural choice, although the opinions of 34 voters are essentially ignored. We see that these three intuitive principles give rise to very different committees. 
In this paper, we will explore these principles in a formal, mathematical framework.

For single-winner rules, one distinguishes between social welfare functions, i.e., voting rules that output a ranking of candidates, and social choice functions, i.e., rules that output a single winner or a set of tied winners. For multi-winner rules, an analogous classification applies:
we distinguish between \emph{approval-based committee (ABC) ranking rules}, which produce a ranking of all committees, and \emph{ABC choice rules}, which output a set of winning committees.
In this work, we mainly focus on ABC ranking rules since this model is more versatile. In particular, with ABC ranking rules one can easily combine the societal evaluation of committees with additional requirements one would like to impose on the structure of the committee. E.g., suppose the goal is to select a committee subject to certain diversity constraints (such as an equal number of men and women). In such scenarios a ranking rule can be applied directly: among the committees that satisfy the diversity constraint, one can simply select the committee that is ranked highest. Although our focus is mainly one ABC ranking rules, we present a technique that allows us to translate some of our axiomatic characterizations to the framework of ABC choice rules.

While axiomatic questions are well explored for both social choice and social welfare functions, far fewer results are known for multi-winner rules (we provide an overview of the related literature in Section~\ref{sec:related_work}).
Such an axiomatic exploration of multi-winner rules is essential if one wants to choose a rule in a principled way.
Axiomatic characterizations of multi-winner rules are of crucial importance because those rules may have very different objectives: as we have seen in the example, proportionality, diversity, and individual excellence may be conflicting goals.

The main goal of this paper is to explore the class of \emph{consistent} ABC ranking rules. An ABC ranking rule is consistent if the following holds: if two disjoint societies decide on the same set of candidates and if both societies prefer committee $W_1$ to a committee $W_2$, then the union of these
two societies should also prefer $W_1$ to $W_2$.
This is a straightforward adaption of consistency as defined for single-winner rules by Smith~\shortcite{smi:j:scoring-rules} and Young~\shortcite{young74}.
Our results highlight the diverse landscape of consistent multi-winner rules and their defining and widely varying properties.

\subsection{Main results}
The first main result of this paper is an axiomatic characterization of \emph{ABC counting rules}, which are a subclass of ABC ranking rules.
ABC counting rules are informally defined as follows:
given a real-valued function $f(x,y)$ (the so-called \emph{counting function}), a committee $W$ receives a score of $f(x,y)$ from every voter for whom committee $W$ contains $x$ approved candidates and who approves~$y$ candidates in total; the ABC counting rule implemented by $f$ ranks committees according to the sum of scores obtained from all voters. 
We obtain the following characterization.
\newcommand{\thmcharacterizationWelfareFunctions}{An ABC ranking rule is an ABC counting rule if and only if it satisfies symmetry, consistency, weak efficiency, and continuity.}
\begin{theorem}\label{thm:characterizationWelfareFunctions}
\thmcharacterizationWelfareFunctions
\end{theorem}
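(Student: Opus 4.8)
The ``only if'' direction (necessity of the four axioms) is the routine part: for a rule implemented by a counting function $f$, symmetry holds because a voter's contribution $f(|A\cap W|,|A|)$ is invariant under renaming candidates and is summed anonymously; consistency holds because the rule ranks committees by a sum of per-voter scores and sums distribute over disjoint unions; weak efficiency follows from the structural requirements placed on $f$; and continuity holds because a strict inequality between two sums of rational multiples of fixed reals survives replication of the electorate and small perturbations. So the substance is the ``if'' direction, and the plan is to adapt the Young--Myerson derivation of single-winner scoring rules to the committee setting.

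Fix the candidate set $C$ with $|C|=m$ and the committee size $k$, and identify an election over $C$ with its \emph{ballot-count vector} in $\N^{N}$, $N=2^{m}$; the anonymity half of symmetry says the rule sees only this vector. For each ordered pair of committees $(W_1,W_2)$ let $\calW_{W_1\succeq W_2}\subseteq\N^{N}$ collect the ballot-count vectors for which the rule weakly prefers $W_1$ to $W_2$; since an ABC ranking rule outputs a complete preorder on committees, $\calW_{W_1\succeq W_2}$ is the union of the strict set $\calW_{W_1\succ W_2}$ and the indifference set $\calW_{W_1\sim W_2}$, and $\calW_{W_1\succ W_2}$, $\calW_{W_1\sim W_2}$, $\calW_{W_2\succ W_1}$ partition $\N^{N}$. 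Consistency says precisely that $\calW_{W_1\succeq W_2}$ is closed under adding any of its own elements, hence is an additive sub-semigroup; iterating gives closure under multiplication by positive integers, and continuity upgrades this to closure under positive rational scaling and forces $\calW_{W_1\succeq W_2}$ to be closed in the relative topology.

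The crux is to promote $\calW_{W_1\succeq W_2}$ to a half-space: to find $s_{W_1,W_2}\in\R^{N}$ with $v\in\calW_{W_1\succeq W_2}\iff\langle s_{W_1,W_2},v\rangle\ge 0$. I would extend the weak-preference relation from $\N^{N}$ to signed (``virtual'') profiles in $\Z^{N}$ by a cancellation argument that consistency legitimizes, check that the virtual profiles weakly favoring $W_1$ form a convex cone, pass to its closure $\overline K\subseteq\R^{N}$, and then argue that $\overline K$ together with its opposite cone covers $\R^{N}$ while their relative interiors are disjoint and separated --- the separating functional being $s_{W_1,W_2}$, and the separation being exactly where continuity is indispensable (it excludes additive-but-lexicographic relations, e.g.\ ordering committees first by one counting score and breaking ties by another, which are consistent but not representable by a single linear functional). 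Making the virtual extension well defined and carrying out this separation is, I expect, the main obstacle.

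Given the functionals $s_{W_1,W_2}$, I would invoke the neutrality half of symmetry: relabelling $C$ by a permutation $\sigma$ permutes the coordinates of $\N^{N}$ and simultaneously sends $W_i$ to $\sigma(W_i)$, which forces $s_{W_1,W_2}$ to be equivariant; after fixing a common normalization (compatible with the additivity $s_{W_1,W_3}=s_{W_1,W_2}+s_{W_2,W_3}$ that transitivity of the ranking imposes) this means the $A$-coordinate of $s_{W_1,W_2}$ depends only on the isomorphism type of $(A,W_1,W_2)$, and comparing $W_1$ with $W_2$ through intermediate committees collapses this to the separable form $g_m(|A\cap W_1|,|A|)-g_m(|A\cap W_2|,|A|)$ for a single function $g_m$. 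It remains to make $g_m$ independent of $m$: adjoining to an $m$-candidate election one further candidate approved by nobody should not alter the ranking among committees that avoid that dummy, which I would derive from weak efficiency together with consistency and continuity; this lets the $g_m$ be taken as restrictions of one function $f$, and weak efficiency is also what pins $f$ down to the admissible class of counting functions. Declaring $f$ to be the counting function then exhibits the rule as the ABC counting rule implemented by $f$, completing the argument.
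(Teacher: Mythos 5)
Your plan is a genuinely different route from the paper's. The paper does not attempt a direct Young/Myerson-style separation on the space of approval profiles: it first restricts to $\ell$-regular profiles (every voter approves exactly $\ell$ candidates), maps these to linear-order profiles, and invokes the existing characterization of committee scoring rules by Skowron, Faliszewski, and Slinko to obtain, for each $\ell$, a counting function $f_\ell$ valid on $\ell$-regular profiles. The entire difficulty is then concentrated in gluing these together: one must find coefficients $\gamma_1,\dots,\gamma_m$ so that $f(x,\ell)=\gamma_\ell f_\ell(x,\ell)$ implements $\calF$ on arbitrary profiles, which the paper does by an induction on $\ell$-bounded profiles using carefully chosen reference committees $W_1^*,W_2^*$, reference votes $a_{\ell+1}^*,b_{\ell+1}^*$, and a threshold ratio $t^*_{\ell+1}$ extracted via continuity and consistency (Lemmas~\ref{lemma:threshold}--\ref{lemma:arbitraryCommittees}). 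Your approach, if completed, would treat all ballot sizes simultaneously and make the cross-$\ell$ scaling fall out of a single linear functional, which is exactly what it would buy you; the price is that you must redo the full separation machinery in the space $\naturals^{2^m}$ rather than borrowing it from the linear-order setting.

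That price is where the genuine gap lies: the step you yourself flag as ``the main obstacle'' --- extending the preference to virtual profiles in $\integers^{2^m}$ and proving that each set $\calW_{W_1\succeq W_2}$ is the trace of a closed half-space --- is the entire mathematical content of the theorem in your formulation, and it is not carried out. Two sub-steps in particular are asserted rather than proved. First, the separation itself: consistency and continuity give you a sub-semigroup with a cancellation property, but concluding that its generated cone and the opposite cone are separated by a \emph{single} linear functional (rather than, say, a lexicographic pair) is precisely the delicate part of Young's and Myerson's arguments, and you give no indication of how the indifference set is handled in the boundary. Second, the normalization: the functionals $s_{W_1,W_2}$ are each determined only up to a positive scalar, and upgrading transitivity of the weak order to the exact cocycle identity $s_{W_1,W_3}=s_{W_1,W_2}+s_{W_2,W_3}$ (so that a single score $s_W(A)$, and hence $f$, exists) requires a Farkas-type argument across all $\binom{m}{k}$ committees that you only gesture at. Finally, your closing step about making $g_m$ independent of $m$ addresses a non-issue: in Theorem~\ref{thm:characterizationWelfareFunctions} the candidate set $C$ and the committee size $k$ are fixed parameters, so no dummy-candidate argument is needed; weak efficiency is needed only (via Lemma~\ref{lem:cons+pareto->efficiency}) to ensure the derived $f(x,y)$ is weakly increasing in $x$, as the definition of a counting function requires.
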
%
The axioms used in this theorem can be intuitively described as follows:
We say that a rule is symmetric if the names of voters and candidates do not affect the result of an election.
Weak efficiency informally states that candidates that no one approves cannot be ``better'' committee members than candidates that are approved by some voter.
Continuity is a more technical condition that states that a sufficiently large majority can dictate a decision.
As weak efficiency is satisfied by every sensible multi-winner rule and continuity typically only rules out the use of certain tie-breaking mechanisms \citep{smi:j:scoring-rules,young74,you:j:scoring-functions}, Theorem~\ref{thm:characterizationWelfareFunctions} essentially implies that ABC counting rules correspond to symmetric and consistent ABC ranking rules.
Furthermore, we show that the set of axioms used to characterize ABC counting rules is minimal.

Theorem~\ref{thm:characterizationWelfareFunctions} gives a powerful technical tool that allows us to obtain further axiomatic characterizations of more specific rules. Indeed, building upon this result, we  explore the space of ABC counting rules, and obtain our second main result---the axiomatic explanation of the differences between three important rules: Multi-Winner Approval Voting (AV), Proportional Approval Voting (PAV), and Approval Chamberlin--Courant (CC), which are defined by the following counting functions:
\begin{align*}
f_\text{AV}(x,y)=x; \quad\quad\quad f_\text{PAV}(x,y)=\sum_{i=1}^x\nicefrac{1}{i}; \quad\quad\quad f_\text{CC}(x,y) = \begin{cases}0 & \text{if }x = 0,\\1 & \text{if }x\geq 1.\end{cases}
\end{align*}
Note that these three specific example of counting functions do not depend on the parameter $y$ as they belong to the class of Thiele methods; we discuss this fact in Section~\ref{subsec:counting-rule-def}.
These three well-known rules are prime examples of multi-winner systems following the principle of individual excellence, proportionality, and diversity, respectively. Our results imply that the differences between these three rules can be understood by studying how these rules behave when viewed as \emph{apportionment methods}.
Apportionment methods are a well-studied special case of approval-based multi-winner voting, where the set of candidates can be represented as a disjoint union of groups (intuitively, each group can be viewed as a political party), and where each voter approves all candidates within one of these groups (which can be viewed as voting for a single party)---we refer to preference profiles that can be represented in such way as \emph{party-list profiles}. For these mathematically much simpler profiles it is easier to formalize the principles of individual excellence, proportionality and diversity:
\begin{description}
\item[Disjoint equality] states that if each candidate is approved by at most one voter, then any committee consisting of approved candidates is a winning committee. One can argue that the principle of individual excellence implies disjoint equality: if every candidate is approved only once, then every approved candidate has the same support, their ``quality'' cannot be distinguished, and hence all approved candidates are equally well suited for selection.
\item[D'Hondt proportionality] defines a way in which parliamentary seats are assigned to parties in a proportional manner. The D'Hondt method (also known as Jefferson method) is one of the most commonly used methods of apportionment in parliamentary elections.
\item[Disjoint diversity] states that as many parties as possible should receive one seat and, if necessary, priority is given to stronger parties. 
Disjoint diversity is implied by apportionment methods such as Huntington-Hill, Dean, or Adams.
\end{description}
We show that Multi-Winner Approval Voting is the only ABC counting rule that satisfies disjoint equality, Proportional Approval Voting is the only ABC counting rule satisfying D'Hondt proportionality and that Approval Chamberlin--Courant is the only ABC counting rule satisfying disjoint diversity.
Together with Theorem~\ref{thm:characterizationWelfareFunctions}, these results lead to unconditional axiomatic characterizations of AV, PAV, and CC. In particular, our results show that Proportional Approval Voting is essentially the only consistent extension of the D'Hondt method to the more general setting where voters decide on individual candidates rather than on parties.

Our results illustrate the variety of ABC ranking rules, even within the class of consistent ABC ranking rules. This variety is due to their defining counting function $f(x,y)$; see Figure~\ref{fig:countingfcts} for a visualization.
Our results indicate that counting functions that have a larger slope than $f_\text{PAV}$ put more emphasis on majorities and thus become less egalitarian, whereas counting functions that have a smaller slope than $f_\text{PAV}$ treat minorities preferentially and thus approach the Approval Chamberlin--Courant rule.

\begin{figure}
\begin{center}
\begin{tikzpicture}[y=.6cm, x=1.2cm,font=\sffamily]

\pgfdeclarelayer{bg}
\pgfsetlayers{bg,main}
    \draw (0,0) -- coordinate (x axis mid) (7.5,0);
        \draw (0,0) -- coordinate (y axis mid) (0,8.5);
        \foreach \x in {0,...,7}
             \draw (\x,1pt) -- (\x,-3pt)
            node[anchor=north] {\x};
        \foreach \y in {0,2,...,8}
             \draw (1pt,\y) -- (-3pt,\y) 
                 node[anchor=east] {\y}; 
    \node[below=0.8cm] at (x axis mid) {number of approved candidates in committee ($x$)};
    \node[rotate=90, above=0.8cm] at (y axis mid) {score $f(x,.)$};

    \draw plot[mark=*, mark options={fill=white}] 
        file {av.data};
    \draw plot[mark=triangle*, mark options={fill=white} ] 
        file {pav.data};
    \draw plot[mark=square*]
        file {cc.data}; 
    \path[name path = ub] plot[] 
        file {ub.data}; 
    \path[name path = lb] plot[] 
        file {lb.data}; 
    \begin{pgfonlayer}{bg}
        \fill [black!20,
          intersection segments={
            of=lb and ub,
            sequence={L*-- R*[reverse]}
          }];
    \end{pgfonlayer}
    
    \node[right=0.2cm] at (7,1) {$f_\text{CC}$};
    \node[right=0.2cm] at (7,2.717857143) {$f_\text{PAV}$};
    \node[right=0.2cm] at (7,7) {$f_\text{AV}$};        

    \begin{scope}[shift={(0.5,6.75)}] 
    \draw[yshift=2\baselineskip] (0,0) -- 
        plot[mark=*, mark options={fill=white}] (0.25,0) -- (0.5,0) 
        node[right]{Multi-Winner Approval Voting};
    \draw[yshift=\baselineskip] (0,0) -- 
        plot[mark=triangle*, mark options={fill=white}] (0.25,0) -- (0.5,0)
        node[right]{Proportional Approval Voting};
    \draw[yshift=0\baselineskip] (0,0) -- 
        plot[mark=square*, mark options={fill=black}] (0.25,0) -- (0.5,0)
        node[right]{Approval Chamberlin--Courant};
    \end{scope}
\end{tikzpicture}
\caption{Different counting functions and their corresponding ABC counting rules. Counting functions outside the gray area fail the lower quota axiom; see Section~\ref{sec:lower-quota} for a formal statement.}
\label{fig:countingfcts}
\end{center}
\end{figure}
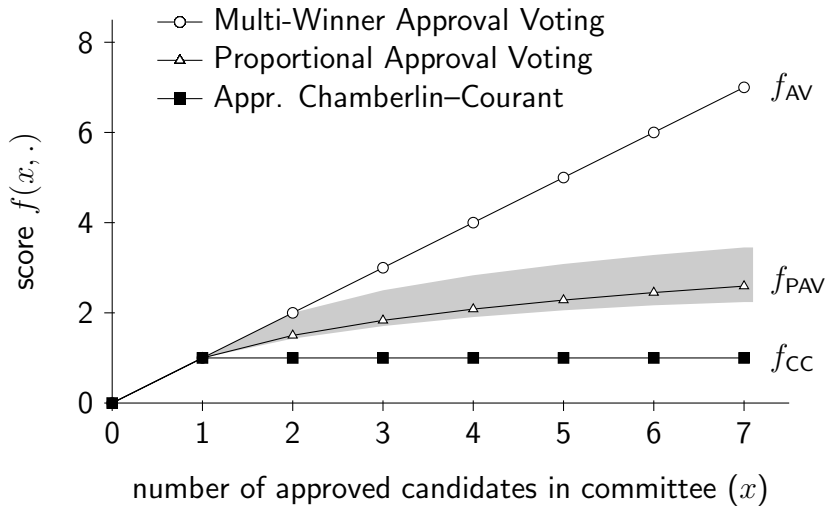

We furthermore present several extensions of our main results. 
First, we show that counting functions that are not ``close'' to $f_\text{PAV}$ (all those not contained in the gray area around $f_\text{PAV}$ in Figure~\ref{fig:countingfcts}) implement ABC ranking rules that violate a rather weak form of proportionality called lower quota.
Second, we show that our characterization of PAV can be generalized to a broader class of ABC counting rules: given a shape of proportionality on party-list profiles, represented as a specific divisor apportionment method%
, we show that there is a unique symmetric, consistent and continuous ABC ranking rule that guarantees this kind of proportionality. 
Third, we prove that the characterizations of Proportional Approval Voting, Approval Chamberlin--Courant, and the aforementioned generalization to arbitrary divisor methods also hold for ABC choice rules.

Finally, we note that our main theorem, Theorem~\ref{thm:characterizationWelfareFunctions}, can also be used to obtain further characterizations that are not based on proportionality or apportionment. For example, independence of irrelevant alternatives characterizes the classes of Thiele methods, and a variant of strategyproofness yields an alternative characterizations of Multi-Winner Approval Voting~\citep{lac-sko:t:multiwinner-strategyproofness}.

\subsection{Related Work}\label{sec:related_work}

Electing a representative body such as a parliament is perhaps the most classic example of a multi-winner election; we refer to the books of Farrell~\shortcite{Far11}, and Renwick and Pilet~\shortcite{renwick16} for an overview of multi-winner elections in a political context. In recent years, there has been an emerging interest in multi-winner elections from the computer science community.
In this context, multi-winner election rules have been analyzed and applied in a variety of scenarios: 
personalized recommendation and advertisement~\citep{budgetSocialChoice,bou-lu:c:value-directed}, 
group recommendation~\citep{owaWinner}, 
diversifying search results~\citep{proprank}, improving genetic algorithms~\citep{fal-saw-sch-smo:c:multiwinner-genetic-algorithms}, and the broad class of facility location problems~\citep{far-hek:b:facility-location,owaWinner}. In all these settings, multi-winner voting either appears as a core problem itself or can help to improve or analyze mechanisms and algorithms.
For an overview of this literature we refer the reader to a recent survey by Faliszewski, Skowron, Slinko, and Talmon~\shortcite{FSST-trends}.

The most important axiomatic concept in our study is \emph{consistency}.
In the context of social welfare functions, this axiom informally states that if two disjoint societies both prefer candidate $a$ over candidate $b$, then the union of these two societies should also prefer $a$ over $b$, i.e., consistency refers to consistent behavior with respect to varying populations, referred to as population-consistency in other contexts.
Smith~\shortcite{smi:j:scoring-rules} and Young~\shortcite{young74} independently introduced this axiom and characterized the class of positional scoring rules as the only social welfare functions that satisfy symmetry, consistency, and continuity. %
Subsequently, Young~\shortcite{you:j:scoring-functions} proved an analogous result for social choice functions, i.e., rules that return the set of winning candidates.
Further, Myerson~\shortcite{Myerson1995} and Pivato~\shortcite{pivato2013variable} characterized positional scoring rules with the same set of axioms but without imposing any restriction on the input of rules, i.e., ballots are not restricted to be a particular type of order.
Extensive studies led to further, more specific, characterizations of consistent voting rules~\citep{che-sha:j:scoring-rules,merlinAxiomatic}.
Also in probabilistic social choice, consistency is an important concept: Brandl, Brandt, and Seedig~\shortcite{Bran13a} characterize Fishburn's rule of maximal lotteries~\shortcite{Fish84a} via two consistency axioms.

The impressive body of axiomatic studies shows that single-winner voting is largely well-understood and characterized.
Axiomatic properties of \emph{multi-winner rules} are considerably fewer in number. Debord~\shortcite{deb:j:k-borda} characterized the $k$-Borda rule using similar axioms as Young~\shortcite{youngBorda}. Felsenthal and Maoz~\shortcite{fel-mao:j:norms} and Elkind, Faliszewski, Skowron, and Slinko~\shortcite{elk-fal-sko-sli:c:multiwinner-rules} formulated a number of axiomatic properties of multi-winner rules, and analyzed which rules satisfy these axioms; however, they do not obtain axiomatic characterizations. Elkind~et~al.~\shortcite{elk-fal-sko-sli:c:multiwinner-rules} also defined the class of committee scoring rules, which aims at generalizing single-winner positional scoring rules to the multi-winner setting. This broad class contains, among others, the Chamberlin--Courant rule~\citep{ccElection}. In recent work, Skowron, Faliszewski, and Slinko~\shortcite{skowron2019axiomatic} showed that the class of committee scoring rules admits a similar characterization as their single-winner counterparts. Faliszewski, Skowron, Slinko, and Talmon~\shortcite{fal-sko-sli-tal:c:top-k-counting,fal-sko-sli-tal-tal:j:hierarchy-committee} further studied the internal structure of committee scoring rules and characterized several multi-winner rules within this class.

The aforementioned characterization result of committee scoring rules \citep{skowron2019axiomatic} plays a major role in the proof of Theorem~\ref{thm:characterizationWelfareFunctions}.
Let us briefly discuss this connection. 
Committee scoring rules are multi-winner voting rules that accept preferences in the form of linear orders as input and output a ranking of committees (a definition can be found in Appendix~\ref{sec:app:countingrules}).
The main difference to ABC ranking rules is thus the type of preferences.
It is important to note that the characterization of \citeauthor{skowron2019axiomatic} only holds for linear orders and not for weak orders, hence ABC ranking rules are not covered by the result.
On the contrary, it requires substantial work to prove the exact relation between these two classes so that results can be transferred from one class to the other.
Furthermore, in our work we characterize specific voting rules based, e.g., on proportionality axioms.
These results are a strength of our model, as proportionality can be formulated much more clearly for approval preferences.
This is because 
proportionality is well-understood for party-list elections; the corresponding mathematical problem is called the \emph{apportionment problem}. It typically arises when allocating seats to political parties based on the number of votes.
We explain in Section~\ref{sec:proportionality} how ABC ranking rules can be seen as generalizations of apportionment methods.
For an overview of the literature on apportionment we refer the reader to the comprehensive books by Balinski and Young~\shortcite{BaYo82a} and by Pukelsheim~\shortcite{Puke14a}.

The concept of proportionality in arbitrary multi-winner elections (i.e., in the absence of parties) is more elusive.
The first study of proportional representation in multi-winner voting dates back to Black~\shortcite{bla:b:polsci:committees-elections}, who informally defined proportionality as the ability to reflect shades of a society's political opinion in the elected committee. 
Later, Dummett~\shortcite{dum:b:voting} proposed an axiom of proportionality for multi-winner rules that accept linear orders as input; it is based on the top-ranked candidates in voters' rankings.
An insightful discussion on the need for notions of proportionality that are applicable to linear order preferences can be found in the seminal work of Monroe~\shortcite{monroeElection}; he referred to such concepts as \emph{fully proportional representation} since they are to take ``full'' preferences into account. 
More recently, axiomatic properties for approval-based rules have been proposed that aim at capturing the concept of proportional representation~\citep{justifiedRepresenattion, pjr17}.
This body of research demonstrates that the concept of proportionality can be sensibly defined and discussed in the context of multi-winner rules, even though this setting is more intricate and mathematically complex than the party-list setting.
It is noteworthy that the results in our paper (in particular Theorem~\ref{thm:thiele-abc-rule-characterization}) show that for obtaining axiomatic characterizations it is in general not necessary to rely on proportionality definitions considering the full domain; proportionality defined in the restricted party-list setting---i.e., proportionality as defined for the apportionment problem---may be sufficient for characterizing multi-winner rules.

There also exist more critical works raising arguments against the use of classical (i.e., linear) proportionality for electing representative assemblies. One class of critical arguments arises from the analysis of issues related to the concept of voting power~\citep{FesMac98}. The second main objection comes from the analysis of probabilistic models describing how the decisions made by the elected committee map to the satisfaction of individual voters participating in the process of electing the committee~\citep{RePEc:ucp:jpolec:doi:10.1086/670380}. This kind of analysis often gives arguments in favor of other concepts such as ``degressive'' proportionality. In Section~\ref{subsec:divisor} we will explain that our results can be easily extended to apply to such other forms of proportionality.

\subsection{Structure of the Paper}

This paper is structured as follows: We briefly state preliminary definitions in Section~\ref{sec:prel}.
Section~\ref{sec:abc-counting} contains a formal introduction of ABC counting rules,
their defining set of axioms, and the statement of our 
main technical tool (Theorem~\ref{thm:characterizationWelfareFunctions}).
In Section~\ref{sec:proportionality}, we discuss and prove the axiomatic characterization of Proportional Approval Voting based on D'Hondt proportionality, %
the characterization of Multi-Winner Approval Voting %
based on disjoint equality,
and the characterization of Approval Chamberlin--Courant via disjoint diversity. %
In Section~\ref{sec:extensions} we extend these characterizations: 
we make the statement precise that only functions ``close'' to $f_\text{PAV}$ can satisfy the lower quota axiom, extend the characterization of PAV to arbitrary divisor methods, and translate our results to ABC choice rules.
Finally, in Section~\ref{sec:concl} we summarize the big picture of this paper.
As this paper contains a large number of proofs, we have moved substantial parts into appendices. 
Appendix~\ref{sec:app:countingrules} contains the main technical and most complex part of this paper, the proof of Theorem~\ref{thm:characterizationWelfareFunctions}.
Appendix~\ref{sec:app:proof-details} contains further omitted proofs, including proofs of the main theorems of Section~\ref{sec:proportionality}.
\arxiv{Appendix~\ref{sec:appendix-disjointequ} contains a technical note on the disjoint equality axiom as used in Theorem~\ref{thm:approval-characterizationB}.}
In Appendix~\ref{sec:app:choice_rules}, we show how to translate some of our results from the setting of ABC ranking rules to ABC choice rules.

\section{Preliminaries}
\label{sec:prel}

We write $[n]$ to denote the set $\{1,\dots,n\}$ and $[i,j]$ to denote $\{i,i+1,\dots,j\}$ for $i\leq j \in \naturals$.
For a set $X$, let $\pow(X)$ denote the powerset of $X$, i.e., the set of subsets of $X$.
Further, for each $\ell$ let $\pow_{\ell}(X)$ denote the set of all size-$\ell$ subsets of $X$.
A weak order of $X$ is a binary relation that is transitive and complete (all elements of $X$ are comparable), and thus also reflexive.
A linear order is a weak order that is antisymmetric.
We write $\mathscr W(X)$ to denote the set of all weak orders of $X$ and $\mathscr L(X)$ to denote the set of all linear orders of $X$.

\medskip \noindent
\textbf{Approval profiles.}
Let $C = \{c_1, \ldots, c_m\}$ be a set of candidates.
We identify voters with natural numbers, i.e., $\naturals$ is the set of all possible voters. For each finite subset of voters $V=\{v_1,\dots,v_n\} \subset \naturals$, an \emph{approval profile of $V$} is a function from $V$ to $\pow(C)$; we write $A = (A(v_1), \ldots, A(v_n))$ as a short-form for this function.
For $v \in V$, let $A(v) \subseteq C$ denote the subset of candidates approved by voter $v$.
We write $\calA(C,V)$ to denote the set of all possible approval profiles over $V$ and $\calA(C) = \bigcup_{\text{finite } V \subset \naturals} \calA(C,V)$ to be the set of all approval profiles (for the fixed candidate set~$C$).
Given a permutation $\sigma\colon C\to C$ and an approval profile $A\in\calA(C,V)$, we write $\sigma(A)$ to denote the profile $(\sigma(A(v_1)), \dots, \sigma(A(v_n)))$.

Let $V=\{v_1,\dots,v_s\}\subseteq \naturals$ and $V'=\{v'_1,\dots,v'_{t}\}\subseteq \naturals$.
Further, let $A\in\calA(C,V)$ and $A'\in\calA(C,V')$.
If $V$ and $V'$ do not intersect, we write $A + A'$ to denote the profile 
$B\in \calA(C,V\cup V')$ defined as $B=(A(v_1),\dots,A(v_s),A'(v'_1),\dots,A'(v'_{t}))$.
If $V$ and $V'$ intersect, we relabel the voters to $V''=[1,s+t]$ and define $B\in \calA(C,V'')$ analogously.
For a positive integer $n$, we write $nA$ to denote $A+A+\dots+A$, $n$ times.

\medskip \noindent
\textbf{Approval-based committee ranking rules.}
Let $k$ denote the desired size of the committee to be formed.
We refer to elements of $\pow_k(C)$ as \emph{committees}.
Throughout the paper, we assume that both $k$ and $C$ (and thus $m$) are arbitrary but fixed.
Furthermore, to avoid trivialities, we assume $k<m$.

An \emph{approval-based committee ranking rule (ABC ranking rule)}, $\calF \colon \calA(C) \to \mathscr W(\pow_{k}(C))$, maps approval profiles to weak orders over committees.
Note that $C$ and $k$ are parameters for ABC ranking rules but since we assume that $C$ and $k$ are fixed, we omit them to alleviate notation.
For an ABC ranking rule $\calF$ and an approval profile $A$, we write $\succeq_{\calF(A)}$ to denote the weak order $\calF(A)$. For $W_1,W_2\in\pow_k(C)$, we write $W_1 \succ_{\calF(A)} W_2$ if $W_1 \succeq_{\calF(A)} W_2$ and not $W_2 \succeq_{\calF(A)} W_1$, and we write $W_1 \sim_{\calF(A)} W_2$ if $W_1 \succeq_{\calF(A)} W_2$ and $W_2 \succeq_{\calF(A)} W_1$.
A committee is a \emph{winning committee} if it is a maximal element with respect to $\succeq_{\calF(A)}$.

An \emph{approval-based committee choice rule (ABC choice rule)}, $\calF \colon \calA(C) \to \pow(\pow_k(C))\setminus\{\emptyset\}$, maps approval profiles to sets of committees, again referred to as \emph{winning committees}.
As before, $C$ and $k$ are parameters for ABC choice rules but we omit them from our notation.
Note that each ABC ranking rule naturally defines an ABC choice rule by returning all top-ranked committees.
In contrast, ABC choice rules do not immediately translate to ABC ranking rules, since the relative ranking of losing committees is not known.

An ABC ranking rule is \emph{trivial} if for all $A\in\calA(C)$ and $W_1,W_2\in\pow_k(C)$ it holds that $W_1 \sim_{\calF(A)} W_2$. 
An ABC choice rule is \emph{trivial} if for all $A\in\calA(C)$ it holds that $\calF(A)=\pow_k(C)$.

Let us now list some important examples of ABC ranking rules. For some of these rules it was already mentioned in the introduction that they belong to the class of ABC counting rules; we discuss this classification in detail in Section~\ref{sec:abc-counting} and also give their defining counting functions. The definitions provided here are more standard and do not use counting functions.

\begin{description}
\item[Multi-Winner Approval Voting (AV).] In AV each candidate $c \in C$ obtains one point from each voter who approves of $c$.
The AV-score of a committee $W$ is the total number of points awarded to members of $W$, i.e., $\sum_{v \in V}|A(v) \cap W|$.
Multi-Winner Approval Voting considered as an ABC ranking rule ranks committees according to their score; AV considered as an ABC choice rule outputs all committees with maximum AV-scores.

\item[Thiele Methods.] In 1895, Thiele~\shortcite{Thie95a} proposed a number of ABC ranking rules that can be viewed as generalizations of Multi-Winner Approval Voting. Consider a sequence of weights $w = (w_1, w_2, \ldots, w_k)$ and define the $w$-score of a committee $W$ as $\sum_{v \in V} \sum_{j = 1}^{|W \cap A(v)|}w_j$, i.e., if voter $v$ has $x$ approved candidates in $W$, $W$ receives a score of $w_1+w_2+\dots+w_x$ from $v$. The committees with highest $w$-score are the winners according to the {$w$-Thiele} method. Thiele methods can also be viewed as ABC ranking rules, where committees are ranked according to their score.
\end{description}

Thiele methods form a remarkably general class of multi-winner rules: apart from Multi-Winner AV which is defined by the weights $w_{\mathrm{AV}} = (1, 1, 1, \ldots)$, PAV and CC also fall into this class.
Thiele methods are also a general class in the sense that they contain both polynomial-time computable rules (such as AV) and NP-hard rules (such as PAV or CC)~\citep{owaWinner,azi-gas-gud-mac-mat-wal:c:multiwinner-approval}.

\begin{description}
\item[Proportional Approval Voting (PAV).] PAV was first proposed by Thiele~\shortcite{Thie95a}; it was later reinvented by Simmons~\shortcite{kil-handbook}, who introduced the name ``proportional approval voting''. PAV is a Thiele method defined by the weights $w = (1,\nicefrac{1}{2},\nicefrac{1}{3},\dots)$. These weights being harmonic numbers guarantee proportionality---in contrast to Multi-Winner Approval Voting, which is not a proportional method. The Proportionality of PAV is illustrated in the example in Introduction.

\item[Approval Chamberlin--Courant (CC).] Also the Approval Chamberlin--Courant rule was suggested and recommended by Thiele~\shortcite{Thie95a}. It closely resembles the Chamberlin--Courant~\shortcite{ccElection} rule, which was originally defined for ordinal preferences but can easily be adapted to the approval setting.
The Approval Chamberlin--Courant rule is a Thiele method defined by the weights $w_{\mathrm{CC}} = (1,0,0,\dots)$.
Consequently, CC chooses committees so as to maximize the number of voters who have at least one approved candidate in the winning committee.

\end{description}

\section{ABC Counting Rules}
\label{sec:abc-counting}

In this section we define a new class of multi-winner rules, called ABC counting rules. ABC counting rules can be viewed as an adaptation of positional scoring rules~\citep{smi:j:scoring-rules,young74} to the world of approval-based multi-winner rules.
Furthermore, ABC counting rules can be viewed as analogous to the class of (multi-winner) committee scoring rules as introduced by Elkind~et~al.~\shortcite{elk-fal-sko-sli:c:multiwinner-rules} but defined for approval ballots instead of ranked ballots.

After formally defining ABC counting rules and introducing some basic axioms, we will present our main technical result: an axiomatic characterization of the class of ABC counting rules. This result forms the basis for our subsequent axiomatic analysis.

\subsection{Defining ABC Counting Rules}
\label{subsec:counting-rule-def}

A \emph{counting function} is a mapping $f \colon [0,k]\times[0,m] \to \reals$ satisfying $f(x,y)\geq f(x',y)$ for $x\geq x'$.
The intuitive meaning is that $f(x, y)$ denotes the score that a committee $W$ obtains from a voter that approves of $x$ members of $W$ and $y$ candidates in total.
Let $A\in\calA(C,V)$.
We define the score of $W$ in $A$ as 
\begin{align}
\score{f}(W, A) = \sum_{v \in V} f(|A(v) \cap W|, |A(v)|).\label{eq:score}
\end{align}

We say that a counting function $f$ \emph{implements} an ABC ranking rule $\calF$ if for every $A\in \calA(C)$ and $W_1,W_2\in \pow_k(C)$, 
\begin{align*}
\score{f}(W_1,A) > \score{f}(W_2,A)\quad \text{ if and only if }\quad W_1 \succ_{\calF(A)} W_2.
\end{align*}
Analogously, we say that a counting function $f$ \emph{implements} an ABC choice rule $\calF$ if  for every $A\in \calA(C)$, we have that
$\calF(A)=\argmax_{W\in \pow_k(C)}\score{f}(W,A)$,
i.e., $\calF$ returns all committees with maximum score.
An ABC (winner) rule $\calF$ is an \emph{ABC counting rule} if there exists a counting function~$f$ that implements $\calF$.

Several ABC ranking rules that we introduced earlier are ABC counting rules.
As we have seen in the introduction, AV, PAV, and CC can be implemented by the following counting function:
\begin{align*}
f_\text{AV}(x,y)=x\text{,}  \qquad\quad f_\text{PAV}(x,y)=\sum_{i=1}^x\nicefrac{1}{i}\text{,}  \qquad\quad  f_\text{CC}(x,y) = \begin{cases}0 & \text{if }x = 0,\\1 & \text{if }x\geq 1.\end{cases}
\end{align*}
Further, ABC counting rules include rules such as 
Constant Threshold Methods~\citep{fishburn2004approval} and Satisfaction Approval Voting~\citep{BrKi14a}, implemented by
\begin{align*}
f_\text{CT}(x,y) = \begin{cases}0 & \text{if }x < t,\\1 & \text{if }x\geq t\end{cases}
\qquad\text{ and }\qquad
f_\text{SAV}(x,y) = \frac{x}{y}\text{.}
\end{align*}

Note that only Satisfaction Approval Voting is implemented by a counting function depending on $y$. As can easily be verified, Thiele methods are exactly those ABC counting rules that can be implemented by a counting function not dependent on $y$: the counting function $f$ defined by $w$-Thiele is $f(x,y)=w_1+\dots+w_x$ and, conversely, every counting function $f:\{0,1,\dots,k\}\to \reals$ can be written as $f(x)=\sum_{j=1}^x w_j$ for some sequence of real weights $(w_1,\dots,w_k)$.

It is apparent that not the whole domain of a counting rule is relevant; consider for example $f(2,1)$ or $f(0,m)$---these function values will not be used in the score computation of any committee, cf.\ Equation~\eqref{eq:score}.
The following proposition provides a tool for showing that two counting rules are equivalent.
It shows which part of the domain of counting rules is relevant and that affine transformations yield equivalent rules.

\newcommand{\propcountingfunctionsequivalent}{Let $D_{m,k}= \{(x,y) \in [0,k]\times[0,m-1]: x\leq y \wedge k-x\leq m-y\}$ and let $f,g$ be counting functions.
If there exist $c\in \reals$ and $d\colon [m]\to \reals$ such that $f(x,y)=c\cdot g(x,y)+d(y)$ for all $x,y\in D_{m,k}$ then $f,g$ implement the same ABC counting rule, i.e., for all approval profiles $A\in \calA(C,V)$ and committees $W_1,W_2\in\pow_k(C)$ it holds that $W_1\succ_{f(A)} W_2$ if and only if $W_1\succ_{g(A)} W_2$.}
\begin{proposition}
\propcountingfunctionsequivalent
\label{prop:counting-functions-equivalent}
\end{proposition}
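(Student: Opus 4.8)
The plan is to show that under the hypothesis, the scores assigned by $f$ and $g$ to any committee differ only by an affine transformation with voter-dependent additive terms that are actually committee-independent, so the induced rankings coincide. First I would observe that the score of a committee $W$ in a profile $A$ only ever queries the counting function at pairs $(|A(v)\cap W|,|A(v)|)$, and I claim every such pair lies in $D_{m,k}$ (up to the harmless identification of $f(x,m)$-type values, which never occur since we assume $k<m$ and profiles use candidates from $C$). Indeed, if voter $v$ approves $y=|A(v)|$ candidates and $x=|A(v)\cap W|$ of them are in $W$, then trivially $x\le y$; moreover the $k-x$ members of $W$ outside $A(v)$ are among the $m-y$ candidates not approved by $v$, giving $k-x\le m-y$; and $y\le m-1$ can be assumed WLOG — the only subtle case is $y=m$ (a voter approving everyone), which I would handle separately by noting such a voter contributes the constant $f(x,m)=f(k,m)$ to every committee and hence is irrelevant to the ranking (alternatively, one extends $d$ to $[m]$ and checks the same affine relation can be arranged at $y=m$, or simply restricts attention to $D_{m,k}$ as the relevant domain). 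So for the ranking it suffices that $f=c\cdot g+d(y)$ holds on $D_{m,k}$.

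Next I would compute, for an arbitrary profile $A\in\calA(C,V)$ and committee $W\in\pow_k(C)$,
\begin{align*}
\score{f}(W,A)=\sum_{v\in V} f(|A(v)\cap W|,|A(v)|)=\sum_{v\in V}\bigl(c\cdot g(|A(v)\cap W|,|A(v)|)+d(|A(v)|)\bigr)=c\cdot\score{g}(W,A)+\sum_{v\in V}d(|A(v)|).
\end{align*}
The crucial point is that the additive term $\sum_{v\in V}d(|A(v)|)$ depends only on the profile $A$, not on the committee $W$; denote it $D_A$. Hence for any two committees $W_1,W_2$ we get $\score{f}(W_1,A)-\score{f}(W_2,A)=c\bigl(\score{g}(W_1,A)-\score{g}(W_2,A)\bigr)$.

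Finally I would split on the sign of $c$. The proposition as stated presumably intends $c>0$ (or at least $c\neq0$; a strictly negative $c$ would reverse the order, and $c=0$ would make $g$ trivial — I would either assume $c>0$ as is standard for "affine transformations yielding equivalent rules," or remark that the statement should read $c>0$). Assuming $c>0$: $\score{f}(W_1,A)>\score{f}(W_2,A)$ iff $\score{g}(W_1,A)>\score{g}(W_2,A)$, which by the definition of "implements" is exactly $W_1\succ_{f(A)}W_2$ iff $W_1\succ_{g(A)}W_2$. Since $A$, $W_1$, $W_2$ were arbitrary, $f$ and $g$ implement the same ABC counting rule, completing the proof.

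The main obstacle — really the only non-bookkeeping point — is verifying that the relevant pairs $(x,y)$ genuinely fall within $D_{m,k}$, in particular justifying the restriction $y\le m-1$ and correctly disposing of the degenerate voter who approves all $m$ candidates; everything else is the one-line computation above plus tracking the sign of $c$. I would therefore front-load the domain-relevance lemma and keep the rest terse.
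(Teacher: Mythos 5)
Your proof is correct and follows essentially the same route as the paper's: first confirm that every pair $(|A(v)\cap W|,|A(v)|)$ arising in a score computation lies in $D_{m,k}$ (apart from the degenerate point $(k,m)$, whose contribution is committee-independent and cancels), then observe that score differences between committees scale by the factor $c$, so the induced strict orders coincide. Your remark that the conclusion needs $c>0$ is well taken --- the paper's own proof silently makes the same assumption (for $c=0$ the claim can genuinely fail, since $f$ would then implement the trivial rule regardless of $g$), and in every application of the proposition the constant $c$ is indeed positive.
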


\subsection{Basic Axioms}
\label{subsec:basic-axioms}

In this section, we discuss formal definitions of the axioms used for our main characterization result (Theorem~\ref{thm:characterizationWelfareFunctions}). 
All axioms are natural and straightforward adaptations of the respective properties of single-winner election rules and 
will be stated for ABC ranking rules. In Appendix~\ref{sec:app:choice_rules}, where we extend some of our results to ABC choice rules, we explain how these axioms should be formulated for ABC choice rules.
Similar axioms have also been considered in the context of multi-winner rules based on linear-order preferences~\citep{elk-fal-sko-sli:c:multiwinner-rules, skowron2019axiomatic}.

Anonymity and neutrality enforce perhaps the most basic fairness requirements for voting rules. Anonymity is a property which requires that all voters are treated equally, i.e., the result of an election does not depend on particular names of voters but only on votes that have been cast. In other words, under anonymous ABC ranking rules, each voter has the same voting power. Neutrality is similar, but enforces equal treatment of candidates rather than of voters. 

\axiomset{
  \textbf{Anonymity.} An ABC ranking rule $\calF$ is \emph{anonymous} if for
  $V, V' \subset \naturals$ such that $|V| = |V'|$, for each bijection $\rho: V \to V'$, and for $A \in \calA(C,V)$ and $A' \in \calA(C,V')$
  such that $A(v) = A'(\rho(v))$ for each $v \in V$,
  it holds that $\calF(A) = \calF(A')$.

\medskip\noindent
  \textbf{Neutrality.}  An ABC ranking rule $\calF$ is \emph{neutral} if for each bijection
  $\sigma\colon C\to C$ and $A, A' \in \calA(C,V)$ with $\sigma(A) = A'$ it holds for $W_1,W_2\in \pow_k(C)$ that $W_1\succeq_{\calF(A)} W_2$ if and only if $\sigma(W_1)\succeq_{\calF(A')} \sigma(W_2)$.
}

Due to their analogous structure and similar interpretations, anonymity and neutrality are very often considered together, and jointly referred to as symmetry.

\axiom{Symmetry}{An ABC ranking rule is \emph{symmetric} if it is anonymous and neutral.}

Consistency was first introduced in the context of single-winner rules by Smith~\shortcite{smi:j:scoring-rules} and then adapted by Young~\shortcite{young74}. In the world of single-winner rules, consistency is often considered to be \emph{the} axiom that characterizes positional scoring rules. Similarly, consistency played a crucial role in the recent characterization of committee scoring rules~\citep{skowron2019axiomatic}, which can be considered the equivalent of positional scoring rules in the multi-winner setting.
Consistency is also the main ingredient for our axiomatic characterization of ABC counting rules. 

\axiom{Consistency}{
  An ABC ranking rule $\calF$ is \emph{consistent} if for disjoint $V,V'\subset \naturals$, $A\in \calA(C, V)$, $A'\in \calA(C, V')$, and 
  $W_1, W_2 \in \pow_k(C)$, it holds that
  \begin{enumerate}[(i)]  
  \item if $W_1 \succ_{\calF(A)} W_2$ and $W_1
  \succeq_{\calF(A')} W_2$, then $W_1 \succ_{\calF(A+A')} W_2$, and
  \item if $W_1 \succeq_{\calF(A)} W_2$ and $W_1
  \succeq_{\calF(A')} W_2$, then $W_1 \succeq_{\calF(A + A')} W_2$.
  \end{enumerate}  
}

Next, we describe a weak efficiency axiom, which captures the intuition that candidates approved by no one are undesirable.

\axiom{Weak efficiency}{
  An ABC ranking rule $\calF$ satisfies \emph{weak efficiency} if for each $W_1, W_2 \in \pow_k(C)$ and $A\in\calA(C,V)$ where no voter approves a candidate in $W_2\setminus W_1$, it holds that $W_1 \succeq_{\calF(A)} W_2$.
}

For $k=1$, i.e., in the single-winner setting, we see that weak efficiency reduces to the following statement: if no voter approves a candidate~$d$, then any other candidate is at least as preferable as~$d$.

The final axiom, continuity~\citep{young74,you:j:scoring-functions} (also known in the literature as the Ar\-chi\-medean property~\citep{smi:j:scoring-rules} or the Overwhelming Majority axiom~\citep{Myerson1995}), describes the influence of large majorities in the process of making a decision. Continuity enforces that a large enough group of voters is able to force the election of their most preferred committee. 
Continuity is pivotal in Young's characterizations of positional scoring rules~\citep{young74,you:j:scoring-functions} as it excludes specific tie-breaking mechanisms\footnote{In Young's characterization, continuity excludes \emph{composite} positional scoring rules, where one or more additional positional scoring rules are evaluated in case of ties (same score). Other tie-breaking mechanisms are already excluded by consistency and symmetry.}.

\axiom{Continuity}{
  An ABC ranking rule $\calF$ satisfies \emph{continuity} if for
  each $W_1, W_2 \in \pow_k(C)$ and $A, A'\in\calA(C)$ where $W_1 \succ_{\calF(A')} W_2$ there exists a positive integer $n$ such that 
  $W_1 \succ_{\calF(A+nA')} W_2$.
}

\subsection{A Characterization of ABC Counting Rules}\label{subsec:characterization-counting}

The following axiomatic characterization of the generic class of ABC counting rules is a powerful tool that forms the basis for further characterizations of specific ABC counting rules.
This result resembles Smith's and Young's characterization of positional scoring rules \citep{young74, smi:j:scoring-rules} as the only social welfare functions satisfying symmetry, consistency, and continuity.
Our characterization additionally requires weak efficiency, which stems from the condition that a counting function $f(x,y)$ must be weakly increasing in $x$.
If a similar condition was imposed on positional scoring rules (i.e., that positional scores are weakly decreasing), an axiom analogous to weak efficiency would be required for a characterization as well.

\begin{reptheorem}{thm:characterizationWelfareFunctions}
\thmcharacterizationWelfareFunctions
\end{reptheorem}

It is easy to verify that ABC counting rules satisfy symmetry, consistency, weak efficiency, and continuity; all this follows immediately from the definitions in Section~\ref{subsec:counting-rule-def}, in particular the summation in Equation~\eqref{eq:score}.
For example, consistency is an immediate consequence of the fact that $\score{f}(W,A+A')=\score{f}(W,A)+\score{f}(W,A')$.
Proving the other implication of Theorem~\ref{thm:characterizationWelfareFunctions}, however, requires a long and complex proof, which can be found in Appendix~\ref{sec:app:countingrules}.

Furthermore, the set of axioms used in Theorem~\ref{thm:characterizationWelfareFunctions} is minimal, i.e., any subset of axioms is not sufficient for the characterization statement to hold (see Appendix~\ref{subsec:independence}).

\section{Proportional and Disproportional ABC Counting Rules}\label{sec:proportionality}

In this section we consider axioms describing winning committees in party-list profiles and capture a specific variant of proportionality, individual excellence, or diversity.
In party-list profiles, voters and candidates are grouped into clusters, which can be intuitively viewed as political parties. 
We will show that axioms for party-list profiles are sufficient to characterize certain ABC counting rules: PAV, AV, and CC.
Using the axiomatic characterization of ABC counting rules (Theorem~\ref{thm:characterizationWelfareFunctions}), we obtain full axiomatic characterizations of these three rules.

\begin{definition}
An approval profile is a \emph{party-list profile with $p$ parties} if the set of voters can be partitioned into $N_1, N_2, \ldots, N_p$ and the set of candidates can be partitioned into $C_1, C_2, \ldots, C_p$ such that, for each $i\in[p]$, every voter in $N_i$ approves exactly $C_i$.
\end{definition}

\subsection{D'Hondt Proportionality}
\label{sec:dhondt}

In party-list profiles, we intuitively expect a proportional committee to contain as many candidates from a party as is proportional to the number of this party's supporters. There are numerous ways in which this concept can be formalized---different notions of proportionality are expressed through different methods of apportionment~\citep{BaYo82a, Puke14a}. In this section we consider one of the best known, and  most commonly used concept of proportionality: \emph{D'Hondt proportionality}~\citep{BaYo82a,Puke14a}.
\begin{wraptable}[10]{r}[0cm]{0.5\textwidth}
\begin{minipage}[h]{0.5\textwidth}
\begin{align*}
\begin{array}{c|cccc}
  & N_1  & N_2  & N_3  & N_4  \\
  \hline 
\nicefrac{|N_i|}{1} & \mathbf{9} & \mathbf{21} & \mathbf{28} & \mathbf{42} \\
\nicefrac{|N_i|}{2} & 4.5 & \mathbf{10.5} & \mathbf{14} & \mathbf{21} \\
\nicefrac{|N_i|}{3} & 3 & 7 & \mathbf{13} & \mathbf{14} \\
\nicefrac{|N_i|}{4} & 2.25 & 5.25 & 7 & \mathbf{10.5} \\
\nicefrac{|N_i|}{5} & 1.8 & 4.2 & 5.6 & 8.4
\end{array}
\end{align*}
\end{minipage}
\caption{Example for the D'Hondt method}
\label{tab:dhondt}
\end{wraptable}
The D'Hondt method is an apportionment method that works in $k$ steps. It starts with an empty committee $W = \emptyset$ and in each step it selects a candidate from that set (party) $C_i$ with a maximal value of $\frac{|N_i|}{|W \cap C_i| + 1}$; the selected candidate is added to $W$.

\begin{example} \label{ex:dHondtMethod}
Consider an election with four groups of voters, $N_1$, $N_2$, $N_3$, and $N_4$ with cardinalities equal to 9, 21, 28, and 42, respectively. Further, there are four groups of candidates $C_1 = \{c_1, \ldots, c_{10}\}$, $C_2  =\{c_{11}, \dots, c_{20}\}$, $C_3  =\{c_{21}, \ldots, c_{30}\}$, and $C_4  =\{c_{31}, \ldots, c_{40}\}$. Voters in a group $N_i$ approve exactly the candidates in $C_i$. Assume $k = 10$ and consider Table~\ref{tab:dhondt}, which illustrates the ratios used in the D'Hondt method for determining which candidate should be selected.
The 10 largest values (in bold) correspond to selected candidates.

Thus, the D'Hondt method first selects a candidate from $C_4$, next a candidate from $C_3$, next from $C_2$ or $C_4$ (their ratios in the third step are equal), etc. Eventually, in the selected committee there will be one candidate from $C_1$, two candidates from $C_2$, three from $C_3$, and four from $C_4$.
\end{example}

An important difference between the apportionment setting and our setting is that we do not necessarily assume an unrestricted number of candidates for each party. As a consequence, a party might deserve additional candidates but this is impossible to fulfill. Taking this restriction into account, we see that if the D'Hondt method picks a candidate from $C_i$ and adds it to $W$, then, for all $j$, either $\frac{|N_i|}{|W \cap C_i|} \geq \frac{|N_j|}{|W \cap C_j| + 1}$ or $C_j \subseteq W$, i.e., all candidates from party $j$ are already in the committee. Note that if $C_j \setminus W \neq \emptyset$ and $\frac{|N_i|}{|W \cap C_i|} < \frac{|N_j|}{|W \cap C_j| + 1}$, then the D'Hondt method in the previous step would rather select a candidate from $C_j$ than from $C_i$. These observations allow us to give a precise definition of D'Hondt proportional committees.

\begin{definition}\label{def:dhondt}
Let $A$ be a party-list profile with $p$ parties.
A committee $W\in \pow_k(C)$ is \emph{D'Hondt proportional for $A$} if for all $i,j\in[p]$ one of the following conditions holds:
\begin{inparaenum}[(i)]
\item $C_j \subseteq W$, or
\item $W \cap C_i = \emptyset$, or\label{cond2:def:dhondt}
\item $\frac{|N_i|}{|W \cap C_i|} \geq \frac{|N_j|}{|W \cap C_j| + 1}$.
\end{inparaenum}
\end{definition}

For the following axiom, recall that a winning committee is a maximal element in social ranking of committees, i.e., with respect to $\succeq_{\calF(A)}$.

\axiom{D'Hondt proportionality}{
An ABC ranking rule satisfies \emph{D'Hondt proportionality} if for each party-list profile $A\in\calA(C,V)$, $W\in \pow_k(C)$ is a winning committee if and only if $W$ is D'Hondt proportional for $A$.
}

Note that this axiom is weak in the sense that it only describes the expected behavior of an ABC ranking rule on party-list profiles. As we will see, however, it is sufficient to obtain an axiomatic characterization of PAV in the more general framework of ABC ranking rules.

\newcommand{\theorempavcharacterization}{Proportional Approval Voting is the only ABC counting rule that satisfies D'Hondt proportionality.}
\begin{theorem}\label{thm:pav-ranking-characterization}
\theorempavcharacterization
\end{theorem}

When we combine Theorem~\ref{thm:pav-ranking-characterization} with Theorem~\ref{thm:characterizationWelfareFunctions}, we obtain a full axiomatic characterization of Proportional Approval Voting within the class of ABC ranking rules:
PAV is the only ABC ranking rule that satisfies symmetry, consistency, continuity, and D'Hondt proportionality.
Note the absence of weak efficiency in the set of axioms that characterize PAV, since weak efficiency is implied by the other axioms (cf.~Lemma~\ref{lem:sym+con+prop->pareto} in Section~\ref{subsec:divisor}).
In Section~\ref{subsec:divisor}, we will present a generalization of Theorem~\ref{thm:pav-ranking-characterization} that applies to other apportionment methods than D'Hondt.

\subsection{Disjoint Equality}
\label{sec:disequ}

In some scenarios we might not want a multi-winner rule to be proportional.
For example, if our goal is to select a set of finalists in a contest based on a set of recommendations coming from judges or reviewers (a scenario that is often referred to as a shortlisting), candidates can be assessed independently and there is no need for proportionality. For instance, if our goal is to select 5 finalists in a contest, and if four reviewers support candidates $c_1, \ldots, c_5$ and one reviewer supports candidates $c_6, \ldots, c_{10}$ then it is very likely that we would prefer to select candidates $c_1, \ldots, c_5$ as the finalists---in contrast to what, e.g., D'Hondt proportionality suggests.

Disjoint equality is a property which might be viewed as a certain type of disproportionality. Intuitively, it requires that each approval of a candidate carries the same power: a candidate approved by a voter $v$ receives a certain level of support from $v$ which does not depend on what other candidates $v$ approves or disapproves of; in particular it does not depend on whether there are other members of a winning committee which are approved by $v$.
Disjoint equality was first proposed by Fishburn~\shortcite{fishburn78Approval} and then used by Sertel~\shortcite{sertel88Approval} as one of the distinctive axioms characterizing single-winner Approval Voting. The following axiom is its natural extension to the multi-winner setting.

\axiom{Disjoint equality}{
An ABC ranking rule $\calF$ satisfies \emph{disjoint equality} if for every profile $A \in\calA(C,V)$ 
with $\left|\bigcup_{v\in V} A(v)\right|\geq k$ and where each candidate is approved at most once, the following holds:
$W\in\pow_k(C)$ is a winning committee if and only if $W\subseteq \bigcup_{v\in V} A(v)$.
}

In other words, disjoint equality asserts that in a profile consisting of disjoint approval ballots every committee wins that consists of approved candidates.
Note that 
disjoint equality applies to an even more restricted form of party-list profiles.

\newcommand{\thmapprovalB}{Multi-Winner Approval Voting is the only ABC counting rule that satisfies disjoint equality.}
\begin{theorem}\label{thm:approval-characterizationB}
\thmapprovalB
\end{theorem}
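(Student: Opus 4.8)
\textbf{Proof plan for Theorem~\ref{thm:approval-characterizationB}.}
The plan is to mirror the structure of the proof of Theorem~\ref{thm:pav-abc-rule-characterization}: first verify that Multi-Winner Approval Voting (AV) satisfies disjoint equality, then show that any ABC counting rule satisfying disjoint equality must be implemented by a counting function equivalent to $f_{\text{AV}}(x,y)=x$ via Proposition~\ref{prop:counting-functions-equivalent}. The easy direction is almost immediate: if each candidate is approved at most once by the voters, then for any committee $W$ the AV-score $\sum_{v\in V}|A(v)\cap W|$ equals exactly the number of members of $W$ that lie in $\bigcup_{v\in V}A(v)$; this is maximized (and equals $k$) precisely when $W\subseteq \bigcup_{v\in V}A(v)$, which is possible since $|\bigcup_{v\in V}A(v)|\geq k$. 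Hence the winning committees under AV are exactly those consisting of approved candidates.

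For the converse, let $\calF$ be an ABC counting rule satisfying disjoint equality. First I would establish weak efficiency for $\calF$ (presumably via an analogue of Lemma~\ref{lem:sym+con+prop->pareto}; a disjoint-equality version is easy to prove directly by exhibiting a disjoint profile in which every approved committee wins, which forces unapproved committees to be weakly dominated). Actually, since $\calF$ is already assumed to be an ABC counting rule, its counting function $f$ is weakly increasing in $x$ by definition, so weak efficiency is automatic and there is no need to invoke Theorem~\ref{thm:characterizationWelfareFunctions}; I just need to pin down $f$. Let $f$ be a counting function implementing $\calF$. The goal is to show that on the relevant domain $D_{m,k}=\{(x,y)\in[0,k]\times[0,m-1]: x\leq y \wedge k-x\leq m-y\}$ we have $f(x,y)=c\cdot x + d(y)$ for some constant $c$ and some function $d\colon[m]\to\reals$; then Proposition~\ref{prop:counting-functions-equivalent} finishes the job.

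The core of the argument is to design, for each relevant pair $(x,y)$, a profile with disjoint approval ballots in which two committees $W_1,W_2$ with prescribed intersection patterns are both winning, which by disjoint equality forces $\score{f}(W_1,A)=\score{f}(W_2,A)$ and thereby yields a linear recurrence for $f$ in its first argument. Concretely: take enough voters each approving a distinct singleton (or blocks of distinct candidates of various sizes) so that I can compare a committee containing $x$ approved candidates drawn from a block of size $y$ against one containing $x+1$ such candidates, while rebalancing the remaining committee slots among other disjoint blocks so that both committees still consist entirely of approved candidates and hence both win. Matching scores gives $f(x+1,y)-f(x,y)=f(1,1)-f(0,1)$ (the per-approval ``value''), independent of $x$ and $y$; iterating from $x=0$ gives $f(x,y)=f(0,y)+\bigl(f(1,1)-f(0,1)\bigr)x$, which is the desired form with $c=f(1,1)-f(0,1)$ and $d(y)=f(0,y)$. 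One must also handle the boundary case $k-x=m-y$ separately, as in the PAV proof, by choosing the block sizes so that all candidates are used up; the combinatorial bookkeeping there is the main obstacle, since disjoint equality is more restrictive than D'Hondt proportionality (it only describes profiles where every candidate is approved \emph{at most once}), so the constructed profiles have less freedom and one has to be careful that enough distinct candidates exist and that both comparison committees genuinely consist of approved candidates. A secondary subtlety is ruling out the degenerate possibility $c=0$: if $f(1,1)=f(0,1)$ one would need to check, using a profile where not all candidates can be included in the committee, that disjoint equality is violated (some winning committee would then contain an unapproved candidate), so in fact $c>0$ and the rule is genuinely AV.
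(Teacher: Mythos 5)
Your plan is correct and follows essentially the same route as the paper's proof: the paper constructs exactly the disjoint profile you describe (one voter approving a block of $y$ candidates plus $k-x$ voters approving distinct singletons), equates the scores of the two winning committees forced by disjoint equality to obtain $f(x+1,y)-f(x,y)=f(1,1)-f(0,1)$, and concludes via Proposition~\ref{prop:counting-functions-equivalent}. The only differences are that the boundary case $k-x=m-y$ needs no separate treatment here (the single construction uses $y+k-x\leq m$ candidates, which is exactly the condition for $(x,y)\in D_{m,k}$), and the paper does not explicitly rule out $c=0$ --- a degeneracy you are right to flag and which is easily dispatched as you indicate.
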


Theorem~\ref{thm:approval-characterizationB} together with Theorem~\ref{thm:characterizationWelfareFunctions} yields an axiomatic characterization: AV is the only ABC ranking rule that satisfies symmetry, consistency, weak efficiency, continuity, and disjoint equality.

\arxiv{
It is noteworthy that the disjoint equality axiom applies to approval profiles with an arbitrary number of voters. 
This is in contrast to the original disjoint equality axiom, which has been used to axiomatically characterize single-winner Approval Voting~\citep{fishburn78Approval}: in this setting it sufficed to consider profiles with two voters. This is not the case in the multi-winner setting, as we show in Appendix~\ref{sec:appendix-disjointequ}. 
}

\subsection{Disjoint Diversity}
\label{sec:disdiv}

The disjoint diversity axiom is strongly related to the diversity principle, as it states that there exists a winning committee in which the $k$ strongest parties receive at least one seat. In other words, every party has to receive one seat before one party receives a second seat.

\axiom{Disjoint diversity}{
An ABC ranking rule $\calF$ satisfies \emph{disjoint diversity} if for every party-list profile $A \in\calA(C,V)$ with $p$ parties and $|N_1|\geq |N_2|\geq \dots \geq |N_p|$, there exists a winning committee $W$ with $W \cap C_i \neq \emptyset$ for all $i \in \{1,\dots,\min(p,k)\}$.
}

Note that disjoint diversity is a slightly weaker axiom in comparison to D'Hondt proportionality and disjoint equality since it does not characterize all winning committees for party-list profiles---it only requires the existence of one specific winning committee and does not even fully specify this committee.
As a consequence, there are several apportionment methods in the literature that imply disjoint diversity:
the Adams method, the Dean method, and the Huntington--Hill method all require that every party receives one seat before a party can obtain a second seat \citep{BaYo82a}.
Thus, it may come as a surprise that disjoint diversity nevertheless characterize a single ABC counting rule.

\newcommand{\thmccabcrulecharacterization}{The Approval Chamberlin--Courant rule is the only non-trivial ABC counting rule that satisfies disjoint diversity.}
\begin{theorem}
\thmccabcrulecharacterization\label{thm:cc_characterization}
\end{theorem}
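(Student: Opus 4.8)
The plan is to follow the same strategy as in the proof of Theorem~\ref{thm:pav-abc-rule-characterization}: first verify that the Approval Chamberlin--Courant rule satisfies disjoint diversity, then show that any non-trivial ABC counting rule with disjoint diversity must have a counting function equivalent (in the sense of Proposition~\ref{prop:counting-functions-equivalent}) to $f_{\text{CC}}$. The forward direction is routine: given a party-list profile with $|N_1|\geq\dots\geq|N_p|$, a committee that places one member in each of the $\min(p,k)$ strongest parties maximizes $\score{f_{\text{CC}}}$, since $f_{\text{CC}}$ caps each voter's contribution at $1$ and such a committee makes exactly $|N_1|+\dots+|N_{\min(p,k)}|$ voters ``satisfied'', which is clearly optimal.

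For the converse, let $\calF$ be a non-trivial ABC counting rule satisfying disjoint diversity, implemented by a counting function $f$. The goal is to show that on the relevant domain $D_{m,k}$ the function $f(x,y)$ depends (up to an additive term $d(y)$ and positive scaling $c$) only on whether $x=0$ or $x\geq 1$; equivalently, $f(1,y)-f(0,y)=c>0$ for all admissible $y$ and $f(x,y)=f(1,y)$ for all $x\geq 1$. The first step is to establish the ``jump'': construct a party-list profile where disjoint diversity forces a committee giving some party exactly one seat to tie with (or beat, via the winning-committee guarantee) committees that give that party zero seats, and conclude $f(1,y)>f(0,y)$; non-triviality is what rules out $f$ being constant and guarantees the inequality is strict. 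The second and harder step is to show $f$ is \emph{flat} for $x\geq 1$: build party-list profiles with a very popular party $N_1$ and several tiny parties, so that disjoint diversity pins down a winning committee $W$ spreading seats across all the small parties while giving $N_1$ only one seat; comparing $\score{f}(W,A)$ against a committee $W'$ that takes an extra seat from $N_1$ and drops one small party yields an equation of the form $|N_1|\bigl(f(2,y)-f(1,y)\bigr) = (\text{small quantity})$, and by scaling $|N_1|$ arbitrarily large this forces $f(2,y)=f(1,y)$; iterating (or rescaling $N_1$ against parties of size $|N_1|$) gives $f(x,y)=f(1,y)$ for all $x$ with $(x,y)\in D_{m,k}$.

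The main obstacle I anticipate is the flatness step, for two reasons. First, disjoint diversity only asserts the \emph{existence} of one winning committee of a prescribed shape — unlike D'Hondt proportionality or disjoint equality it does not characterize \emph{all} winning committees — so one cannot directly equate scores of two committees; instead one must argue that if $f(2,y)>f(1,y)$ then \emph{every} maximizer concentrates seats in the popular party, contradicting the existence of a spread-out winner. Second, one has to be careful about boundary cases in $D_{m,k}$ (when $k-x=m-y$, i.e.\ few spare candidates), exactly as the PAV proof split into the cases $k-x<m-y$ and $k-x=m-y$; I expect an analogous case split here, with the tight case handled by a slightly different party-list construction. Once flatness and the jump are both in hand, Proposition~\ref{prop:counting-functions-equivalent} applies with $c=f(1,1)-f(0,1)>0$ and $d(y)=f(0,y)$ to conclude that $f$ implements CC, completing the proof.
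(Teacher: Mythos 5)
Your plan follows essentially the same route as the paper's proof: a party-list profile with one large party (whose size $\zeta$ is sent to infinity) plus singleton parties forces $f(x+1,y)\le f(x,y)$ for $x\ge 1$ (the reverse inequality being automatic since counting functions are weakly increasing in $x$, or via efficiency), a second family of party-list profiles pins the jump $f(1,y)-f(0,y)$ to a $y$-independent constant, non-triviality makes that constant positive, and Proposition~\ref{prop:counting-functions-equivalent} finishes. You also correctly identify the one real subtlety---disjoint diversity is only an existence statement, so the score comparison must be run as ``if $f(x+1,y)>f(x,y)$ then for large enough $\zeta$ no spread-out committee can be optimal''---while the extra case split you anticipate for $k-x=m-y$ turns out to be unnecessary here.
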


Observe that CC does not extend the aforementioned apportionment methods because of the simple fact that it is not at all proportional. We can thus conclude that these apportionment methods do not have a counterpart in the class of ABC counting rules.
However, if we allow a tie-breaking mechanism, we find analogues.
For example, the Adams method is a divisor method similar to D'Hondt but based on the divisor sequence $(0,1,2,\dots)$.
As vote counts are first divided by 0 (defined as an arbitrarily large number), each party is guaranteed to receive one seat.
The Adams method can be extended to a ABC ranking rule: it is 
the Chamberlin--Courant rule with the $(w_1,1,\nicefrac 1 2 ,\nicefrac 1 3,\dots)$-Thiele method used to break ties between committees with the same CC score ($w_1$ is an arbitrary number).

Finally, we obtain as a corollary that CC is characterized as the only non-trivial ABC ranking rule that satisfies symmetry, consistency, weak efficiency, continuity, and disjoint diversity.

\section{Extensions}\label{sec:extensions}

In this section we discuss three extensions of our main results. First, we define a weaker form of D'Hondt proportionality, called lower quota and we show that ABC rules that satisfy lower quota must resemble PAV. Second, we extend our axiomatic characterizations of PAV and show a more general result that applies to a whole spectrum of different forms of proportionality. Third, we formulate some of our results (in particular, the characterization of PAV, CC, and the aforementioned generalization) for ABC choice rules.

\subsection{Lower Quota}
\label{sec:lower-quota}

D'Hondt proportionality determines for every party-list profile an apportionment of candidates to parties.
One may wonder if this definition of proportionality can be further weakened and still allow a characterization of PAV.
For example, the D'Hondt method is the only divisor method satisfying the lower quota axiom~\citep{BaYo82a}: intuitively, it states that a party that receives an $\alpha$ proportion of votes should receive at least $\lfloor\alpha\cdot k\rfloor$ of the $k$ available seats.
In the following we will show that this weaker axiom is not sufficient, but it characterizes ABC counting rules that are at least similar to PAV.
Let us first define lower quota for ABC ranking rules:

\axiom{Lower Quota}{
An ABC ranking rule satisfies \emph{lower quota} if for each party-list profile $A$ with $p$ parties, and a winning committee $W\in \pow_k(C)$ it holds for all $i\in\{1,\dots,p\}$ that $|W \cap C_i| \geq \left\lfloor \frac{k|N_i|}{|V|} \right\rfloor$ or $|C_i| < \left\lfloor \frac{k|N_i|}{|V|} \right\rfloor$.
}

First, let us observe that there exist ABC counting rules---other than PAV---which satisfy lower quota. 

\begin{example}\label{ex:lowerquota}
Let $m=3$ and $k=2$. Let us consider an ABC counting rule defined by the counting function $f(0, y) = 0$, and $f(1, y) = 1$ and $f(2, y) = 1.1$. This rule satisfies lower quota: Let $A$ be a party-list profile for $m = 3$ with $p \leq 3$ disjoint groups of voters $N_1, N_2, \ldots N_p$ and with their corresponding approval sets being $C_1,\dots, C_p$. For the sake of contradiction, let us assume that there exists a winning committee $W$ such that for some $i \in [p]$ we have $|C_i| \geq \big\lfloor 2 \cdot \frac{|N_i|}{|V|} \big\rfloor$ and $|W \cap C_i| < \big\lfloor 2 \cdot \frac{|N_i|}{|V|} \big\rfloor$. If $N_i = V$, then this means that a candidate who is not approved by any voter is contained in $W$, which contradicts the definition of our rule and the fact that there exist two candidates approved by some voters (since $|N_i| = |V|$, we get that $|C_i| \geq 2$). If $|N_i| < |V|$, then $\big\lfloor 2 \cdot \frac{|N_i|}{|V|} \big\rfloor$ can either be 0 or 1. Since $|W \cap C_i| < \big\lfloor 2 \cdot \frac{|N_i|}{|V|} \big\rfloor$, we conclude that $\big\lfloor 2 \cdot \frac{|N_i|}{|V|} \big\rfloor = 1$ and $|W \cap C_i|=0$. Consequently $|N_i| \geq \frac{|V|}{2}$; even if all the remaining voters from $V \setminus N_i$ approved the two members of the winning committee $W$ it is more beneficial, according to our rule, to drop one such candidate from $W$ and to add a candidate from $C_i$. Indeed, it is easy to verify that such a committee would have a higher score. This shows that our rule indeed satisfies lower quota.
\end{example}

The following shows that ABC counting rules satisfying lower quota must resemble PAV.

\newcommand{\proplowerquota}{Fix $x, y \in \naturals$ and let $m \geq y + k -x + 1$. Let $\calF$ be an ABC counting rule satisfying lower quota, and let $f$ be a counting function implementing $\calF$. It holds that:
\begin{align*}
f(x-1, y) + \frac{1}{x} \cdot f(1, 1) \cdot \frac{k-x}{k-x+1} \leq f(x, y) \leq f(x-1, y) + \frac{1}{x-1} \cdot f(1, 1) \text{.}
\end{align*}
}
\begin{proposition}\label{prop:lowerquota}
\proplowerquota
\end{proposition}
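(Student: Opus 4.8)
The idea is to reuse the proof technique from Theorem~\ref{thm:pav-abc-rule-characterization}: build party-list profiles in which two committees are forced to compare in a particular way by the lower quota axiom, and read off the resulting inequalities on the counting function $f$. The key difference is that lower quota only gives a one-sided constraint — it tells us which committees \emph{cannot} lose (those respecting all lower quotas), but says nothing about committees violating a lower quota except that \emph{if} such a committee is winning then the axiom is violated; so we will extract inequalities rather than equalities.

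For the \textbf{upper bound} $f(x,y)\le f(x-1,y)+\frac{1}{x-1}f(1,1)$: I would construct a party-list profile with two parties (plus filler candidates for a third party so that committees have size $k$). Take $|N_1|=x-1$ with $|C_1|=y$, a second party $N_2$ with $|N_2|$ chosen so that party~$1$'s lower quota is exactly $x-1$ but party~$2$ also deserves its share, and a filler party $N_3$ of singleton-approving voters for the remaining $k-x$ seats; the bound $m\ge y+k-x+1$ guarantees enough candidates. The profile is designed so that a committee $W$ giving party~$1$ only $x-1$ seats (as its lower quota demands, no more is forced) must not lose to a committee $W'$ giving party~$1$ a full $x$ seats, because $W'$ would then violate the lower quota of some other party from which $W'$ has stolen a seat. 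Wait — more carefully: we want the committee respecting quotas to be winning, so a committee that over-serves party~$1$ at the cost of under-serving party~$2$ or $3$ violates lower quota and hence must be \emph{non}-winning. Comparing $\score{f}$ of these two committees gives $|N_1|(f(x,y)-f(x-1,y)) \le$ (loss from the other party), and scaling $|N_1|=x-1$ and the other party's deficit appropriately yields the claimed $\frac{1}{x-1}f(1,1)$ factor.

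For the \textbf{lower bound} $f(x-1,y)+\frac{1}{x}f(1,1)\frac{k-x}{k-x+1}\le f(x,y)$: symmetrically, construct a party-list profile where party~$1$ with $|C_1|=y$ candidates \emph{deserves} $x$ seats by its lower quota, so any winning committee must give it at least $x$; then a committee giving party~$1$ only $x-1$ seats is non-winning, and comparing it with the quota-respecting committee (which takes one seat away from a party of $k-x+1$ equal "small" parties, each getting roughly $1/(k-x+1)$ fraction) forces $\score{f}(W_{\text{quota}})\ge \score{f}(W_{\text{short}})$, i.e.\ $f(x,y)-f(x-1,y)\ge$ the per-voter loss on the small party, which after picking the group sizes to make party~$1$'s quota land exactly at $x$ comes out to $\frac{1}{x}f(1,1)\frac{k-x}{k-x+1}$. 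The factors $\frac{1}{x}$ and $\frac{1}{x-1}$ come from how many voters of party~$1$ share the marginal seat, and the $\frac{k-x}{k-x+1}$ from the fraction of the small party's entitlement that is at stake.

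The \textbf{main obstacle} will be choosing the party sizes $|N_i|$ as functions of $x,y,k$ so that the floor functions $\lfloor k|N_i|/|V|\rfloor$ in the lower quota axiom land \emph{exactly} on the values we need (so that lower quota is binding on one committee but not over-binding), while simultaneously keeping the arithmetic clean enough that the per-voter score differences collapse to $\frac{1}{x}f(1,1)$-type expressions; this is a delicate number-theoretic juggling act, and one may need to take many copies (multiply all group sizes by a common factor, cf.\ the use of $nA$) to make the quotas come out right. A secondary technical point is verifying that the "filler" singleton-party construction behaves as intended and that $m\ge y+k-x+1$ is exactly what is needed to realize both committees; this parallels the domain bookkeeping with $D_{m,k}$ in the earlier proofs and should go through routinely.
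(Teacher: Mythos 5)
Your plan is correct and matches the paper's proof: party-list profiles with one large party holding $y$ candidates and $k-x+1$ singleton parties, where lower quota pins down the seat distribution of any winning committee and the score comparison with a one-candidate swap yields exactly the two inequalities. The "number-theoretic juggling" you worry about resolves cleanly without taking multiples: the paper uses $|N_1|=x(k-x+1)$ and $|N_i|=k-x$ for the lower bound (so party $1$'s quota is exactly $x$ and pigeonhole leaves one singleton party empty), and $|N_1|=x-1$ with $|N_i|=1$ for the upper bound (so every singleton party's quota is $1$).
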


Note that $\lim_{k \to \infty} \frac{k-x}{k-x+1} = 1$, so Proposition~\ref{prop:lowerquota} asserts that---for large $k$---the value of $f(x, y)$ is roughly between $f(x-1, y) + \frac{1}{x} \cdot f(1, 1)$ and $f(x-1, y) + \frac{1}{x-1} \cdot f(1, 1)$. Recall that for PAV we have that $f(x, y) = f(x-1, y) + \frac{1}{x} \cdot f(1, 1)$ and hence Proposition~\ref{prop:lowerquota} indeed implies that an ABC counting rule satisfying lower quota must be defined by a counting function similar to PAV.

For a visualization of this result we recall Figure~\ref{fig:countingfcts} in the introduction of this paper. The gray area displays the lower and upper bound obtained from Proposition~\ref{prop:lowerquota}; to compute a lower bound we used $k=8$.

\subsection{Extension to Other Forms of Proportionality}\label{subsec:divisor}

In this section we formulate an axiom that generalizes D'Hondt proportionality. Given a sequence $\mathbf{d} = (d_1, d_2 ,\ldots)$, the $\mathbf{d}$-proportionality requires that a multi-winner rule must behave on party-list profiles as a divisor apportionment method based on the sequence of divisors $d$. Thus, for the sequence $d_{\mathrm{DHondt}} = (1, 2, 3, \ldots)$, $d_{\mathrm{DHondt}}$-proportionality is equivalent to D'Hondt proportionality.
Notably, this definition applies to other known apportionment divisor methods, such as the \emph{Sainte-Lagu\"e} (Webster) method---the divisor method based on the sequence $d_{\mathrm{SL}} = (1, 3, 5, \ldots)$. It also applies to non-linear forms of proportionality---for example, the sequence of divisors $d_{\mathrm{Penrose}} = (1, 4, 9, \ldots)$ implements the idea of square-root proportionality as devised by Penrose~\shortcite{Penr46a}, where a party should get a number of seats proportional to the square root of the number of supporters. 
In the following we use the convention that $\frac{x}{\infty} = 0$ for integers $x$. 

\begin{definition}\label{def:dprop}
Let $A$ be a non-decreasing party-list profile with $p$ parties, and $\mathbf{d} = (d_i)_{i \in \naturals}$, where $d_i \in \naturals\cup\{\infty\}$ for each $i \in \naturals$.
A committee $W\in \pow_k(C)$ is \emph{$\mathbf{d}$-proportional for $A$} if for all $i,j\in[p]$ one of the following conditions holds:
\begin{inparaenum}[(i)]
\item $C_j \subseteq W$, or
\item $W \cap C_i = \emptyset$, or\label{cond2:def:dprop}
\item $\frac{|N_i|}{d_{|W \cap C_i|}} \geq \frac{|N_j|}{d_{|W \cap C_j| + 1}}$.
\end{inparaenum}
\end{definition}

\axiom{$\mathbf{d}$-proportionality}{
Let $\mathbf{d} = (d_i)_{i \in \naturals}$ be a sequence of values from $\naturals\cup\{\infty\}$.
An ABC ranking rule satisfies \emph{$\mathbf{d}$-proportionality} if for each party-list profile $A\in\calA(C,V)$, $W\in \pow_k(C)$ is a winning committee if and only if $W$ is $\mathbf{d}$-proportional for $A$.
}

\newcommand{\thmthieleabcrulecharacterization}{Let $\mathbf{d} = (d_1, d_2, \ldots)$ be a non-decreasing sequence of values from $\naturals\cup\{\infty\}$ and let $w = (\nicefrac{1}{d_1}, \nicefrac{1}{d_2}, \ldots)$. The $w$-Thiele method is the only ABC counting rule that satisfies $\mathbf{d}$-proportionality.}
\begin{theorem}\label{thm:thiele-abc-rule-characterization}
\thmthieleabcrulecharacterization
\end{theorem}

Theorem~\ref{thm:thiele-abc-rule-characterization} contains the characterization of PAV via D'Hondt proportionality as a special case. It also gives a characterization of CC as the only ABC counting rule that is $(1,\infty, \infty, \dots)$-proportional. Note that this characterization is slightly weaker than the one via disjoint diversity (Theorem~\ref{thm:cc_characterization}), since $(1,\infty, \infty, \dots)$-proportionality specifies the behavior of the rule on all party-list profiles.
Furthermore, we can use Theorem~\ref{thm:thiele-abc-rule-characterization} to obtain axiomatic characterizations within the class of ABC ranking rules. 

\newcommand{\lemsymconproppareto}{Let $d = (d_1, d_2, \ldots)$ be a non-decreasing sequence of values from $\naturals$. An ABC ranking rule that satisfies neutrality, consistency, and $\mathbf{d}$-proportionality also satisfies weak efficiency.}
\begin{lemma}
\lemsymconproppareto\label{lem:sym+con+prop->pareto}
\end{lemma}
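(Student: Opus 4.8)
\emph{Overall strategy and two reductions.} My plan is to peel the statement down to a single elementary fact about one-voter profiles and then establish that fact using D'Hondt proportionality together with consistency. First I would observe that it suffices to prove the following one-swap claim $(\star)$: for every profile $A$ and all $W,W'\in\pow_k(C)$ of the form $W'=(W\setminus\{d\})\cup\{e\}$ with $d\in W$ approved by no voter in $A$ and $e\in C\setminus W$, one has $W'\succeq_{\calF(A)} W$. Indeed, given $A$, $W_1$, $W_2$ with $W_2\setminus W_1$ unapproved in $A$, enumerate $W_2\setminus W_1=\{d_1,\dots,d_t\}$ and $W_1\setminus W_2=\{e_1,\dots,e_t\}$, set $Z_0=W_2$ and $Z_i=(Z_{i-1}\setminus\{d_i\})\cup\{e_i\}$; then $Z_t=W_1$, each $d_i$ is unapproved in $A$, so $(\star)$ gives $Z_i\succeq_{\calF(A)}Z_{i-1}$ and transitivity yields $W_1\succeq_{\calF(A)}W_2$. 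Next I would reduce $(\star)$ to the case $|V|=1$: writing $A$ as a sum of one-voter profiles (the individual ballots of $A$), the candidate $d$ remains unapproved in each summand, so if $(\star)$ holds for one-voter profiles then part (ii) of consistency, applied repeatedly, lifts it to $A$.

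\emph{One-voter profiles; the easy cases.} It remains to prove $(\star)$ when $A=(A(v))$ with $A(v)=S$. If $S=C$, then in the only relevant pair of "parties" condition (iii) of Definition~\ref{def:dhondt} holds trivially, so every committee is D'Hondt proportional, hence winning, hence tied, and $(\star)$ is immediate; if $S=\emptyset$ the same conclusion follows from neutrality. So assume $\emptyset\subsetneq S\subsetneq C$, so that $A$ is a party-list profile whose second party (the candidates outside $S$) has no supporters; a direct check of Definition~\ref{def:dhondt} shows that $W$ is D'Hondt proportional exactly when $S\subseteq W$ or $W\subseteq S$, so the winning committees are the $k$-subsets of $S$ (if $|S|\ge k$) or the supersets of $S$ (if $|S|<k$). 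Now take the swap $W'=(W\setminus\{d\})\cup\{e\}$ with $d\notin S$. If $e\notin S$, the transposition $(d\ e)$ fixes $A$ and swaps $W$ with $W'$, so neutrality gives $W=_{\calF(A)}W'$. If $e\in S$ and $|W\cap S|=k-1$, then $W'\subseteq S$, hence $W'$ is D'Hondt proportional, hence winning, hence $W'\succeq_{\calF(A)}W$.

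\emph{The main case.} The remaining case is $e\in S$ and $s:=|W\cap S|\le k-2$; note $e\in S\setminus W$ forces $S\not\subseteq W$, so $|S|\ge s+1$. Suppose toward a contradiction that $W\succ_{\calF(A)}W'$. I would build a party-list profile $B$ consisting of $n_1$ voters approving $S$ and $n_2$ voters approving $C\setminus S$, with $n_2$ large and $n_1=\lfloor (s+1)n_2/(k-s)\rfloor$. The crucial point is that $W$ and $W'$ allocate, respectively, $s$ and $s+1$ seats to the $S$-party, and because these numbers are consecutive, the set of ratios $n_1/n_2$ making the $s$-allocation D'Hondt proportional and the set of ratios $(n_1+1)/n_2$ making the $(s+1)$-allocation D'Hondt proportional are intervals sharing the endpoint $(s+1)/(k-s)$. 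A routine verification of the quotient inequalities of Definition~\ref{def:dhondt} (feasibility is automatic since $|S|\ge s+1$ and $|C\setminus S|\ge k-s$) then shows that $W$ is D'Hondt proportional for $B$ while $W'$ is D'Hondt proportional for $A+B$. Hence $W$ is a winning committee for $B$, so $W\succeq_{\calF(B)}W'$; combined with the assumed $W\succ_{\calF(A)}W'$, part (i) of consistency gives $W\succ_{\calF(A+B)}W'$. But $W'$, being D'Hondt proportional for $A+B$, is a winning committee, hence maximal, so $W'\succeq_{\calF(A+B)}W$ --- a contradiction. Therefore $W'\succeq_{\calF(A)}W$, which completes the proof.

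\emph{Where the work is.} I expect the delicate part to be this last construction: choosing $n_1,n_2$ and checking D'Hondt proportionality of $W$ for $B$ and of $W'$ for $A+B$, including the degenerate sub-cases ($s=0$, small $k$, and configurations in which a party exactly exhausts its candidate set, where one leans on conditions (i)--(ii) of Definition~\ref{def:dhondt} rather than (iii)). Some care is also needed at the interface between party-list and arbitrary profiles --- expressing a general profile as a sum of one-voter profiles and invoking consistency --- and in the edge cases $S\in\{\emptyset,C\}$, where $A$ is not literally a party-list profile and one falls back on neutrality.
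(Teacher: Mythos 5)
Your proof is correct, and its skeleton matches the paper's: reduce to single-voter profiles via consistency, reduce further to a single swap, and then derive a contradiction from a carefully built party-list profile. The key construction, however, is genuinely different. The paper extracts from a putative violation an adjacent swap $W\cup\{c'\}\succ_{\calF(X)}W\cup\{c\}$ with $c\in X$, $c'\notin X$, and then uses one small profile: $\ell$ copies of the ballot $X$ and $k-\ell$ copies of the ballot $C\setminus(X\cup\{c'\})$, where $\ell=|(W\cup\{c\})\cap X|$. Both D'Hondt quotients equal $1$, so $W\cup\{c\}$ is winning; but since the second ballot contains neither $c$ nor $c'$, neutrality makes those voters indifferent, and consistency applied to the decomposition of this single profile into its individual ballots forces $W\cup\{c'\}$ strictly above $W\cup\{c\}$ --- a contradiction. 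Your construction instead places the two committees on opposite sides of the D'Hondt boundary $\nicefrac{n_1}{n_2}=\nicefrac{(s+1)}{(k-s)}$, so that $W$ is proportional for $B$ while $W'$ is proportional for $A+B$, and then combines $A$ with $B$ via consistency. Both routes are sound: the paper's profile has only $k$ voters and needs no asymptotics or interval arithmetic, at the price of a second application of neutrality, whereas yours exploits the knife-edge of the D'Hondt quotients and keeps the auxiliary profile a genuine two-party profile with nonempty parties. The one caveat --- shared with the paper, whose auxiliary profile leaves $c'$ in a party with no supporters --- is that invoking D'Hondt proportionality on a single ballot $S\subsetneq C$ implicitly treats a profile with a voterless party as a party-list profile.
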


By combining Theorems~\ref{thm:characterizationWelfareFunctions},~\ref{thm:thiele-abc-rule-characterization}, and Lemma~\ref{lem:sym+con+prop->pareto}, we obtain the desired characterization.

\begin{corollary}\label{cor:thiele-characterization}
Let $\mathbf{d} = (d_1, d_2, \ldots)$ be a non-decreasing sequence  of values from $\naturals\cup \{\infty\}$ and let $w = (\nicefrac{1}{d_1}, \nicefrac{1}{d_2}, \ldots)$. The $w$-Thiele method is the only ABC ranking rule that satisfies symmetry, consistency, continuity, weak efficiency, and $\mathbf{d}$-proportionality. If the values from the sequence $\mathbf{d}$ do not contain $\infty$, then we do not require weak efficiency to characterize the corresponding $w$-Thiele method.
\end{corollary}

\subsection{ABC Choice Rules}\label{sec:choice_rules}

So far, we have discussed axiomatic questions concerning ABC ranking rules. We will now consider ABC choice rules, i.e., approval-based multi-winner rules that select a set of winning committees.
In particular, we will formulate some of our characterization results for choice rules.  
Let us introduce one more axiom, which is more technical and necessary for our proof technique.

\axiom{2-Nonimposition}{ An ABC choice rule $\calR$ satisfies \emph{2-Nonimposition} if for
 every two committees $W_1, W_2 \in \pow_k(C)$ there exists an approval profile $\alpha(W_1, W_2)$ such that $\calR(\alpha(W_1, W_2)) = \{W_1, W_2\}$.
}

We start by formulating an analogous result to Theorem~\ref{thm:characterizationWelfareFunctions} for ABC choice rules, but under the additional assumption of 2-Nonimposition. The proofs of the statements from this section as well as the exact axiom formulations are provided in Appendix~\ref{sec:app:choice_rules}.
  
\newcommand{\thmcharacterizationChoiceFunctions}{An ABC choice rule that satisfies 2-Nonimposition is an ABC counting rule if and only if it satisfies symmetry, consistency, weak efficiency, and continuity.}
\begin{theorem}\label{thm:characterizationChoiceFunctions}
\thmcharacterizationChoiceFunctions
\end{theorem}%

Next, we can adapt the characterization of $w$-Thiele methods to ABC choice rules

\newcommand{\lemproportionalityimpliestwononimposition}{Let $d = (d_1, d_2, \ldots)$ be a non-decreasing sequence of positive integers with $d_2 > d_1$. An ABC choice rule that satisfies consistency and $\mathbf{d}$-proportionality also satisfies 2-Nonimposition.}
\begin{lemma}\label{lem:proportionality_implies_2_nonimposition}
\lemproportionalityimpliestwononimposition
\end{lemma}

\newcommand{\thmpavabcwinnerrulecharacterization}{Let $d = (d_1, d_2, \ldots)$ be a non-decreasing sequence of positive integers with $d_2 > d_1$, and let $w = (\nicefrac{1}{d_1}, \nicefrac{1}{d_2}, \ldots)$. The $w$-Thiele method is the only ABC choice rule that satisfies symmetry, consistency, continuity and ${d}$-proportionality.}
\begin{theorem}\label{thm:pav-abc-winner-rule-characterization}
\thmpavabcwinnerrulecharacterization
\end{theorem}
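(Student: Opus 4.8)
The plan is to reduce the statement to the already-established ABC ranking rule characterization (Corollary~\ref{cor:pav-characterization}) via the translation machinery developed in this appendix. First I would verify the easy direction: PAV, viewed as an ABC choice rule (returning all committees of maximum PAV-score), is symmetric, consistent, and continuous --- these follow from the fact that it is an ABC counting rule, reading off the relevant ABC choice rule versions of the axioms --- and it satisfies D'Hondt proportionality by the score computation already carried out in the proof of Theorem~\ref{thm:pav-abc-rule-characterization} (the committee that fails a D'Hondt inequality is strictly beaten, so cannot be among the maximizers, and conversely every D'Hondt proportional committee attains the maximum). So the real content is uniqueness.

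For uniqueness, let $\calR$ be an ABC choice rule satisfying symmetry, consistency, continuity, and D'Hondt proportionality. By Lemma~\ref{lem:proportionality_implies_2_nonimposition}, $\calR$ also satisfies 2-Nonimposition, so Definition~\ref{def:calF_calR} applies and yields an ABC ranking rule $\calF_{\calR}$; moreover this rule is well-defined and complete by Lemmas~\ref{lem:calF_calR-well-defined} and~\ref{lem:calF_calR_complete}. By Lemma~\ref{thm:propertiesPreserved}, $\calF_{\calR}$ inherits symmetry, consistency, and continuity from $\calR$. Next I would show $\calF_{\calR}$ satisfies D'Hondt proportionality: by Lemma~\ref{lem:winners-same} the winning committees of $\calF_{\calR}(A)$ are exactly the committees in $\calR(A)$, and for a party-list profile $A$ the set $\calR(A)$ is precisely the set of D'Hondt proportional committees (since $\calR$ satisfies the D'Hondt proportionality axiom); hence $\calF_{\calR}$'s winning committees on party-list profiles are exactly the D'Hondt proportional committees, which is the required property. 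Since $\calF_{\calR}$ is neutral, consistent and D'Hondt proportional, Lemma~\ref{lem:sym+con+prop->pareto} gives that it also satisfies weak efficiency.

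At this point $\calF_{\calR}$ satisfies symmetry, consistency, continuity, and D'Hondt proportionality, so Corollary~\ref{cor:pav-characterization} forces $\calF_{\calR}$ to be PAV as an ABC ranking rule. Finally, applying Lemma~\ref{lem:winners-same} once more, $\calR(A)$ equals the set of winning (i.e.\ top-ranked) committees of $\calF_{\calR}(A)$, which are exactly the PAV-score maximizers; thus $\calR$ is PAV as an ABC choice rule, completing the proof.

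\textbf{Main obstacle.} The substantive work has already been front-loaded into the lemmas of this appendix (the construction of $\calF_{\calR}$ and the preservation of axioms), so the only delicate point in the argument itself is the step identifying $\calR$ on party-list profiles with the D'Hondt proportional committees and transporting this through Lemma~\ref{lem:winners-same} to obtain D'Hondt proportionality of $\calF_{\calR}$ --- one must be careful that ``winning committee'' for the ranking rule, ``output of the choice rule'', and ``D'Hondt proportional committee'' genuinely coincide here, rather than merely overlap.
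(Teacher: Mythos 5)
Your proposal is correct and follows essentially the same route as the paper: establish 2-Nonimposition via Lemma~\ref{lem:proportionality_implies_2_nonimposition}, lift $\calR$ to the ranking rule $\calF_{\calR}$ of Definition~\ref{def:calF_calR}, transfer the axioms via Lemma~\ref{thm:propertiesPreserved} and D'Hondt proportionality via Lemma~\ref{lem:winners-same}, invoke the ranking-rule characterization, and pull the conclusion back with Lemma~\ref{lem:winners-same}. The explicit appeal to Lemma~\ref{lem:sym+con+prop->pareto} for weak efficiency is already subsumed in Corollary~\ref{cor:pav-characterization}, so nothing substantive differs.
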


In particular, the above theorem covers Proportional Approval Voting. Furthermore, we can reprove the characterization for Approval Chamberlin--Courant rule via disjoint diversity.

\newcommand{\thmccabcwinnerrulecharacterization}{The Approval Chamberlin--Courant rule is the only non-trivial ABC choice rule that satisfies symmetry, consistency, weak efficiency, continuity, and disjoint diversity.}
\begin{theorem}\label{thm:cc-abc-winner-rule-characterization}
\thmccabcwinnerrulecharacterization
\end{theorem}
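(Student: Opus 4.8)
The plan is to mirror the proof of Theorem~\ref{thm:pav-abc-winner-rule-characterization} almost verbatim, replacing D'Hondt proportionality by disjoint diversity and Proportional Approval Voting by the Approval Chamberlin--Courant rule. First I would check that CC, viewed as an ABC choice rule, satisfies all five axioms: symmetry, weak efficiency and continuity are immediate from its definition via the counting function $f_\text{CC}$, consistency holds because CC-scores add up over disjoint electorates, and disjoint diversity was already established inside the proof of Theorem~\ref{thm:cc_characterization} (in a party-list profile CC gives each of the $k$ largest parties at least one seat); CC is non-trivial since $k<m$.

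For the converse, let $\calR$ be a non-trivial ABC choice rule satisfying the five axioms. By Lemma~\ref{lem:diversity_implies_2_nonimposition}, $\calR$ also satisfies 2-Nonimposition, so Definition~\ref{def:calF_calR} produces a well-defined ABC ranking rule $\calF_{\calR}$ (well-definedness and completeness come from Lemmas~\ref{lem:calF_calR-well-defined} and~\ref{lem:calF_calR_complete}). By Lemma~\ref{thm:propertiesPreserved}, $\calF_{\calR}$ inherits symmetry, consistency, weak efficiency and continuity, and by Lemma~\ref{lem:winners-same} the winning committees of $\calF_{\calR}(A)$ coincide with $\calR(A)$ for every profile $A$. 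Since disjoint diversity only constrains winning committees, $\calF_{\calR}$ satisfies disjoint diversity as well. Moreover $\calF_{\calR}$ is non-trivial: since $\calR$ is non-trivial there is a profile $A$ with $\calR(A)\subsetneq\pow_k(C)$, hence some $W\in\calR(A)$ and $W'\notin\calR(A)$, and by Lemma~\ref{lem:winners-same} this gives $W\succ_{\calF_{\calR}(A)}W'$.

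Now I would apply Theorem~\ref{thm:characterizationWelfareFunctions} to conclude that $\calF_{\calR}$ is an ABC counting rule, and then Theorem~\ref{thm:cc_characterization} to conclude that this non-trivial ABC counting rule satisfying disjoint diversity must be the Approval Chamberlin--Courant rule. A final invocation of Lemma~\ref{lem:winners-same} shows that the winning committees of $\calR$ are exactly those of $\calF_{\calR}$, i.e., $\calR$ is CC as an ABC choice rule.

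The main obstacle, and essentially the only place requiring care, is the bookkeeping around 2-Nonimposition and non-triviality: one must be sure that the profiles $\alpha(W_1,W_2)$ produced by Lemma~\ref{lem:diversity_implies_2_nonimposition} (built from the matched profiles $\beta_m(W_1,W_2)$ and glued together by consistency) genuinely pin down exactly the pair $\{W_1,W_2\}$ as winners under \emph{every} rule satisfying symmetry, consistency and disjoint diversity, and that passing from $\calR$ to $\calF_{\calR}$ and back preserves non-triviality. Everything else is a transcription of the infrastructure (Lemmas~\ref{lem:calF_calR-well-defined}--\ref{lem:winners-same}) already developed for the PAV case; the only new observation needed is that disjoint diversity, like D'Hondt proportionality, is a statement purely about winning committees, so it survives the translation in both directions.
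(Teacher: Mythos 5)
Your proposal is correct and follows exactly the route the paper intends: the paper's own proof of this theorem is just the remark that one argues ``by reasoning similar to the proof of Theorem~\ref{thm:pav-abc-winner-rule-characterization},'' substituting Lemma~\ref{lem:diversity_implies_2_nonimposition} for Lemma~\ref{lem:proportionality_implies_2_nonimposition} and Theorem~\ref{thm:cc_characterization} for Theorem~\ref{thm:pav-abc-rule-characterization}. Your additional check that non-triviality transfers from $\calR$ to $\calF_{\calR}$ via Lemma~\ref{lem:winners-same} is a detail the paper leaves implicit, and it is handled correctly.
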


Interestingly, our technique based on 2-Nonimposition cannot be used for proving an axiomatic characterization of Multi-Winner Approval Voting.

\newcommand{\propMAVfailstwononimposition}{Multi-Winner Approval Voting does not satisfy 2-Nonimposition.}
\begin{proposition}\label{prop:MAVfails2nonimposition}
\propMAVfailstwononimposition
\end{proposition}

Thus, a full characterization of
ABC counting rules within the class of ABC choice rules remains as important future work.

\section{Conclusions}
\label{sec:concl}

In this paper we analyzed a variety of different rules which all satisfy four common properties: symmetry, consistency, continuity, and weak efficiency. We identified the class of rules that is uniquely defined by these four properties: ABC counting rules.
The intuitive relevance of these four axioms is quite different: we believe that if symmetry is accepted as a prerequisite for a sensible voting rule, consistency is the essential axiom for the characterization of ABC counting rules. It can be expected that weak efficiency only guarantees that the score of a fixed voter is non-decreasing if more approved candidates are in the committee. It is a plausible assumption that voters desire to have approved candidates in the committee and hence weak efficiency only rules out pathological examples of multi-winner rules.
The role of continuity is also a technical one. We conjecture that the role of continuity in our characterization is the same as the role of continuity in (single-winner) scoring rules, i.e., removing continuity also allows for ABC counting rules that use other ABC counting rules to break ties between committees with the same score (we mention such a rule in Section~\ref{sec:disdiv}).
These arguments support our claim that ABC counting rules capture essentially the class of consistent approval-based multi-winner rules.
A formal characterization of ABC ranking rules that satisfy symmetry and consistency would be desirable to substantiate this claim and to further shed light on consistent rules.

The class of ABC counting rules is remarkably broad and includes rules such as Proportional Approval Voting, Approval Chamberlin--Courant and Multi-Winner Approval Voting, for all of which we have provided an axiomatic characterization.
These characterizations are obtained by axioms that describe desirable outcomes for certain simple profiles, so-called party-list profiles.
This is a fruitful approach as it is much easier to formally define concepts such as proportionality or diversity on these simple profiles.
In such profiles it is also easy to formulate properties which quantify trade-offs between individual efficiency, proportionality, and diversity.
Furthermore, the simpler domain of party-list profiles is sufficient to explain the difference between the rules: PAV, AV, and CC can be obtained by extending three different principles defined for party-list profiles to the more general domain by additionally imposing the same set of axioms. Therefore, their defining differences can be found in party-list profiles.

Our results are general and extend to other concepts definable on party-list profiles, e.g., Sainte-Lagu\"e~\citep{Puke14a,BaYo82a} or square-root proportionality~\citep{Penr46a,SlZy06a}. 
Square-root proportionality follows the \emph{degressive proportionality} principle~\citep{RePEc:ucp:jpolec:doi:10.1086/670380}, which suggests that smaller populations should be allocated more representatives than linear proportionality would require.
This can be achieved by using a flatter counting function than $f_\text{PAV}$ and by that we obtain rules which increasingly promote diversity within the committee.
An extreme example is the Approval Chamberlin--Courant rule, where the diversity within a winning committee is strongly favored over proportionality.
On the other hand, using steeper counting functions results in rules with a utilitarian focus, i.e., rules that tend to select candidates with the high total support from voters, and ignoring issues of proportionality of representation. Multi-Winner Approval Voting is an extreme example of a rule which does not guarantee virtually any level of proportionality.

\section*{Acknowledgments}

We thank Edith Elkind, Piotr Faliszewski, Paul Harrenstein and Dominik Peters for insightful discussions. We are grateful to Marie-Louise Lackner for providing feedback on early drafts of this paper. 
Furthermore, we thank Itzhak Gilboa for his supportive comments on our paper.
Martin Lackner was supported by the European Research Council (ERC) under grant number 639945 (ACCORD) and the Austrian Science Foundation FWF, grant P31890. 
Piotr Skowron was supported by the ERC grant 639945 (ACCORD), and by the Foundation for Polish Science within the Homing programme (Project title: "Normative Comparison of Multiwinner Election Rules").

\appendix
\section{Characterization of ABC Counting Rules}
\label{sec:app:countingrules}

In this section we prove the main technical result of this paper:

\begin{reptheorem}{thm:characterizationWelfareFunctions}
\thmcharacterizationWelfareFunctions
\end{reptheorem}

The following definitions and notation will prove useful:

\begin{itemize}
\item For each $\ell \in [0,m]$, we say that an approval profile $A$ is \emph{$\ell$-regular} if each voter in $A$ approves of exactly $\ell$ candidates.
\item We say that $A$ is \emph{$\ell$-bounded} if each voter in $A$ approves of at most $\ell$ candidates.
\item We write $\set(A)$ to denote $\{A(v): v\in V\}$ and by that ignore multiplicities of votes.
\item Sometimes we associate an approval set $S \subseteq C$ with the single-voter profile $A\in\calA(C,\{1\})$ such that $A(1)=S$; in such a  case we write $\calF(S)$ as a short form of $\calF(A)$ for appropriately defined $A$.
\end{itemize}

\paragraph{Committee scoring rules.}

Before we start describing our construction, let us recall the definition of committee scoring rules~\citep{skowron2019axiomatic}, a concept that will play an instrumental role in our further discussion.
Linear order-based committee (LOC) ranking rules, in contrast to ABC ranking rules, assume that voters' preferences are given as linear orders over the set of candidates.
For a finite set of voters $V=\{v_1,\dots,v_n\}\subset \naturals$, a \emph{profile of linear orders over $V$}, $P= (P(v_1),\dots,P(v_n))$, is an $n$-tuple of linear orders over $C$ indexed by the elements of $V$, i.e., for all $v\in V$ we have $P(v)\in\mathscr L(C)$.
A \emph{linear order-based committee ranking rule (LOC ranking rule)} is a function that maps profiles of linear orders to $\mathscr W(\pow_k(C))$, the set of weak orders over committees.

Let $P$ be a profile of linear orders over $V$.
For a vote $v$ and a candidate $a$, by $\pos_{v}(a, P)$ we denote
the position of $a$ in $P(v)$ (the top-ranked candidate has position
$1$ and the bottom-ranked candidate has position $m$). %
For a vote $v\in V$ and a committee $W \in \pow_k(C)$, we write $\pos_v(W, P)$ to denote the 
set of positions of all members of $W$ in ranking $P(v)$, i.e.,
$\pos_v(W, P) = \{\pos_v(a, P): a \in W\}$. 
A \emph{committee scoring function} is a mapping $\posf\colon \pow_k([m]) \to \reals$ that for each possible position that a committee can occupy in a ranking (there are ${m \choose k}$ of all possible positions), assigns a score. Intuitively, for each $I \in \pow_k([m])$ value $\posf(I)$ can be viewed as the score assigned by a voter $v$ to the committee whose members stand in $v$'s ranking on positions from set $I$.
Additionally, a committee scoring function $\posf(I)$ is required to satisfy weak dominance, which is defined as follows.
Let $I,J\in \pow_k([m])$
such that $I
= \{i_1, \ldots, i_k\}$, $J = \{j_1, \ldots, j_k\}$, and that
$i_1 < \cdots < i_k$ and $j_1 < \cdots < j_k$. We say that $I$
dominates $J$ if for each $t \in [k]$ we have $i_t \leq j_t$. Weak dominance holds if $I$ dominating $J$ implies that $\posf(I) \geq \posf(J)$.

For a profile of linear orders $P$ over $C$ and a committee $W\in\pow_k(C)$,
we write $\score{f}(W, P)$ to denote the total score that the
voters from $V$ assign to committee $W$. Formally, we have that
$\score{\posf}(W, P) = \sum_{v \in V} \posf(\pos_{v}(W, P))$.
An LOC ranking rule $\calG$ is an \emph{LOC scoring rule} if
there exists a committee scoring function $\posf$ such that for each
$W_1, W_2\in \pow_k(C)$ and profile of linear orders $P$ over $V$, we have that $W_1$ is strictly preferred to $W_2$ with respect to the weak order $\calG(P)$
if and only if $\score{\posf}(W_1, P) > \score{\posf}(W_2, P)$.

The axioms from Section~\ref{subsec:basic-axioms} can be naturally formulated for LOC ranking rules. We will use these formulations of the axioms in the proof of Lemma~\ref{lemma:ordinalFunctionPreservesProperties}. For the sake of readability we do not recall their definitions here, but rather in the proof, where they are used.

\paragraph{Overview of the proof of Theorem~\ref{thm:characterizationWelfareFunctions}.} 
As mentioned before, it is easy to see that ABC counting rules satisfy symmetry, consistency, weak efficiency, and continuity.
The proof of the other direction consists of several steps.

We start in Section~\ref{subsec:eff} by proving that weak efficiency in conjunction with the other axioms implies a stronger efficiency axiom, which proves useful in the subsequent constructions.
In Section~\ref{subsec:l-regular}, we prove that the characterization theorem holds for the very restricted class of $\ell$-regular profiles, i.e., profiles where every voter approves exactly $\ell$ candidates.
To this end, we construct a collection of LOC rules $\left\{\calG_{\ell}\right\}_{\ell=1\dots m}$ based on how $\calF$ operates on $\ell$-regular profiles.
We then show that the LOC ranking rule $\calG_{\ell}$ satisfies equivalent axioms to symmetry, consistency, weak efficiency, and continuity.
This allows us to apply a theorem by Skowron, Faliszewski, and Slinko~\citep{skowron2019axiomatic}, who proved that LOC ranking rules satisfying these axioms are in fact LOC scoring rules.
Thus, there exists a corresponding committee scoring function $g_\ell$, which in turn defines a counting function $f_\ell$.
As a last step, we show that $f_\ell$ implements $\calF$ on $\ell$-regular approval profiles and thus prove that Theorem~\ref{thm:approval-characterizationB} holds restricted to $\ell$-regular approval profiles.

In Section~\ref{subsec:arbitrary}, we extend this restricted result to arbitrary approval profiles. For each $\ell \in [m]$ we have obtained a counting function $f_{\ell}$ which defines $\calF$ on $\ell$-regular profiles.
Our goal is to show that there exists a linear combination of these counting functions $f = \sum_{\ell \in [m]}\gamma_{\ell} f_{\ell}$ which defines $\calF$ on arbitrary profiles. We define the corresponding coefficients $\gamma_{1}, \dots, \gamma_m$ inductively. We first construct two specific committees $W_1^*$ and $W_2^*$, which we use to scale the coefficients, and additionally, in order to define coefficient $\gamma_{\ell+1}$ we construct two specific votes, $a_{\ell+1}^*$ and $b_{\ell+1}^*$, with exactly $\ell+1$, and at most $\ell$ approved candidates, respectively. We define coefficient $\gamma_{\ell+1}$ using the definition of $f$ for $\ell$-bounded profiles and by exploring how $\calF$ compares committees $W_1^*$ and $W_2^*$ for very specific profiles which are build from certain numbers of votes $a_{\ell+1}^*$ and $b_{\ell+1}^*$. This concludes the construction of $f$.

Showing that $f = \sum_{\ell \in [m]}\gamma_{\ell} f_{\ell}$ implements $\calF$ requires a rather involved analysis, which is divided into several lemmas. In Lemma~\ref{lemma:specificVotes} we show that $f$ implements $\calF$, but only for the case when $\calF$ is used to compare $W_1^*$ and $W_2^*$, and only for very specific profiles.
In Lemma~\ref{lemma:WqW2Fixed} we still assume that $\calF$ is used to compare only $W_1^*$ and $W_2^*$, but this time we extend the statement to arbitrary profiles. In Lemma~\ref{lemma:arbitraryCommittees} we show the case when $\calF$ is used to compare $W_1^*$ with any other committee. We complete this reasoning with a short discussion explaining the validity of our statement in its full generality. Each of the aforementioned lemmas is based on a different idea and they build upon each other. The main proof technique is to transform simple approval profiles to more complex ones and argue that certain properties are preserved due to the required axioms.

\subsection{A Stronger Efficiency Axiom}
\label{subsec:eff}

In the subsequent proofs we will use the following \emph{strong efficiency} axiom:

\axiom{Strong Efficiency}{
  An ABC ranking rule $\calF$ satisfies \emph{strong efficiency}
  if for $W_1, W_2 \in \pow_k(C)$ and $A\in\calA(C,V)$ where for every voter $v\in V$ we have $|A(v) \cap W_1| \geq |A(v) \cap W_2|$, it holds that $W_1 \succeq_{\calF(A)} W_2$.
}

For $k=1$, i.e., in the single-winner setting, strong efficiency is the well-known Pareto efficiency axiom, which requires that if a candidate $c$ is unanimously preferred to candidate $d$, then $d \succeq c$ in the collective ranking~\citep{moulinAxioms}.

The following lemma shows that strong efficiency in the context of neutral and consistent rules is implied by its weaker counterpart. 

\newcommand{\lemconsparetoefficiency}{An ABC ranking rule that satisfies neutrality, consistency and weak efficiency also satisfies strong efficiency.}
\begin{lemma}\label{lem:cons+pareto->efficiency}
\lemconsparetoefficiency
\end{lemma}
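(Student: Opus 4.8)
The plan is to show that if $\calF$ is neutral, consistent, and weakly efficient, then for any profile $A \in \calA(C,V)$ and committees $W_1, W_2$ with $|A(v) \cap W_1| \geq |A(v) \cap W_2|$ for every voter $v$, we have $W_1 \succeq_{\calF(A)} W_2$. The key idea is to reduce the general domination hypothesis to the special structure that weak efficiency can handle, namely committees $W_1, W_2$ where $W_2 \setminus W_1$ is approved by nobody, and then to use consistency to combine the resulting comparisons.

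First I would reduce to the case $W_1 \cap W_2 = \emptyset$: if $c \in W_1 \cap W_2$, the candidate $c$ contributes equally to $|A(v) \cap W_1|$ and $|A(v) \cap W_2|$ for every $v$, so it should be harmless to ``ignore'' it, though making this precise may itself require a small argument (or one can simply carry the common part along). Next, the main step: given disjoint $W_1 = \{a_1, \dots, a_k\}$ and $W_2 = \{b_1, \dots, b_k\}$ with per-voter domination, I would build, for a suitably large number $N$ of copies of $A$, a collection of profiles obtained by applying various permutations $\sigma$ of $C$ to $A$, together with auxiliary comparisons between committees that weak efficiency \emph{can} decide. The guiding intuition: domination $|A(v)\cap W_1| \ge |A(v)\cap W_2|$ means that for each voter there is an injection from $A(v)\cap W_2$ into $A(v)\cap W_1$; averaging such injections over a symmetric family of profiles/permutations, $W_1$ looks (neutrally) at least as good as $W_2$. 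Concretely, for an appropriate set of permutations $\{\sigma_j\}$ one arranges that $\sum_j \sigma_j(A)$ treats $W_1$ and $W_2$ symmetrically on their symmetric difference except for candidates approved by no one, so that weak efficiency plus neutrality pins down the comparison, and then consistency transports it back to $A$ itself.

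A cleaner route, which I would try first, is induction on the total ``excess'' $\sum_{v}(|A(v)\cap W_1| - |A(v)\cap W_2|)$. In the base case this excess is zero, meaning $|A(v)\cap W_1| = |A(v)\cap W_2|$ for all $v$; here I would exhibit a permutation $\sigma$ fixing the common part and mapping $W_2$ to $W_1$ in a way compatible with each voter's approvals, so neutrality gives $W_1 =_{\calF(A)} W_2$ directly (this requires a combinatorial lemma: equal intersections for every voter allow a single global bijection $W_2 \to W_1$ respecting all approval sets — essentially a Hall-type / bipartite argument). For the inductive step, pick a voter $v_0$ with $|A(v_0)\cap W_1| > |A(v_0)\cap W_2|$, hence some $a \in (A(v_0)\cap W_1)\setminus W_2$; introduce an intermediate committee $W' = (W_2 \setminus \{b\})\cup\{a\}$ for a suitable $b \in W_2 \setminus W_1$, show $W' \succeq_{\calF(A)} W_2$ using weak efficiency on the single swapped-in candidate (after possibly duplicating $A$ and symmetrizing over permutations so the uncovered candidate becomes genuinely unapproved), and then apply the induction hypothesis to the pair $(W_1, W')$, whose excess has strictly dropped. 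Transitivity of $\succeq_{\calF(A)}$ closes the step.

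The main obstacle is the step where weak efficiency must be invoked: weak efficiency only speaks about committees differing by candidates that \emph{no one} approves, whereas in an arbitrary profile the candidate we swap in is typically approved by several voters. Bridging this gap is exactly where neutrality and consistency must do real work — one has to take many copies $NA$, apply a family of permutations that collectively ``wash out'' the approvals on the relevant candidates (making the relevant part of the symmetric difference effectively unapproved in the aggregate profile), apply weak efficiency there, and then use consistency (both clauses (i) and (ii)) to pull the strict/weak preference back down to the original profile $A$. Getting the bookkeeping of which permutations to use, and ensuring the aggregate comparison is decided by weak efficiency rather than left undetermined, is the delicate part; everything else is routine transitivity and counting.
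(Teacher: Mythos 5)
There is a genuine gap here: you correctly identify the central obstacle---weak efficiency only compares committees whose difference is approved by \emph{nobody}, while the candidates you want to swap are typically approved by someone---but neither of your two routes actually overcomes it. The ``symmetrize over permutations and wash out the approvals'' step is left as a hope rather than an argument, and it is unclear how consistency would let you ``subtract'' the auxiliary permuted profiles to transport the conclusion back to $A$ (consistency only combines agreeing comparisons on a union of profiles; it gives no subtraction). Moreover, the combinatorial lemma underlying your base case is false: it is not true that $|A(v)\cap W_1|=|A(v)\cap W_2|$ for every $v$ yields a single global permutation $\sigma$ with $\sigma(A(v))=A(v)$ for all $v$ and $\sigma(W_2)=W_1$. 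For example, take $k=2$, $W_1=\{a,b\}$, $W_2=\{c,d\}$, one voter approving $\{a,c\}$ and another approving $\{a,d\}$; every such $\sigma$ must fix $a$, hence fix $c$ and $d$, so $\sigma(W_2)=W_2\neq W_1$. So the induction cannot even get started as written.

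The missing idea, which makes the whole lemma short, is to decompose \emph{first} into single-voter profiles and only then invoke weak efficiency and neutrality. For each $v\in V$ let $A_v$ be the single-voter profile with vote $A(v)$. Build $W_2'$ from $W_2$ by replacing candidates of $W_2\setminus A(v)$ with candidates of $A(v)$ until $|A(v)\cap W_2'|=|A(v)\cap W_1|$. The candidates removed, i.e.\ those in $W_2\setminus W_2'$, all lie outside $A(v)$, and in the profile $A_v$ the only voter is $v$---so \emph{no} voter of $A_v$ approves anything in $W_2\setminus W_2'$, and weak efficiency applies verbatim to give $W_2'\succeq_{\calF(A_v)}W_2$. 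Neutrality (via an involution fixing $A(v)$ setwise and exchanging $W_1$ with $W_2'$) gives $W_1=_{\calF(A_v)}W_2'$, hence $W_1\succeq_{\calF(A_v)}W_2$ by transitivity. Finally, since $A=\sum_{v\in V}A_v$, clause (ii) of consistency combines these per-voter conclusions into $W_1\succeq_{\calF(A)}W_2$. No global permutation, no limit over copies of $A$, and no ``washing out'' is needed; the per-voter localization is exactly what turns weak efficiency from a statement about globally unapproved candidates into a usable tool.
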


\begin{proof}
Let $\calF$ be an ABC ranking rule that satisfies neutrality, consistency and weak efficiency.
Further, let $W_1, W_2 \in \pow_k(C)$ and $A\in\calA(C,V)$ such that for every vote $v\in V$ we have $|A(v) \cap W_1| \geq |A(v) \cap W_2|$.
We have to show that $W_1 \succeq_{\calF(A)} W_2$. %
Fix $v\in V$ and let $A_v\in \calA(C,\{1\})$ be the profile containing the single vote $A(v)$.
Now, let us consider a committee $W_2'$ constructed from $W_2$ in the following way. We obtain $W_2'$
from $W_2$ by replacing candidates in $W_2\setminus A(v)$ with candidates from $A(v)$ so that
$|A(v) \cap W_2'| = |A(v) \cap W_1|$.
Note that $A(v) \cap W_2 \subseteq A(v) \cap W_2'$ and hence candidates in $A(v) \cap (W_2\setminus W_2') = \emptyset$.
Hence by weak efficiency we get that $W_2'\succeq_{\calF(A_v)} W_2$.
Furthermore, neutrality implies that $W_2'\sim_{\calF(A_v)} W_1$ and by transitivity we infer that  $W_1 \succeq_{\calF(A_v)} W_2$.
The final step is to apply consistency. For every $v\in V$, $W_1\succeq_{\calF(A_v)} W_2$.
Hence also for their combination $\sum_{v\in V} A_v= A$ we have $W_1\succeq_{\calF(A)} W_2$.
\end{proof}

\subsection{$\calF$ is an ABC Counting Rule on $\ell$-Regular Approval Profiles}
\label{subsec:l-regular}

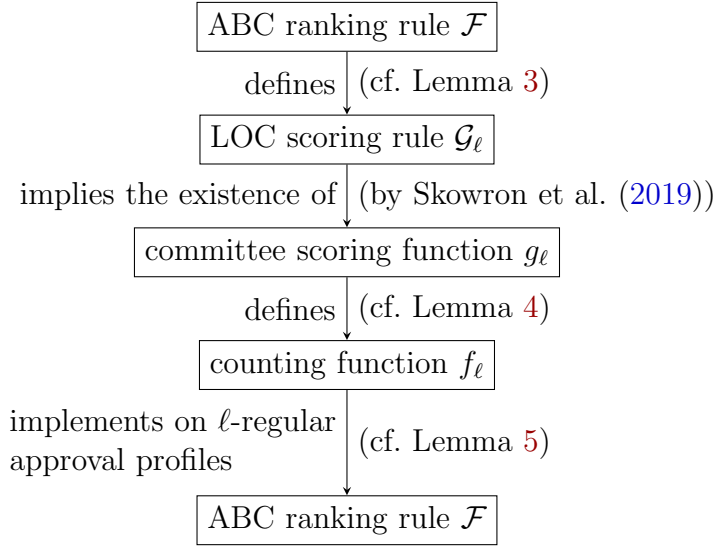
\begin{figure}
\begin{center}
\begin{tikzpicture}[level 4/.style={level distance=5em},sibling distance=10em,
  every node/.style = {draw, align=center}]]
  \node {ABC ranking rule $\calF$}
    child { node (B) {LOC scoring rule $\calG_\ell$}
     child { node (C) {committee scoring function $g_\ell$} 
       child { node {counting function $f_\ell$}
       child { node {ABC ranking rule $\calF$}
      edge from parent[-stealth] node[left,draw=none,align=left] {implements on $\ell$-regular\\approval profiles} node[right,draw=none] {(cf.\ Lemma~\ref{lem:implements-on-approval-profiles})}
    } edge from parent[-stealth] node[left,draw=none] {defines} node[right,draw=none] {(cf.\ Lemma~\ref{lem:g ell=f})} 
    } edge from parent[-stealth] node[left,draw=none] {implies the existence of} node[right,draw=none] {(by \citet{skowron2019axiomatic})} 
    } edge from parent[-stealth] node[left,draw=none] {defines} node[right,draw=none] {(cf.\ Lemma~\ref{lemma:ordinalFunctionPreservesProperties})} 
    } ;
\end{tikzpicture}
\end{center}
\caption{A diagram illustrating the reasoning used in Section~\ref{subsec:l-regular} to prove that in $\ell$-regular approval profiles, $\calF$ is a counting rule.}
\label{fig:l-regular-proof-overview}
\end{figure}

Recall that we assume that $\calF$ is an ABC ranking rule satisfying symmetry, consistency, weak efficiency, and continuity.
If $\calF$ is trivial, i.e., if $\calF$ always maps to the trivial relation, then $\calF$ is the counting rule implemented by $f(x,y)=0$. Thus, hereinafter we assume that $\calF$ is a fixed, non-trivial ABC ranking rule satisfying anonymity, neutrality, weak efficiency, and continuity. By Lemma~\ref{lem:cons+pareto->efficiency} we can also assume that $\calF$ satisfies strong efficiency.

As a first step, we will prove in this section that $\calF$ restricted to $\ell$-regular approval profiles is an ABC counting rule, i.e., that there exists a counting function that implements $\calF$ on $\ell$-regular approval profiles.
For an overview of the argument we refer the reader to Figure~\ref{fig:l-regular-proof-overview}.

For each $\ell \in [m]$, from $\calF$ we construct an LOC ranking rule, $\calG_{\ell}$, as follows. For a profile  of linear orders $P$, by $\toapproval(P, \ell)$ we denote the approval preference profile where voters approve of their top $\ell$ candidates. 
We define for every $\ell\in[m]$ an LOC ranking rule $\calG_{\ell}$, as:
\begin{align}\label{eq:calGDefinition}
\calG_{\ell}(P) = \calF\big(\toapproval(P, \ell)\big).%
\end{align}
Lemma~\ref{lemma:ordinalFunctionPreservesProperties}, below, shows that our construction preserves the axioms under consideration and consequently that $\calG_\ell$ is an LOC scoring rule. As mentioned before, this lemma heavily builds upon a result of Skowron, Faliszewski, and Slinko~\citep{skowron2019axiomatic}.

\begin{lemma}\label{lemma:ordinalFunctionPreservesProperties}
Let $\calF$ be an ABC ranking rule satisfying symmetry, consistency, strong efficiency and continuity. Then for each $\ell \in [m]$, the LOC ranking rule $\calG_{\ell}$ defined by Equation~\eqref{eq:calGDefinition} is an LOC scoring rule.\label{lem:apply-skowron-et-al}
\end{lemma}
\begin{proof}
The proof of this lemma relies on the main theorem of Skowron, Faliszewski, and Slinko~\citep{skowron2019axiomatic}:
an LOC ranking rule is a LOC scoring rule if and only if it satisfies anonymity, neutrality, consistency, committee dominance, and continuity.
We thus have to verify that $\calG_{\ell}$ satisfies these axioms.
Note that since $\calG_{\ell}$ is an LOC ranking rule, the corresponding axioms differ slightly from the ones introduced in Section~\ref{subsec:basic-axioms}.
Thus, in the following we introduce each of these axioms for LOC ranking rules and prove that it is satisfied by $\calG_\ell$ for arbitrary $\ell$.

(Anonymity) An LOC ranking rule $\calG$ satisfies anonymity if for
  every two sets of voters $V, V' \subseteq
  \naturals$ such that $|V| = |V'|$, for each bijection $\rho: V \to V'$ and
  for every two preference profiles $P_1 \in \calP(C,V)$ and $P_2 \in
  \calP(C,V')$ such that $P_1(v) = P_2(\rho(v))$ for each $v \in V$,
  it holds that $\calG(P_1) = \calG(P_2)$.
  Let $V,V',\rho, P_1, P_2$ be defined accordingly.
  Note that $P_1(v) = P_2(\rho(v))$ implies that $\toapproval(P_1, \ell)(v) = \toapproval(P_2, \ell)(\rho(v))$.
  Hence, by anonymity of $\calF$, \[\calG(P_1)=\calF\big(\toapproval(P_1, \ell)\big)=\calF\big(\toapproval(P_2, \ell)\big)=\calG(P_2).\]

(Neutrality) An LOC ranking rule $\calG$ satisfies neutrality if for each permutation
  $\sigma$ of $A$ and every two preference profiles $P_1, P_2$
  over the same voter set $V$ with $P_1 = \sigma(P_2)$, it holds that $\calG(P_1) = \sigma( \calG(P_2) )$.
  Let $P_1,P_2,V$, and $\sigma$ be defined accordingly.
  Note that $\toapproval(P_1, \ell) = \sigma(\toapproval(P_2, \ell))$.
  Then, by neutrality of $\calF$, \[\calG(P_1) = \calF\big(\toapproval(P_1, \ell)\big)= \calF(\sigma\big(\toapproval(P_2, \ell)\big)) =\sigma(\calF\big(\toapproval(P_2, \ell)\big)) = \sigma( \calG(P_2) ).\]

(Consistency) An LOC ranking rule $\calG$ satisfies consistency if for every two
  profiles $P_1$ and $P_2$ over disjoint sets of voters, $V \subset \naturals$
  and $V' \subset \naturals$, $V \cap V' = \emptyset$, and every two committees
  $W_1, W_2 \in \pow_k(C)$, (i) if $W_1 \succ_{\calG(P_1)} W_2$ and $W_1
  \succeq_{\calG(P_2)} W_2$, then it holds that $W_1 \succ_{\calG(P_1+P_2)} W_2$ and (ii) if $W_1 \succeq_{\calG(P_1)} W_2$ and $W_1
  \succeq_{\calG(P_2)} W_2$, then it holds that $W_1 \succeq_{\calG(P_1+P_2)} W_2$.
  Let $P_1,P_2,V,V',W_1$, and $W_2$ be defined accordingly.
  Let us prove~(i).
  If $W_1 \succ_{\calG(P_1)} W_2$, then $W_1 \succ_{\calF(\toapproval(P_1, \ell))} W_2$.
  Analogously, if $W_1 \succeq_{\calG(P_2)} W_2$, then $W_1 \succeq_{\calF(\toapproval(P_2, \ell))} W_2$.
  By consistency of $\calF$, we know that $W_1 \succ_{\calF(\toapproval(P_1, \ell)+\toapproval(P_2, \ell))} W_2$.
  Clearly, $\toapproval(P_1, \ell)+\toapproval(P_2, \ell)=\toapproval(P_1+P_2, \ell)$.
  We can conclude that $W_1 \succ_{\calF(\toapproval(P_1+P_2, \ell))} W_2$ and hence $W_1 \succ_{\calG(P_1+P_2)} W_2$.
  The proof of (ii) is analogous.
  
(Committee dominance) An LOC ranking rule $\calG$ satisfies committee dominance if for every two committees $W_1, W_2 \in \pow_k(C)$ and each
  profile $P\in \calP(C,V)$ where for every vote $v \in V$, $\pos_{v}(W_1)$
  dominates $\pos_{v}(W_2)$, it holds that $W_1 \succeq_{\calG(P)} W_2$.
  Let $W_1,W_2$, and $P$ be defined accordingly.
  If $\pos_{v}(W_1)$ dominates $\pos_{v}(W_2)$, then clearly for each $v \in V$, $|\toapproval(P, \ell)(v)\cap W_1|\geq |\toapproval(P, \ell)(v)\cap W_2|$.
  By strong efficiency of $\calF$, $W_1 \succeq_{\calG(P)} W_2$.
  
(Continuity) An LOC ranking rule $\calG$ satisfies continuity if for
  every two committees $W_1, W_2 \in \pow_k(C)$ and every two profiles $P_1$
  and $P_2$ where $W_1 \succ_{\calG(P_2)} W_2$, there exists a number $n \in
  \naturals$ such that $W_1 \succ_{\calG(P_1+nP_2)} W_2$. This is an immediate consequence of the fact that $\calF$ satisfies continuity.
\end{proof}

Lemma~\ref{lemma:ordinalFunctionPreservesProperties} shows that there exists a committee scoring function implementing rule $\calG_{\ell}$.
The following lemma shows that this committee scoring function has a special form that allows it to be represented by a counting function.

\begin{lemma}
For $\ell\in[m]$, let $g_{\ell}:\pow_k([m])\to \reals$ be a committee scoring function that implements $\calG_{\ell}$. There exists a counting function $f_\ell$ such that that:\label{lem:g ell=f}
\begin{align*}
g_{\ell}(I) = f_\ell(|\{i \in I \colon i \leq \ell\}|,\ell) \qquad \text{for each $I\in [m]_k$ and $\ell\in[m]$.}
\end{align*}
\end{lemma}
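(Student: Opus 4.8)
The plan is to read $f_\ell$ off from $g_\ell$ directly: for every achievable value $x$, set $f_\ell(x,\ell)$ to be the common value of $g_\ell$ on all position sets having exactly $x$ positions in $\{1,\dots,\ell\}$. Thus the whole content reduces to two claims: that this assignment is well-defined, and that it is weakly increasing in $x$, so that $f_\ell$ extends to a genuine counting function. Throughout I would use that $\calG_\ell(P)=\calF(\toapproval(P,\ell))$ and that $g_\ell$ implements $\calG_\ell$: for a single-voter profile $P$ with voter $v$ we have $\score{g_\ell}(W,P)=g_\ell(\pos_v(W,P))$, and $\calG_\ell(P)=\calF(A)$, where $A$ is the single-voter approval profile whose voter approves exactly the top-$\ell$ set of $P(v)$. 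Since every $I\in\pow_k([m])$ can be written as $\pos_v(W,P)$ for the ``identity'' order $P(v)\colon c_1\succ\dots\succ c_m$ and $W=\{c_i : i\in I\}$, with the number $n_\ell(I):=|\{i\in I : i\le\ell\}|$ equal to $|W\cap S|$ for $S=\{c_1,\dots,c_\ell\}$, it suffices to work with this fixed $P$.

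For well-definedness I would fix $I,J\in\pow_k([m])$ with $n_\ell(I)=n_\ell(J)$ and show $g_\ell(I)=g_\ell(J)$. Put $S=\{c_1,\dots,c_\ell\}$, $W_I=\{c_i : i\in I\}$, $W_J=\{c_i : i\in J\}$, and let $A=\toapproval(P,\ell)$ be the single-voter profile approving $S$. A cardinality computation---partition $C$ into the (at most eight) classes determined by membership in the triple $(S,W_I,W_J)$, and use $|W_I|=|W_J|=k$ together with $|W_I\cap S|=|W_J\cap S|$---shows that corresponding classes for $W_I$ and $W_J$ have equal sizes; hence there is a bijection $\sigma\colon C\to C$ (one can take it to be an involution) with $\sigma(S)=S$, $\sigma(W_I)=W_J$, and $\sigma(W_J)=W_I$. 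Then $\sigma(A)=A$, so neutrality of $\calF$ gives $W_I\succ_{\calF(A)}W_J$ if and only if $\sigma(W_I)\succ_{\calF(A)}\sigma(W_J)$, i.e.\ if and only if $W_J\succ_{\calF(A)}W_I$; since a weak order admits no such pair, $W_I=_{\calF(A)}W_J$. Because $g_\ell$ implements $\calG_\ell$ and $\calG_\ell(P)=\calF(A)$, this is precisely $\score{g_\ell}(W_I,P)=\score{g_\ell}(W_J,P)$, i.e.\ $g_\ell(I)=g_\ell(J)$. This step---constructing $\sigma$ and invoking neutrality---is the heart of the argument; the rest is bookkeeping.

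For monotonicity I would show that $n_\ell(I)\ge n_\ell(J)$ implies $g_\ell(I)\ge g_\ell(J)$. By the previous step I may replace $I$ and $J$ by the canonical sets $I'=\{1,\dots,x\}\cup\{\ell+1,\dots,\ell+k-x\}$ and $J'=\{1,\dots,x'\}\cup\{\ell+1,\dots,\ell+k-x'\}$, where $x=n_\ell(I)\ge x'=n_\ell(J)$; these are legitimate $k$-subsets of $[m]$ because any $I\in\pow_k([m])$ satisfies $n_\ell(I)\le\min(k,\ell)$ and $k-n_\ell(I)\le m-\ell$. Comparing $t$-th smallest elements shows that $I'$ dominates $J'$, so weak dominance of the committee scoring function $g_\ell$ yields $g_\ell(I')\ge g_\ell(J')$, hence $g_\ell(I)\ge g_\ell(J)$. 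Finally I would define $f_\ell(x,\ell)$ to be the common value of $g_\ell$ on $\{I : n_\ell(I)=x\}$ for $x$ in the achievable range $[\max(0,k-m+\ell),\,\min(k,\ell)]$, extend it to a weakly increasing function of $x$ on $[0,k]$ by holding it constant outside that range, and define $f_\ell(x,y)$ for $y\ne\ell$ (and at non-integer arguments, e.g.\ via $f_\ell(x,y)=f_\ell(\lfloor x\rfloor,\lfloor y\rfloor)$) in any way that keeps $f_\ell$ weakly increasing in $x$---none of these values ever enter a score computation. Then $f_\ell$ is a counting function, and since $n_\ell(I)$ always lies in the achievable range we get $g_\ell(I)=f_\ell(n_\ell(I),\ell)=f_\ell(|\{i\in I : i\le\ell\}|,\ell)$ for every $I\in\pow_k([m])$, as claimed. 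I expect the only genuinely delicate point to be the cardinality/involution argument underlying well-definedness; monotonicity and the assembly of $f_\ell$ are routine.
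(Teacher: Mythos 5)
Your proof is correct and follows essentially the same route as the paper: define $f_\ell$ as the common value of $g_\ell$ on position sets with a fixed number of entries in $[\ell]$, and justify well-definedness by a neutrality argument applied to the single-voter profile whose voter approves the top-$\ell$ set (your explicit involution $\sigma$ is exactly the content of the paper's one-line appeal to neutrality). You additionally verify, via weak dominance and the canonical sets $I',J'$, that the resulting $f_\ell$ is weakly increasing in $x$ --- a requirement of the paper's definition of a counting function that the paper's own proof leaves implicit.
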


\begin{proof}
We have to show that for an arbitrary profile of linear orders $P$ over $V$ and some $v\in V$, two committees $W_1$ and $W_2$ have the same score $g_{\ell}(\pos_v(W_1))=g_{\ell}(\pos_v(W_2))$ given that \[|\{i \in \pos_v(W_1) \colon i \leq \ell\}|=|\{i \in \pos_v(W_2) \colon i \leq \ell\}|.\]
From the neutrality of $\calF$, we see that if $v$ has the same number of approved members in $W_1$ as in $W_2$, $W_1$ and $W_2$ are equally good with respect to $\calF$. Thus if $W_1$ and $W_2$ have the same number of members in the top $\ell$ positions in $v$, then $W_1$ and $W_2$ are also equally good with respect to $\calG_{\ell}$. Hence the scores assigned by $g_\ell$ to the positions occupied by $W_1$ and $W_2$ are the same.
\end{proof}

We are now ready to prove Lemma~\ref{lemma:on_l_regular_profiles}, which provides the main technical conclusion of this section.

\begin{lemma}\label{lemma:on_l_regular_profiles}
For each $\ell \in [m]$, the counting function $f_\ell(a,\ell)$, as defined in the statement of Lemma~\ref{lem:g ell=f}, implements $\calF$ on $\ell$-regular approval profiles.\label{lem:implements-on-approval-profiles}
\end{lemma}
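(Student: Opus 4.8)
The plan is to realize every $\ell$-regular approval profile as the top-$\ell$ truncation of a linear-order profile and then pull back the LOC scoring representation of $\calG_\ell$ (provided by Lemmas~\ref{lemma:ordinalFunctionPreservesProperties} and~\ref{lem:g ell=f}) to the approval setting. Concretely, fix $\ell \in [m]$ and an $\ell$-regular approval profile $A \in \calA(C,V)$. For each voter $v \in V$, since $|A(v)| = \ell$, choose an arbitrary linear order $P(v) \in \mathscr L(C)$ that ranks the members of $A(v)$ in its top $\ell$ positions (in any order) and the remaining $m-\ell$ candidates below. Letting $P$ be the resulting profile of linear orders over $V$, we have $\toapproval(P,\ell) = A$ by construction, and hence $\calG_\ell(P) = \calF(A)$ by the definition~\eqref{eq:calGDefinition} of $\calG_\ell$; in particular $\succeq_{\calG_\ell(P)}$ and $\succeq_{\calF(A)}$ are the same weak order over $\pow_k(C)$.

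Next I would check that, along this correspondence, the LOC score $\score{g_\ell}(\cdot,P)$ coincides with the ABC score $\score{f_\ell}(\cdot,A)$. Using Lemma~\ref{lem:g ell=f}, for any $W \in \pow_k(C)$,
\[
\score{g_\ell}(W,P) = \sum_{v \in V} g_\ell\bigl(\pos_v(W,P)\bigr) = \sum_{v \in V} f_\ell\bigl(\,|\{i \in \pos_v(W,P) : i \leq \ell\}|,\ \ell\,\bigr).
\]
The inner count is precisely the number of members of $W$ appearing among the top $\ell$ positions of $P(v)$, that is, $|W \cap \toapproval(P,\ell)(v)| = |W \cap A(v)|$; and since $A$ is $\ell$-regular, $|A(v)| = \ell$. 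Substituting yields $\score{g_\ell}(W,P) = \sum_{v \in V} f_\ell(|W \cap A(v)|, |A(v)|) = \score{f_\ell}(W,A)$.

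Finally, combining the two observations with the fact that $g_\ell$ implements $\calG_\ell$ as an LOC scoring rule, I obtain, for all $W_1,W_2 \in \pow_k(C)$, that $W_1 \succ_{\calF(A)} W_2$ iff $W_1 \succ_{\calG_\ell(P)} W_2$ iff $\score{g_\ell}(W_1,P) > \score{g_\ell}(W_2,P)$ iff $\score{f_\ell}(W_1,A) > \score{f_\ell}(W_2,A)$. Since $\succeq_{\calF(A)}$ is complete, agreement of these two strict orders forces agreement on indifference as well, so $f_\ell(\cdot,\ell)$ implements $\calF$ on $\ell$-regular approval profiles, as claimed.

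This argument is essentially bookkeeping and I do not anticipate a genuine obstacle: the substance of the section lies in Lemmas~\ref{lemma:ordinalFunctionPreservesProperties} and~\ref{lem:g ell=f}. The only points that deserve a sentence of care are that $A$ indeed lies in the image of $\toapproval(\cdot,\ell)$ (immediate from $\ell$-regularity, which is exactly why the lemma restricts to such profiles) and that the particular choice of linear extensions $P(v)$ is irrelevant — which holds because, by Lemma~\ref{lem:g ell=f}, $g_\ell$ depends on a committee's position set only through how many of its members land in the top $\ell$ slots.
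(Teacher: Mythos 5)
Your proposal is correct and follows essentially the same route as the paper's proof: you construct a linear-order profile $P$ with $\toapproval(P,\ell)=A$ (the paper's $\toordinal(A,\ell)$), identify $\calF(A)$ with $\calG_\ell(P)$, and transport the scoring representation via Lemma~\ref{lem:g ell=f}. Your explicit verification that $\score{g_\ell}(W,P)=\score{f_\ell}(W,A)$ and the remark on the irrelevance of the chosen linear extensions merely spell out steps the paper leaves implicit.
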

\begin{proof}
For each $\ell$-regular approval profile $A$ we can create an ordinal profile $\toordinal(A,\ell)$ where voters put all approved candidates in their top $\ell$ positions (in some fixed arbitrary order) and in the next $(m -\ell)$ positions the candidates that they disapprove of (also in some fixed arbitrary order). Naturally, $\toapproval(\toordinal(A,\ell), \ell) = A$. 
Thus, a committee $W_1$ is preferred over $W_2$ in $A$ according to $\calF$ if and only if $W_1$ is preferred over $W_2$ in $\toordinal(A, \ell)$ according to $\calG_{\ell}$. Since $\calG_{\ell}$ is an LOC scoring rule, the previous statement holds if and only if $W_1$ has a higher score than $W_2$ according to the committee scoring function $g_{\ell}$. This is equivalent to $W_1$ having a higher score according to $f_{\ell}$ (Lemma~\ref{lem:g ell=f}).
We conclude that $W_1$ is preferred over $W_2$ in $A$ according to $\calF$ if and only $W_1$ has a higher score according to $f_\ell$.
Consequently, we have shown that $\calF$ is an ABC counting rule for $\ell$-regular approval profiles.
\end{proof}
As the construction in the proof of Lemma~\ref{lemma:on_l_regular_profiles} relies on $\toordinal(A,\ell)$ and so it applies only to profiles where each voter approves the same number of candidates, we need new ideas to prove that $\calF$ is an ABC counting rule on arbitrary profiles. We explain these ideas in the following section.

\subsection{$\calF$ is an ABC Counting Rule on Arbitrary Profiles}
\label{subsec:arbitrary}

We now generalize the result of Lemma~\ref{lem:implements-on-approval-profiles} for $\ell$-regular profiles to arbitrary approval profiles.
We will use here the following notation.

\begin{definition}
For an approval profile $A\in\calA(C,V)$ and $x \in [0,m]$ we write $\most(A, \ell)$ to denote the profile consisting of all votes $v\in V$ with $A(v)\leq \ell$, i.e., $\most(A, \ell)\in\calA(C,V')$ with $V'=\{v\in V\colon A(v)\leq \ell\}$ and $\most(A, \ell)(v)=A(v)$ for all $v\in V'$.
Analogously, we write $\exact(A, \ell)$ to denote the profile consisting of all votes $A(v)$, for $v\in V$ with $A(v)= \ell$.
\end{definition}
\noindent
Clearly, $\most(A, \ell)$ is $\ell$-bounded and $\exact(A', \ell)$ is $\ell$-regular.

Now, let $\{f_{\ell}\}_{\ell \leq m}$ be the family of counting functions witnessing that $\calF$, when applied to $\ell$-regular profiles, is an ABC counting rule (cf.\ Lemma~\ref{lemma:on_l_regular_profiles}). From $\{f_{\ell}\}_{\ell \leq m}$ we will now construct a single counting function $f$ that witnesses that $\calF$ is an ABC counting rule. 
Since $f$ and $f_\ell$ have to produce the same output on $\ell$-regular profiles, it would be tempting to define $f(x,\ell)=f_\ell(x,\ell)$.
However, this simple construction does not work. Instead, we will find constants $\gamma_1,\dots,\gamma_m$ such that $f(x,\ell)=\gamma_\ell \cdot f_\ell(x,\ell)$ and show that with this construction we indeed obtain a counting function implementing $\calF$.

For this construction, let us fix two arbitrary committees $W_1^*$, $W_2^*$ with the smallest possible size of the intersection. In particular, $W_1^* \cap W_2^* = \emptyset$ for $m \geq 2k$.
Let $W_1^* \setminus W_2^* = \{a_1, \ldots a_t\}$, and let $W_2^* \setminus W_1^* = \{b_1, \ldots b_t\}$.
By $\sigma^*$ we denote the permutation that swaps $a_1$ with $b_1$, $a_2$ with $b_2$, etc., and that is the identity elsewhere.

We will define $\gamma_1,\dots,\gamma_m$ inductively. 
For the base case we set $f(0,0)=0$. 
Now, let us assume that $f$ is defined on $[0,k]\times[0,\ell]$ and that $f$ implements $\calF$ on $\ell$-bounded profiles.
To choose $\gamma_{\ell+1}$, we distinguish the following three cases:

\begin{description}
\item[Case (A).] If in all $(\ell+1)$-regular profiles $A$ it holds that $W_1^* \sim_{\calF(A)} W_2^*$, then 
	we set $\gamma_{\ell+1}=0$.

\item[Case (B).] If we are not in Case (A) and in all $\ell$-bounded profiles $A$ it holds that $W_1^* \sim_{\calF(A)} W_2^*$, then
	we set $\gamma_{\ell+1}=1$.

\item[Case (C).] Otherwise, there exist a single-vote $(\ell+1)$-regular profile $A$ such that $W_1^* \neq_{\calF(A)} W_2^*$ and a single-vote $\ell$-bounded profile $A'$ such that $W_1^* \neq_{\calF(A')} W_2^*$.
Indeed, if for all $(\ell+1)$-regular single-vote profiles $A\in\calA(C,\{1\})$ it holds that $W_1^* \sim_{\calF(A)} W_2^*$, then by consistency this holds for all $(\ell+1)$-regular profiles, which is a precondition of Case~(A). Similarly, if for all $\ell$-bounded single-vote profiles $A\in\calA(C,\{1\})$ it holds that $W_1^* \sim_{\calF(A)} W_2^*$, then by consistency this holds for all $\ell$-bounded profiles (precondition of Case~(B)). Consequently, the profiles $A$ and $A'$ can be chosen to consist of single votes.

In the following, by slight abuse of notation, we identify a set of approved candidates with its corresponding single-vote profile.
Let $a_{\ell+1}^*\subseteq C$ be a vote such that
\begin{inparaenum}[(i)]
\item $|a_{\ell+1}^*|= \ell+1$,
\item $W_1^* \succ_{\calF(a_{\ell+1}^*)} W_2^*$, and
\item such that the difference between the scores of $W_1^*$ and $W_2^*$ is maximized.
\end{inparaenum}
Furthermore, let $b_{\ell+1}^*\subseteq C$ be a vote such that
\begin{inparaenum}[(i)]
\item $|b_{\ell+1}^*|\leq \ell$,
\item $W_1^* \succ_{\calF(b_{\ell+1}^*)} W_2^*$, and
\item such that the difference between the scores of $W_1^*$ and $W_2^*$ is maximized.
\end{inparaenum}
For each $x, y \in \naturals$ we define the profile $S(x, y)$ as:
\begin{align*}
S(x,y) = x\cdot \sigma^*(a_{\ell+1}^*) + y\cdot b_{\ell+1}^* \textrm{.}
\end{align*}
Let us define $t^*_{\ell+1}$ as:
\begin{align}
t^*_{\ell+1} = \sup \Big\{ \frac{x}{y} \colon W_1^* \succ_{S(x, y)} W_2^* \Big\} \textrm{,}\label{eq:def:tstar}
\end{align}
which is a well-defined positive real number as we show in Lemma~\ref{lemma:threshold}.
We define:
\begin{align*}
\gamma_{\ell+1} = \frac{\score{f}(W_1^*, b_{\ell+1}^*) - \score{f}(W_2^*, b_{\ell+1}^*)}{t^*_{\ell+1} \cdot \Big(\score{f_{\ell+1}}(W_1^*, a_{\ell+1}^*) - \score{f_{\ell+1}}(W_2^*, a_{\ell+1}^*)\Big)} \textrm{.}
\end{align*}
\end{description}

This concludes the construction of $f$. Let us now show that $t^*_{\ell+1}$ is a positive real number and that it defines a threshold:

\begin{lemma}\label{lemma:threshold}
The supremum $t^*_{\ell+1}$, as defined by Equation~\eqref{eq:def:tstar}, is a positive real number.
Furthermore, if $\nicefrac{x}{y} < t^*_{\ell+1}$, then $W_1^* \succ_{S(x, y)} W_2^*$. If $\nicefrac{x}{y} > t^*_{\ell+1}$, then $W_2^* \succ_{S(x, y)} W_1^*$.
\end{lemma}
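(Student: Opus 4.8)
The plan is to exploit that, in $S(x,y)$, the two bundles of votes pull in opposite directions, and that consistency forces the outcome to depend monotonically on the ratio $x/y$. Write $v_a := \sigma^*(a^*_{\ell+1})$ and $v_b := b^*_{\ell+1}$, so that $S(x,y) = x\cdot v_a + y\cdot v_b$. First I would record that $v_b$ favours $W_1^*$ over $W_2^*$ (this is immediate from the choice of $b^*_{\ell+1}$, i.e.\ $W_1^* \succ_{\calF(v_b)} W_2^*$), whereas $v_a$ favours $W_2^*$: the permutation $\sigma^*$ fixes $W_1^*\cap W_2^*$ pointwise and swaps each $a_i$ with $b_i$, so $\sigma^*(W_1^*)=W_2^*$ and $\sigma^*(W_2^*)=W_1^*$; applying neutrality to $W_1^*\succ_{\calF(a^*_{\ell+1})}W_2^*$ with the permutation $\sigma^*$ then yields $W_2^*\succ_{\calF(v_a)}W_1^*$.

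Next I would prove two structural facts about the map sending a pair $(x,y)$ with $y\ge 1$ to the outcome of the $W_1^*$-versus-$W_2^*$ comparison in $S(x,y)$: it depends only on the ratio $x/y$, and it is monotone in it. Ratio-invariance comes from consistency applied to $S(\lambda x,\lambda y)=\lambda\cdot S(x,y)$: part~(i) shows that multiplying a profile by a positive integer preserves either strict preference, and part~(ii) used in both directions shows it preserves equality. For monotonicity the key point is to phrase everything as \emph{adding} votes, since consistency never permits deleting votes: given rationals $q_1<q_2$, after rescaling the two representing pairs one may assume that a profile of ratio $q_1$ arises from a profile of ratio $q_2$ by adding a positive number of copies of $v_b$, and a profile of ratio $q_2$ arises from one of ratio $q_1$ by adding copies of $v_a$. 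Since $v_b$ favours $W_1^*$, consistency~(i) then gives: if $W_1^*\succeq_{S(x,y)} W_2^*$ for some pair of ratio $q_2$, then $W_1^*\succ_{S(x,y)} W_2^*$ for every pair of ratio $<q_2$; symmetrically, if $W_2^*\succeq_{S(x,y)} W_1^*$ for some pair of ratio $q_1$, then $W_2^*\succ_{S(x,y)} W_1^*$ for every pair of ratio $>q_1$. Applying the same reasoning to an equality $W_1^* =_{S(x,y)} W_2^*$ shows that equality can hold at at most one ratio $q_0$, with strict preference for $W_1^*$ at all smaller ratios and for $W_2^*$ at all larger ones. Thus the set $Q^+$ of ratios at which $W_1^*$ wins is downward closed, and everything above $\sup Q^+$ (apart from a possible $q_0$) lies in the analogously upward-closed set $Q^-$ of ratios at which $W_2^*$ wins.

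I would then invoke continuity twice. Applying continuity to $W_1^*\succ_{\calF(v_b)} W_2^*$ with fixed profile $v_a$ produces $n$ with $W_1^*\succ_{S(1,n)}W_2^*$, so $Q^+$ is non-empty and contains a positive ratio. Applying continuity to $W_2^*\succ_{\calF(v_a)} W_1^*$ with fixed profile $v_b$ produces $n$ with $W_2^*\succ_{S(n,1)}W_1^*$; by upward-closedness every ratio $\ge n$ lies in $Q^-$, so $Q^+\subseteq(0,n)$ is bounded above. Hence $t^*_{\ell+1}=\sup Q^+$ is a well-defined real number, and it is positive because $Q^+$ contains a positive element. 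The two threshold properties then follow quickly: if $x/y<t^*_{\ell+1}$, choose $q\in Q^+$ with $q>x/y$; downward-closedness gives $x/y\in Q^+$, i.e.\ $W_1^*\succ_{S(x,y)}W_2^*$. If $x/y>t^*_{\ell+1}$, then $x/y\notin Q^+$, and $x/y$ cannot be the equality ratio $q_0$ (otherwise $Q^+$ would be precisely the set of ratios $<x/y$, forcing $t^*_{\ell+1}=x/y$), so necessarily $W_2^*\succ_{S(x,y)}W_1^*$.

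The main obstacle I anticipate is the bookkeeping in the ratio-invariance and monotonicity step: because consistency only combines (adds) profiles and never removes votes, every comparison between $S(x,y)$ and $S(x',y')$ must be routed through a suitable common rescaling of the two pairs so that one profile literally sits inside the other together with a batch of extra $v_a$- or $v_b$-votes. One also has to keep the strict, tied, and reversed outcomes apart throughout and verify that the tied case is confined to a single ratio; once that is in place the remaining argument is a short manipulation of suprema.
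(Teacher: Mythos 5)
Your proof is correct and follows essentially the same route as the paper's: continuity supplies one profile of each type where $W_1^*$ (resp.\ $W_2^*$) wins, and consistency applied to common rescalings of $S(x,y)$ plus extra batches of $v_a$- or $v_b$-votes transfers preferences across ratios. You merely package the paper's inline decompositions (e.g.\ $S(xx',x'y)=S(xx',xy')+S(0,x'y-xy')$) into an explicit ratio-invariance/monotonicity lemma, and you correctly make explicit the neutrality step giving $W_2^*\succ_{\calF(\sigma^*(a^*_{\ell+1}))}W_1^*$, which the paper uses implicitly.
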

\begin{proof}
Let us argue that $t^*_{\ell+1}$ is well defined. By continuity there exists $y$ such that $W_1^* \succ_{S(1, y)} W_2^*$. Consequently, the set in \eqref{eq:def:tstar} is nonempty. Also by continuity, there exists $x$ such that $W_2^* \succ_{S(x, 1)} W_1^*$. Further, we observe that for each $x'$, $y'$ with $\nicefrac{x'}{y'} > x$ it also holds that $W_2^* \succ_{S(x', y')} W_1^*$. Indeed, since $S(x', y') = S(xy', y') + S(x' - xy', 0)$, we infer that in such case $S(x', y')$ can be split into $y'$ copies of $S(x, 1)$ and $x' - xy'$ copies of $\sigma^*(a_{\ell+1}^*)$. By consistency we get $W_2^* \succ_{S(x', y')} W_1^*$. Thus, the set in \eqref{eq:def:tstar} is bounded, and so $t^*_{\ell+1}$ is a positive real number.

To show the second statement, let us assume that $\nicefrac{x}{y} < t^*_{\ell+1}$. From the definition of $t^*_{\ell+1}$ we infer that there exist $x',y' \in \naturals$, such that $\nicefrac{x}{y} < \nicefrac{x'}{y'}$ and such that $W_1^* \succ_{S(x', y')} W_2^*$. By consistency, it also holds that $W_1^* \succ_{S(xx', xy')} W_2^*$.
Since $W_1^* \succ_{S(0, 1)} W_2^*$ and $x'y - xy' > 0$ and we get that $W_1^* \succ_{S(0, x'y - xy')} W_2^*$. Now, observe that
\begin{align*}
S(xx', x'y) = S(xx', xy') + S(0, x'y - xy').
\end{align*}
Thus, from consistency infer that 
$W_1^* \succ_{S(xx', x'y)} W_2^*$. Again, by consistency we get that $W_1^* \succ_{S(x, y)} W_2^*$.

Next, let us assume that $\nicefrac{x}{y} > t^*_{\ell+1}$. Then, there exist $x',y' \in \naturals$, such that $\nicefrac{x}{y} > \nicefrac{x'}{y'}$ and such that $W_2^* \succ_{S(x', y')} W_1^*$. Similarly as before, we get that $W_2^* \succ_{S(x'y, yy')} W_1^*$ and since $xy' - x'y > 0$ we get that $W_2^* \succ_{S(xy' - x'y, 0)} W_1^*$. Since $S(xy', yy') = S(x'y, yy') + S(xy' - x'y, 0)$,
consistency implies that
$W_2^* \succ_{S(xy', yy')} W_1^*$. Finally, we get that $W_2^* \succ_{S(x, y)} W_1^*$, which completes the proof.
\end{proof}

In the remainder of this section, we prove that $f$ is indeed a counting function that implements $\calF$ and thus $\calF$ is an ABC counting rule.
We prove this for increasingly general profiles, starting with very simple ones, and at first we prove a slightly weaker relation between $f$ and $\calF$.

\begin{lemma}\label{lemma:specificVotes}
Let us fix $\ell \in [m-1]$. Let $A\in\calA(C,V)$ be an approval profile with $A(v)\in \{a_{\ell+1}^*, b_{\ell+1}^*, \sigma^*(a_{\ell+1}^*), \sigma^*(b_{\ell+1}^*)\}$ for all $v\in V$. Then:
\begin{align*}
\score{f}(W_1^*, A) > \score{f}(W_2^*, A) \implies W_1^* \succ_{\calF(A)} W_2^* \textrm{.}
\end{align*}
\end{lemma}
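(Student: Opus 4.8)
The plan is to reduce an arbitrary profile $A$ of the stated special form to a profile of the form $S(x,y)$ up to the permutation $\sigma^*$, and then invoke Lemma~\ref{lemma:threshold}. (Throughout we are in Case~(C), since otherwise the votes $a^*_{\ell+1},b^*_{\ell+1}$ do not exist.) First I would record the basic arithmetic. Since $\sigma^*$ swaps $a_i$ with $b_i$ and fixes $W_1^*\cap W_2^*$ pointwise, we have $\sigma^*(W_1^*)=W_2^*$ and $\sigma^*(W_2^*)=W_1^*$; consequently, writing $\delta_a=\score{f}(W_1^*,a^*_{\ell+1})-\score{f}(W_2^*,a^*_{\ell+1})$ and $\delta_b=\score{f}(W_1^*,b^*_{\ell+1})-\score{f}(W_2^*,b^*_{\ell+1})$, the vote $\sigma^*(a^*_{\ell+1})$ contributes $-\delta_a$ and $\sigma^*(b^*_{\ell+1})$ contributes $-\delta_b$ to $\score{f}(W_1^*,\cdot)-\score{f}(W_2^*,\cdot)$. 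By the inductive assumption that $f$ already implements $\calF$ on $\ell$-bounded profiles, and since $b^*_{\ell+1}$ is $\ell$-bounded with $W_1^*\succ_{\calF(b^*_{\ell+1})}W_2^*$, we get $\delta_b>0$; by Lemma~\ref{lemma:on_l_regular_profiles} and $W_1^*\succ_{\calF(a^*_{\ell+1})}W_2^*$ we get $\score{f_{\ell+1}}(W_1^*,a^*_{\ell+1})-\score{f_{\ell+1}}(W_2^*,a^*_{\ell+1})>0$, hence $\gamma_{\ell+1}>0$ and, substituting the definition of $\gamma_{\ell+1}$ into $\delta_a=\gamma_{\ell+1}\bigl(\score{f_{\ell+1}}(W_1^*,a^*_{\ell+1})-\score{f_{\ell+1}}(W_2^*,a^*_{\ell+1})\bigr)$, the key identity $\delta_a=\delta_b/t^*_{\ell+1}$ with both sides positive.

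Next I would cancel pairs using neutrality. The single-vote profiles $a^*_{\ell+1}+\sigma^*(a^*_{\ell+1})$ and $b^*_{\ell+1}+\sigma^*(b^*_{\ell+1})$ are setwise fixed by $\sigma^*$, so neutrality together with anonymity forces $W_1^*=_{\calF(\cdot)}W_2^*$ on each of them, and each contributes $0$ to the $f$-score difference. Viewing $A$ up to anonymity as determined by the multiplicities of its four vote types, I would delete matching $a^*_{\ell+1}/\sigma^*(a^*_{\ell+1})$ pairs and matching $b^*_{\ell+1}/\sigma^*(b^*_{\ell+1})$ pairs, arriving at a profile $A'$ in which at most one of $\{a^*_{\ell+1},\sigma^*(a^*_{\ell+1})\}$ and at most one of $\{b^*_{\ell+1},\sigma^*(b^*_{\ell+1})\}$ appears. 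By consistency (adding the $\calF$-neutral pairs back), it suffices to prove $W_1^*\succ_{\calF(A')}W_2^*$, and since the deleted pairs are score-neutral we still have $\score{f}(W_1^*,A')>\score{f}(W_2^*,A')$.

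Finally I would run a short case analysis on the four possible ``pure'' shapes of $A'$: (i) $\alpha$ copies of $a^*_{\ell+1}$ plus $\beta$ copies of $b^*_{\ell+1}$, where consistency directly gives $W_1^*\succ_{\calF(A')}W_2^*$ (the profile is nonempty because the score difference is nonzero); (ii) $\alpha$ copies of $\sigma^*(a^*_{\ell+1})$ plus $\beta$ copies of $\sigma^*(b^*_{\ell+1})$, where the score difference equals $-\alpha\delta_a-\beta\delta_b\leq 0$, contradicting the hypothesis, so this shape cannot occur; (iii) $\alpha$ copies of $a^*_{\ell+1}$ plus $\beta$ copies of $\sigma^*(b^*_{\ell+1})$, where applying $\sigma^*$ turns $A'$ into $S(\alpha,\beta)$ and swaps $W_1^*\leftrightarrow W_2^*$, while the hypothesis $\alpha\delta_a-\beta\delta_b>0$ combined with $\delta_a=\delta_b/t^*_{\ell+1}$ yields $\alpha/\beta>t^*_{\ell+1}$, so Lemma~\ref{lemma:threshold} gives $W_2^*\succ_{S(\alpha,\beta)}W_1^*$, i.e.\ $W_1^*\succ_{\calF(A')}W_2^*$; (iv) $\alpha$ copies of $\sigma^*(a^*_{\ell+1})$ plus $\beta$ copies of $b^*_{\ell+1}$, i.e.\ $A'=S(\alpha,\beta)$, where $-\alpha\delta_a+\beta\delta_b>0$ gives $\alpha/\beta<t^*_{\ell+1}$ and Lemma~\ref{lemma:threshold} gives $W_1^*\succ_{S(\alpha,\beta)}W_2^*$ directly. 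The degenerate subcases $\alpha=0$ or $\beta=0$ are handled by consistency applied to the surviving single vote type, which favours $W_1^*$.

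I expect the main obstacle to be the bookkeeping in the last two steps: making sure the sign conventions line up so that after cancellation the reduced profile really is $S(\alpha,\beta)$ (possibly after applying $\sigma^*$), and that the strict inequality on $f$-scores translates into a strict inequality between $\alpha/\beta$ and $t^*_{\ell+1}$ — which is exactly where the identity $\delta_a=\delta_b/t^*_{\ell+1}$, the content of the definition of $\gamma_{\ell+1}$, is used.
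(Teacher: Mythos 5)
Your proposal is correct and follows essentially the same route as the paper's proof: cancel $\sigma^*$-matched pairs via neutrality and consistency, reduce to a profile of the form $S(x,y)$ (possibly after applying $\sigma^*$), and convert the $f$-score inequality into a comparison of $x/y$ with $t^*_{\ell+1}$ using the identity $\delta_a=\delta_b/t^*_{\ell+1}$ encoded in the definition of $\gamma_{\ell+1}$, then invoke Lemma~\ref{lemma:threshold}. The only nit is the final parenthetical about degenerate subcases: when the sole surviving vote type is $\sigma^*(a_{\ell+1}^*)$ or $\sigma^*(b_{\ell+1}^*)$ it does not favour $W_1^*$, but then the score hypothesis is violated (as in your shape (ii)), so nothing is lost.
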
 
\begin{proof}
We start by noting that if $b_{\ell+1}^*$ and $a_{\ell+1}^*$ are defined, then Case (C) occurred when defining $\gamma_{\ell+1}$. In particular, $t^*_{\ell+1}$ has been defined and Lemma~\ref{lemma:threshold} is applicable.

First we show that if $A$ contains both $a_{\ell+1}^*$ and $\sigma^*(a_{\ell+1}^*)$, then after removing both from $A$ the relative order of $W_1^*$ and $W_2^*$ does not change. Without loss of generality, let us assume that $W_1^* \succ_{\calF(A)} W_2^*$ and consider the profile $Q$ that consist of two votes, $a_{\ell+1}^*$ and $\sigma^*(a_{\ell+1}^*)$. By neutrality, $W_1^*$ and $W_2^*$ are equally good with respect to $Q$. If $W_2^* \succeq_{\calF(A - Q)} W_1^*$, then by consistency we would get that $W_2^* \succeq_{\calF(A)} W_1^*$, a contradiction. By the same argument we observe that if $A$ contains $b_{\ell+1}^*$ and $\sigma^*(b_{\ell+1}^*)$, then after removing them from $A$ the relative order of $W_1^*$ and $W_2^*$ does not change. 
Further if $A$ contains only votes $b_{\ell+1}^*$ and $a_{\ell+1}^*$, then by consistency we can infer that $W_1^*$ is preferred over $W_2^*$ in~$A$. Also, $A$ cannot contain only votes $\sigma^*(b_{\ell+1}^*)$ and $\sigma^*(a_{\ell+1}^*)$, since in both these single-vote profiles the score of $W_2^*$ is greater than the score of $W_1^*$ (this follows from Lemma~\ref{lemma:on_l_regular_profiles} and from the fact that $f$ for $\ell$-regular profiles is a linear transformation of an appropriate counting function $f_\ell$). 

The above reasoning shows that without loss of generality we can assume that in $A$ there are either only the votes of types $b_{\ell+1}^*$ and $\sigma^*(a_{\ell+1}^*)$ or only the votes of types $a_{\ell+1}^*$ and $\sigma^*(b_{\ell+1}^*)$. Let us consider the first case, and let us assume that in $A$ there are $y_A$ votes of type $b_{\ell+1}^*$ and $x_A$ votes of type $\sigma^*(a_{\ell+1}^*)$. Since $\score{f}(W_1^*, A) > \score{f}(W_2^*, A)$, we get that:
\begin{align*}
&y_A \cdot \score{f}(W_1^*, b_{\ell+1}^*) + x_A \cdot \score{f}(W_1^*, \sigma^*(a_{\ell+1}^*)) > 
y_A \cdot \score{f}(W_2^*, b_{\ell+1}^*) + x_A \cdot \score{f}(W_2^*, \sigma^*(a_{\ell+1}^*)) \text{.}
\end{align*}
Thus, from the definition of $\sigma^*$ we get that:
\begin{align*}
y_A \cdot \score{f}(W_1^*, b_{\ell+1}^*) + x_A \cdot \score{f}(W_2^*, a_{\ell+1}^*) >  y_A \cdot \score{f}(W_2^*, b_{\ell+1}^*) + x_A \cdot \score{f}(W_1^*, a_{\ell+1}^*) \text{.}
\end{align*}
Which is equivalent to:
\begin{align*}
x_A \cdot \big(\score{f}(W_1^*, a_{\ell+1}^*) - \score{f}(W_2^*, a_{\ell+1}^*) \big) < y_A \cdot \big( \score{f}(W_1^*, b_{\ell+1}^*) - \score{f}(W_2^*, b_{\ell+1}^*) \big) \text{.}
\end{align*}
From the above inequality we get that:
\begin{align*}
\frac{x_A}{y_A} < \frac{\score{f}(W_1^*, b_{\ell+1}^*) - \score{f}(W_2^*, b_{\ell+1}^*)}{\score{f}(W_1^*, a_{\ell+1}^*) - \score{f}(W_2^*, a_{\ell+1}^*)}
= \frac{\score{f}(W_1^*, b_{\ell+1}^*) - \score{f}(W_2^*, b_{\ell+1}^*)}{\gamma_{\ell+1}\big(\score{f_{\ell+1}}(W_1^*, a_{\ell+1}^*) - \score{f_{\ell+1}}(W_2^*, a_{\ell+1}^*)\big)} = t^*_{\ell+1} \textrm{.}
\end{align*}
Observe that $A = S(x_A, y_A)$, so since $\nicefrac{x_A}{y_A} < t^*_{\ell+1}$, from Lemma~\ref{lemma:threshold} we infer that $W_1^* \succ_{\calF(A)} W_2^*$.

Now, let us assume that $A$ consists only of the votes of types $a_{\ell+1}^*$ and $\sigma^*(b_{\ell+1}^*)$. 
In such case the profile $\sigma^*(A)$ consists only of votes of types $b_{\ell+1}^*$ and $\sigma^*(a_{\ell+1}^*)$. Further, $\score{f}(W_2^*, \sigma^*(A)) > \score{f}(W_1^*, \sigma^*(A))$. Similarly as before, let us assume that in $\sigma^*(A)$ there are $y_A$ votes of type $b_{\ell+1}^*$ and $x_A$ votes of type $\sigma^*(a_{\ell+1}^*)$. By similar reasoning as before we infer that $\nicefrac{x_A}{y_A} > t^*_{\ell+1}$, and by Lemma~\ref{lemma:threshold} that $W_2^* \succ_{\calF(\sigma^*(A))} W_1^*$. From this, by neutrality, it follows that $W_1^* \succ_{\calF(A)} W_2^*$, which completes the proof.
\end{proof}

Next, we generalize Lemma~\ref{lemma:specificVotes} to arbitrary profiles, yet we still focus on comparing the two distinguished profiles $W_1^*$ and $W_2^*$.

\begin{lemma}\label{lemma:WqW2Fixed}
For all $A\in \calA(C,V)$ it holds that
\begin{align*}
\score{f}(W_1^*, A) > \score{f}(W_2^*, A) \implies W_1^* \succ_{\calF(A)} W_2^* \textrm{.}
\end{align*}
\end{lemma}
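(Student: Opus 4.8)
The plan is to prove the statement by induction on the largest approval-set size $\ell$ occurring in $A$; writing $P(\ell)$ for the lemma restricted to $\ell$-bounded profiles, the lemma is $P(m)$, and $P(0)$ is vacuous since $f(0,0)=0$. Before the inductive step I would upgrade $P(\ell)$ to the full equivalence on $\ell$-bounded profiles: since $\sigma^*(W_1^*)=W_2^*$, applying $P(\ell)$ to $\sigma^*(B)$ turns a negative $f$-advantage of $B$ into $W_2^*\succ_{\calF(B)}W_1^*$, while a short argument using continuity, consistency and $P(\ell)$ (run against a fixed $\ell$-bounded vote of nonzero advantage, or trivially if none exists) turns a zero $f$-advantage into a tie. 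Thus on $\ell$-bounded profiles $\calF$ compares $W_1^*$ with $W_2^*$ exactly by the sign of $\score{f}(W_1^*,\cdot)-\score{f}(W_2^*,\cdot)$.

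For the inductive step I would split a given $(\ell+1)$-bounded $A$ with $\score{f}(W_1^*,A)>\score{f}(W_2^*,A)$ as $A=B+C$ with $B=\most(A,\ell)$ ($\ell$-bounded) and $C=\exact(A,\ell+1)$ ($(\ell+1)$-regular), and let $\delta_B,\delta_C$ be the two $f$-advantages, so $\delta_B+\delta_C>0$. In Case (A) of the construction $f$ vanishes on level $\ell+1$, hence $\delta_C=0$, $\calF$ ties on $C$, $\delta_B>0$, and consistency finishes; in Case (B) all $\ell$-bounded profiles tie, hence $\delta_B=0$, $\delta_C>0$, and since $f$ on level $\ell+1$ is $f_{\ell+1}$, which implements $\calF$ on $(\ell+1)$-regular profiles (Lemma~\ref{lemma:on_l_regular_profiles}), $\calF$ favours $W_1^*$ on $C$, so again consistency finishes. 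In Case (C), Lemma~\ref{lemma:on_l_regular_profiles} together with $\gamma_{\ell+1}>0$ shows $\calF$ compares $W_1^*,W_2^*$ on $C$ by the sign of $\delta_C$, and the upgraded $P(\ell)$ does the same for $B$ and $\delta_B$; when $\delta_B,\delta_C$ are both $\ge 0$ consistency finishes immediately.

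The crux is Case (C) with opposite signs, say $\delta_B>0>\delta_C$ with $\delta_B>|\delta_C|$ (the case $\delta_B<0<\delta_C$ is symmetric, cancelling $B$ in place of $C$). Here I would use that the calibration of $\gamma_{\ell+1}$ makes $t^*_{\ell+1}=\delta_b/d_a$, where $\delta_b=\score{f}(W_1^*,b^*_{\ell+1})-\score{f}(W_2^*,b^*_{\ell+1})>0$ and $d_a=\score{f}(W_1^*,a^*_{\ell+1})-\score{f}(W_2^*,a^*_{\ell+1})>0$. Passing to $NA$ for large $N$ (harmless, as a weak reversal on $A$ propagates to $NA$ by consistency), choose positive integers $p,q$ with $N|\delta_C|\le q\,d_a<p\,\delta_b<N\delta_B$, which is possible for large $N$ since $\delta_B>|\delta_C|$. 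Then $NB+p\,\sigma^*(b^*_{\ell+1})$ is $\ell$-bounded with positive $f$-advantage $N\delta_B-p\delta_b$, so $P(\ell)$ gives $W_1^*\succ_{\calF} W_2^*$ on it; $NC+q\,a^*_{\ell+1}$ is $(\ell+1)$-regular with nonnegative $f$-advantage $N\delta_C+q\,d_a$, so Lemma~\ref{lemma:on_l_regular_profiles} gives $W_1^*\succeq_{\calF} W_2^*$ on it; adding them by consistency yields $W_1^*\succ_{\calF(NA+Y)}W_2^*$ with $Y=p\,\sigma^*(b^*_{\ell+1})+q\,a^*_{\ell+1}$. But $\sigma^*(Y)=S(q,p)=q\,\sigma^*(a^*_{\ell+1})+p\,b^*_{\ell+1}$ has $f$-advantage $p\delta_b-q\,d_a>0$ and is built only from votes $b^*_{\ell+1}$ and $\sigma^*(a^*_{\ell+1})$, so Lemma~\ref{lemma:specificVotes} and neutrality give $W_2^*\succ_{\calF(Y)}W_1^*$. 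Were $W_2^*\succeq_{\calF(NA)}W_1^*$, consistency would force $W_2^*\succeq_{\calF(NA+Y)}W_1^*$, a contradiction; hence $W_1^*\succ_{\calF(NA)}W_2^*$ and therefore $W_1^*\succ_{\calF(A)}W_2^*$.

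I expect the opposite-sign subcase of Case (C) to be the main obstacle: consistency alone cannot decide a profile whose parts pull in opposite directions, so the quantitative inequality $\delta_B>|\delta_C|$ must be converted into a combinatorial cancellation, and this only works because the threshold $t^*_{\ell+1}$ was defined so that the residual profile $S(q,p)$ lands strictly on the $W_1^*$-side of it, letting Lemma~\ref{lemma:specificVotes} close the loop. The bookkeeping needed to pick $N,p,q$, and the positivity of $\delta_b$ and $d_a$ in Case (C), are routine.
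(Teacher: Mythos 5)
Your proof is correct, but it departs from the paper's argument in the only place that matters, namely Case~(C). The paper proves that case by contradiction via an iterative elimination: assuming $W_2^*\succeq_{\calF(A)}W_1^*$, it removes one non-special vote type at a time, each time mixing in calibrated multiples of $b_{\ell+1}^*$ or $a_{\ell+1}^*$ so as to preserve both the sign of the score gap (up to a controlled $\nicefrac{\epsilon}{2}$ loss) and the $\calF$-comparison, until only votes from $\{a_{\ell+1}^*,b_{\ell+1}^*,\sigma^*(a_{\ell+1}^*),\sigma^*(b_{\ell+1}^*)\}$ remain and Lemma~\ref{lemma:specificVotes} yields the contradiction. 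You instead split $A$ once into $\most(A,\ell)$ and $\exact(A,\ell+1)$ and, in the opposite-sign subcase, add a single calibrated packet $Y=p\,\sigma^*(b^*_{\ell+1})+q\,a^*_{\ell+1}$ to $NA$ so that each half becomes decidable (by the inductive hypothesis, respectively Lemma~\ref{lemma:on_l_regular_profiles}), while $Y$ itself is decided the opposite way by Lemma~\ref{lemma:specificVotes} because $\nicefrac{q}{p}<t^*_{\ell+1}=\nicefrac{\delta_b}{d_a}$; consistency then closes the contradiction in one shot. This is essentially the decomposition the paper reserves for Lemma~\ref{lemma:arbitraryCommittees}, imported one lemma earlier; it buys a shorter, non-iterative argument with all the quantitative bookkeeping concentrated in the choice of $N$, $p$, $q$ (and I checked that such a choice exists precisely because $\delta_B>|\delta_C|$). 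The points you wave at---the positivity of $\delta_b$ and $d_a$, and the ``upgrade'' of the inductive hypothesis to a biconditional (tie iff zero advantage) on $\ell$-bounded profiles---are used just as implicitly by the paper itself (e.g.\ in its Case~(B), where score equality is inferred from an $\calF$-tie), so you are not assuming more than the original proof does.
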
 
\begin{proof}
We prove this statement by induction on $\ell$-bounded profiles.
For $0$-bounded profiles $A$ this is trivial since $\score{f}(W_1^*, A) > \score{f}(W_2^*, A)$ cannot hold.

Assume that the statement holds for $\ell$-bounded profiles and assume that $\score{f}(W_1^*, A) > \score{f}(W_2^*, A)$.
If Case (A) was applicable when defining $\gamma_{\ell+1}$, i.e., if $\gamma_{\ell+1}=0$, then $\score{f}(W_1^*, A) > \score{f}(W_2^*, A)$ implies $\score{f}(W_1^*, \bounded(A,\ell)) > \score{f}(W_2^*, \bounded(A,\ell))$ since the score of $(\ell+1)$-regular profiles is $0$. 
This implies by the induction hypothesis that $W_1^* \succ_{\calF(\bounded(A,\ell))} W_2^*$.
Furthermore, since Case~(A) was applicable, $W_1^* \sim_{\calF(\regular(A,\ell+1))} W_2^*$.
Since $A=\bounded(A,\ell)+\regular(A,\ell+1)$, consistency yields that $W_1^* \succ_{\calF(A)} W_2^*$.

In Case (B), we know that $W_1^* \sim_{\calF(A)} W_2^*$ for all $\ell$-bounded profiles.
Hence $W_1^* \sim_{\calF(\bounded(A,\ell))} W_2^*$.
By our induction hypothesis, this implies that $\score{f}(W_1^*, \bounded(A,i)) = \score{f}(W_2^*, \bounded(A,i))$.
Hence $\score{f}(W_1^*, \regular(A,\ell+1)) > \score{f}(W_2^*, \regular(A,\ell+1))$.
Recall that Lemma~\ref{lemma:on_l_regular_profiles} states that $f_{\ell+1}$ implements $\calF$ on $(\ell+1)$-regular profiles.
Since $\regular(A,\ell+1)$ is an $(\ell+1)$-regular profile and $f(x,\ell+1)=f_{\ell+1}(x,\ell+1)$, in particular $\score{f}(W_1^*, \regular(A,\ell+1)) > \score{f}(W_2^*, \regular(A,\ell+1))$ implies $W_1^* \succ_{\calF(\regular(A,\ell+1))} W_2^*$.
Furthermore, by consistency, $W_1^*$ has the same relative position as $W_2^*$ in $\calF(\regular(A,\ell+1))$ and $\calF(A)$, which in turn implies $W_1^* \succ_{\calF(A)} W_2^*$.

In Case (C), for the sake of contradiction let us assume that $W_2^* \succeq_{\calF(A)} W_1^*$.
Let us take an arbitrary vote $v\in V$ with $A(v)\notin \{b_{\ell+1}^*, a_{\ell+1}^*, \sigma^*(b_{\ell+1}^*), \sigma^*(a_{\ell+1}^*)\}$.
We will show in the following that there exists a profile $A'$ with $\set(A')=\set(A)\setminus \{A(v)\}$, $\score{f}(W_1^*, A') > \score{f}(W_2^*, A')$, and $W_2^* \succeq_{\calF(A')} W_1^*$.
We then repeat this step until we obtain a profile $A''$ with $\set(A'')=\{b_{\ell+1}^*, a_{\ell+1}^*, \sigma^*(b_{\ell+1}^*), \sigma^*(a_{\ell+1}^*)\}$.
Still, it holds that $\score{f}(W_1^*, A'') > \score{f}(W_2^*, A'')$ and $W_2^* \succeq_{\calF(A'')} W_1^*$, but that contradicts Lemma~\ref{lemma:specificVotes}.
Consequently, $W_1^* \succ_{\calF(A)} W_2^*$ has to hold.

Let us now show that there exists a profile $A'$ with $\set(A')=\set(A)\setminus \{A(v)\}$, $\score{f}(W_1^*, A') > \score{f}(W_2^*, A')$, and $W_2^* \succeq_{\calF(A')} W_1^*$.
If $W_1^* \sim_{\calF(A(v))} W_2^*$, then by consistency the relative order of $W_1^*$ and $W_2^*$ in $\calF(A')$ is the same as in $\calF(A)$. Also, since the scores of committees $W_1^*$ and $W_2^*$ are the same in $v$ (cf. Lemma~\ref{lemma:on_l_regular_profiles}), we get that $\score{f}(W_1^*, A') > \score{f}(W_2^*, A')$.

Let us now consider the case that $W_1^*\succ_{\calF(A(v))} W_2^*$. Let $n_v = |\{v'\in V: A(v') = A(v)\}|$. We set
\begin{align}\label{eq:scoreW1W2Epsilon}
\epsilon = \score{f}(W_1^*, A) - \score{f}(W_2^*, A) > 0 \textrm{.}
\end{align}
We distinguish two cases: $|A(v)|\leq\ell$ and $|A(v)|=\ell+1$. Let us consider $|A(v)|\leq\ell$ first.
We observe that there exist values $x, y \in \naturals$ such that:
\begin{align}\label{eq:scoreW1W2}
\begin{split}
0 < \frac{x}{y} \big(\score{f}(W_1^*, \sigma^*(b_{\ell+1}^*)) - \score{f}(W_2^*, \sigma^*(b_{\ell+1}^*))\big) + 
n_v \big(\score{f}(W_1^*, v) - \score{f}(W_2^*, v)\big) < \frac{\epsilon}{2} \textrm{.}
\end{split}
\end{align}
Now, consider a profile $B = y\cdot A+x\cdot \sigma^*(b_{\ell+1}^*)+x\cdot b_{\ell+1}^*$. By consistency, $W_2^* \succeq_{\calF(B)} W_1^*$. Next, let us consider a profile $Q= x\cdot \sigma^*(b_{\ell+1}^*) + y\cdot n_v\cdot A(v)$. From Equality~\eqref{eq:scoreW1W2} we see that $W_1^*$ has a higher score in $Q$ than $W_2^*$.
Since $Q$ is $\ell$-bounded, by our inductive assumption we get that $W_1^* \succ_{\calF(Q)} W_2^*$. Consequently, by consistency we get that $W_2^*\succ_{\calF(B-Q)}W_1^*$ since otherwise $W_1^* \succ_{\calF(B)} W_2^*$, a contradiction. Further, from Equalities~\eqref{eq:scoreW1W2Epsilon} and \eqref{eq:scoreW1W2} we get that in $B-Q$ the score of $W_1^*$ is greater than the score of $W_2^*$, which can be seen as follows:
\begin{align*}
&\score{f}(W_1^*, B-Q) - \score{f}(W_2^*, B-Q) \\
& \hspace{2cm} = \score{f}(W_1^*, B) - \score{f}(W_2^*, B) - (\score{f}(W_1^*, Q) - \score{f}(W_2^*, Q)) \\
& \hspace{2cm} = y\epsilon - (\score{f}(W_1^*, Q) - \score{f}(W_2^*, Q)) > \frac{y \epsilon}{2} \textrm{.}
\end{align*}
We obtained the profile $B-Q = y\cdot A + x(\sigma^*(b_{\ell+1}^*) + b_{\ell+1}^*)-x\cdot \sigma^*(b_{\ell+1}^*) - y\cdot n_v\cdot A(v) = y\cdot (A-n_v \cdot A(v)) + x\cdot b_{\ell+1}^*$, for which $\set(B-Q)=\set(A)\setminus\{A(v)\}$.
Furthermore, the relative order of $W_1^*$ and $W_2^*$ in $\calF(B-Q)$ is the same as in $\calF(A)$, and $\score{f}(W_1^*, B-Q) > \score{f}(W_2^*, B-Q)$.

Let us now turn to the case that $|A(v)|=\ell+1$.
Similar to before, we choose $x, y \in \naturals$ such that:
\begin{align}\label{eq:scoreW1W2-2}
\begin{split}
0 < \frac{x}{y} \big(\score{f}(W_1^*, \sigma^*(a_{\ell+1}^*)) - \score{f}(W_2^*, \sigma^*(a_{\ell+1}^*))\big) + 
n_v \big(\score{f}(W_1^*, v) - \score{f}(W_2^*, v)\big) < \frac{\epsilon}{2} \textrm{.}
\end{split}
\end{align}
Now, consider a profile $B = y\cdot A+x\cdot \sigma^*(a_{\ell+1}^*)+x\cdot a_{\ell+1}^*$ for which, by consistency, $W_2^* \succeq_{\calF(B)} W_1^*$ holds. Let $Q= x\cdot \sigma^*(a_{\ell+1}^*) + y\cdot n_v\cdot A(v)$. From Equality~\eqref{eq:scoreW1W2-2} we see that $W_1^*$ has a higher score in $Q$ than $W_2^*$.
Since $Q$ is $(\ell+1)$-regular, Lemma~\ref{lemma:on_l_regular_profiles} gives us that $W_1^* \succ_{\calF(Q)} W_2^*$. As before, by consistency we get that $W_2^*\succ_{\calF(B-Q)}W_1^*$, and from Equalities~\eqref{eq:scoreW1W2Epsilon} and \eqref{eq:scoreW1W2-2} we get that 
$\score{f}(W_1^*, B-Q) > \score{f}(W_2^*, B-Q)$.
Hence, also in this case, we have obtained the profile $B-Q$, for which $\set(B-Q)=\set(A)\setminus\{A(v)\}$, the relative order of $W_1^*$ and $W_2^*$ in $\calF(B-Q)$ is the same as in $\calF(A)$, and $\score{f}(W_1^*, B-Q) > \score{f}(W_2^*, B-Q)$.

Finally, if $W_2^*\succ_{\calF(A(v))} W_1^*$ in $v$, we can repeat the above reasoning, but applying $\sigma*$ to all occurrences of  $b_{\ell+1}^*$, $a_{\ell+1}^*$, $\sigma^*(b_{\ell+1}^*)$, and $\sigma^*(a_{\ell+1}^*)$.
\end{proof}

Before we proceed further, we establish the existence of two particular profiles $A_\ell^*$ and $B_\ell^*$, that we will need for proving the most general variant of our statement.

\begin{lemma}\label{lemma:allContainingTheCandidate}
Let $W_1, W_2, W_3\in\pow_k(C)$ such that $|W_1 \cap W_3| > |W_1 \cap W_2|$. For each $\ell$, $1 \leq \ell \leq m$, if $\calF$ is non-trivial for $\ell$-regular profiles, then there exist two $\ell$-regular profiles, $A_\ell^*$ and $B_\ell^*$, such that:
\begin{enumerate}
\item $\score{f}(W_1,A_\ell^*)=\score{f}(W_3,A_\ell^*)>\score{f}(W_2,A_\ell^*)$ and $W_1\sim_{\calF(A_\ell^*)} W_3 \succ_{\calF(A_\ell^*)} W_2$,
\item $\score{f}(W_1,B_\ell^*)=\score{f}(W_3,B_\ell^*)<\score{f}(W_2,B_\ell^*)$ and $W_1\sim_{\calF(B_\ell^*)} W_3 \prec_{\calF(B_\ell^*)} W_2$.
\end{enumerate}
\end{lemma}
\begin{proof}
Let $c$ be a candidate such that $c \in W_1 \cap W_3$ and $c \notin W_2$. Such a candidate exists because $|W_1 \cap W_3| > |W_1 \cap W_2|$.
Profile $A_\ell^*$ contains, for each $S\subseteq C \setminus \{c\}$ with $|S|=\ell-1$, a vote with approval set $S \cup \{c\}$. 
First, let us note that all committees that contain $c$ have the same $f_{\ell}$-score in $A_\ell^*$: this follows from  neutrality, since the profile $A_\ell^*$ is symmetric with respect to committees containing $c$, in particular $W_1$ and $W_3$.
Let $s$ denote the score of such committees.

Next, we will argue that $\score{f_{\ell}}(W_2,A_\ell^*)<s$. To see this, let $c' \in W_2$ and consider a committee $W_2' = (W_2 \setminus \{c'\}) \cup \{c\}$.
Since $f$ implements $\calF$, there exists $x \leq k$ such that $f_\ell(x, \ell) > f_\ell(x-1, \ell)$.
Due to Proposition~\ref{prop:counting-functions-equivalent} we can assume that $m-\ell\geq k-(x-1)$; otherwise this difference between $f_\ell(x, \ell)$ and $f_\ell(x-1, \ell)$ would not be relevant for computing scores.
Let $T\subseteq C\setminus \{c,c'\}$ such that $|T|=\ell-1$ and $|T\cap W_2|=x-1$.
To show that such a $T$ exists, we have to prove that there exist $(\ell-1)-(x-1)$ candidates in ($C\setminus W_2)\setminus \{c,c'\}$. This is the case since $m-\ell\geq k-(x-1)$ and thus $|(C\setminus W_2)\setminus \{c,c'\}|=m-k-1\geq \ell-x$.

Now let $v$ be the vote in $A_\ell^*$ with approval set $T\cup \{c\}$.
Since $f_\ell(x, \ell) > f_\ell(x-1, \ell)$, \[f_\ell(|A_\ell^*(v)\cap W_2'|,|A_\ell^*(v)|)> f_\ell(|A_\ell^*(v)\cap W_2|,|A_\ell^*(v)|).\]
Furthermore, for all votes $v'$ in $A_\ell^*$:
\[f_\ell(|A_\ell^*(v')\cap W_2'|,|A_\ell^*(v)|)\geq f_\ell(|A_\ell^*(v')\cap W_2|,|A_\ell^*(v)|).\]
Hence, $\score{f_{\ell}}(W_2',A_\ell^*)> \score{f_{\ell}}(W_2,A_\ell^*)$.
Since $f(x,\ell)=\gamma_\ell\cdot f_\ell(x,\ell)$ we get $\score{f}(W_2',A_\ell^*)> \score{f}(W_2,A_\ell^*)$.
Further, by a previous argument we have $\score{f}(W_1,A_\ell^*)=\score{f}(W_2',A_\ell^*)$, thus by transitivity 
we conclude that $\score{f}(W_1,A_\ell^*)>\score{f}(W_2,A_\ell^*)$.

Next, let us construct profile $B_\ell^*$. In this case we choose $c$ such that $c \in W_2$ and $c \notin W_1 \cup W_3$. Again, this is possible because $|W_3\setminus W_1| = k - |W_1 \cap W_3| < k - |W_1 \cap W_2| = |W_2\setminus W_1|$ and hence $W_2 \not\subseteq W_1\cup W_3$.
Similarly as before, $B_\ell^*$ contains a vote with approval set $S \cup \{c\}$ for each $S\subseteq C \setminus \{c\}$ with $|S|=\ell-1$. With similar arguments as before we can show that all committees that contain $c$ have the same score in $B_\ell^*$ (in particular $W_2$) and this score is larger than the score of committees that do not contain $c$ (in particular $W_1$ and $W_3$).

Finally, the statements concerning $\calF$ follow from Lemma~\ref{lemma:on_l_regular_profiles} since both $A_\ell^*$ and $B_\ell^*$ are $\ell$-regular.
\end{proof}

We further generalize Lemma~\ref{lemma:specificVotes} and \ref{lemma:WqW2Fixed} so to allow us to compare $W_1^*$ with arbitrary profiles. This is the final step; we can then proceed with a direct proof of Theorem~\ref{thm:characterizationWelfareFunctions}.

\begin{lemma}\label{lemma:arbitraryCommittees}
For all $A\in \calA(C,V)$ and $W\in \pow_k(C)$ it holds that
\begin{align*}
\score{f}(W_1^*, A) > \score{f}(W, A) \implies W_1^* \succ_{\calF(A)} W \textrm{.}
\end{align*}
\end{lemma}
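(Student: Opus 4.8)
The plan is to split on how $|W\cap W_1^*|$ compares to the globally minimal intersection size $j_0:=|W_1^*\cap W_2^*|$, and in every case to reduce to the comparison of $W_1^*$ with $W_2^*$ that was settled in Lemma~\ref{lemma:WqW2Fixed}. The first step I would record is that Lemma~\ref{lemma:WqW2Fixed} together with neutrality upgrades to: for \emph{every} pair of committees $U,U'$ with $|U\cap U'|=j_0$ one has $\score{f}(U,A)>\score{f}(U',A)\implies U\succ_{\calF(A)}U'$. Indeed, since $|U\cap U'|=|W_1^*\cap W_2^*|$, the four blocks $U\cap U'$, $U\setminus U'$, $U'\setminus U$, $C\setminus(U\cup U')$ have the same cardinalities as their $W_1^*,W_2^*$ counterparts, so there is a permutation $\tau$ of $C$ with $\tau(U)=W_1^*$, $\tau(U')=W_2^*$; because $\score{f}$ is invariant under applying $\tau$ to both committee and profile and $\calF$ is neutral, Lemma~\ref{lemma:WqW2Fixed} transfers verbatim. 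In particular, if $|W\cap W_1^*|=j_0$ the lemma is immediate.

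For the remaining case $|W\cap W_1^*|>j_0$ I would argue by contradiction, assuming $W\succeq_{\calF(A)}W_1^*$ while $\epsilon:=\score{f}(W_1^*,A)-\score{f}(W,A)>0$. Pick an ``anchor'' committee $W^\sharp$ with $|W^\sharp\cap W_1^*|=|W^\sharp\cap W|=j_0$; this is possible whenever $m$ is large enough relative to $k$ and $|W\cap W_1^*|$, and the finitely many small-$m$ configurations can be treated by an auxiliary induction on $|W\cap W_1^*|$ in which $W^\sharp$ is replaced by a committee sharing one fewer candidate with $W_1^*$. Apply Lemma~\ref{lemma:allContainingTheCandidate} with $(W_1,W_2,W_3)=(W_1^*,W^\sharp,W)$ --- admissible because $|W_1^*\cap W|>j_0=|W_1^*\cap W^\sharp|$, and because $\calF$ is non-trivial on $\ell$-regular profiles for some $\ell$ (otherwise $f$ would not depend on its first argument) --- to obtain $\ell$-regular profiles $A_\ell^*,B_\ell^*$ that leave $W_1^*$ and $W$ tied, both in $\score{f}$-value and under $\calF$, while placing $W^\sharp$ strictly below $W_1^*$ in $A_\ell^*$ and strictly above $W_1^*$ in $B_\ell^*$.

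The final step is a calibration profile $P=n_0A+xA_\ell^*+yB_\ell^*$. Since $\calF(n_0A)=\calF(A)$ and both $A_\ell^*$ and $B_\ell^*$ tie $W_1^*$ with $W$, consistency gives $W\succeq_{\calF(P)}W_1^*$, while $\score{f}(W_1^*,P)-\score{f}(W,P)=n_0\epsilon$ stays constant. The $\score{f}$-gap between $W_1^*$ and $W^\sharp$ on $P$ has the form $n_0d_0+xd_1-y|d_2|$ with $d_1,|d_2|>0$, and the attainable values of $xd_1-y|d_2|$ approximate any real to within $\max(d_1,|d_2|)$, so for $n_0$ large one can choose $x,y$ with $0<\score{f}(W_1^*,P)-\score{f}(W^\sharp,P)<n_0\epsilon$. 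The left inequality and the $j_0$-pair statement applied to $(W_1^*,W^\sharp)$ give $W_1^*\succ_{\calF(P)}W^\sharp$; the right inequality gives $\score{f}(W^\sharp,P)>\score{f}(W,P)$, so the $j_0$-pair statement applied to $(W^\sharp,W)$ gives $W^\sharp\succ_{\calF(P)}W$; transitivity then yields $W_1^*\succ_{\calF(P)}W$, contradicting $W\succeq_{\calF(P)}W_1^*$. The main obstacle is precisely this calibration: the earlier existence results (and continuity) only furnish \emph{some}, uncontrollably large, numbers of profile copies, so one must pin the $W^\sharp$-versus-$W_1^*$ score gap inside a window of width $n_0\epsilon$ while keeping the $W_1^*$-versus-$W$ relation frozen --- and alongside it the bookkeeping that secures (or, for small $m$, circumvents via the auxiliary induction) a usable anchor $W^\sharp$.
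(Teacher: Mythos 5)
Your overall architecture is sound and genuinely different from the paper's: you avoid the induction on $\ell$-bounded profiles entirely and instead try to sandwich an anchor committee $W^\sharp$ between $W_1^*$ and $W$, so that both pairwise comparisons $(W_1^*,W^\sharp)$ and $(W^\sharp,W)$ reduce, via neutrality, to the already-settled minimal-intersection case of Lemma~\ref{lemma:WqW2Fixed}. The neutrality upgrade of Lemma~\ref{lemma:WqW2Fixed} to all pairs at intersection level $j_0$ is correct (the paper uses it too), the application of Lemma~\ref{lemma:allContainingTheCandidate} with roles $(W_1^*,W^\sharp,W)$ is legitimate since $|W_1^*\cap W|>j_0=|W_1^*\cap W^\sharp|$, and the score calibration with $P=n_0A+xA_\ell^*+yB_\ell^*$ works as you describe.

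The gap is the existence of the anchor. You need $W^\sharp$ with $|W^\sharp\cap W_1^*|=|W^\sharp\cap W|=j_0$, and this can fail outright: take $k=3$, $m=6$, so $j_0=0$, with $W_1^*=\{c_1,c_2,c_3\}$, $W_2^*=\{c_4,c_5,c_6\}$ and $W=\{c_1,c_4,c_5\}$. The only committee disjoint from $W_1^*$ is $W_2^*$ itself, and $|W_2^*\cap W|=2\neq 0$, so no admissible $W^\sharp$ exists. Your proposed repair --- an auxiliary induction on $j=|W\cap W_1^*|$ using an anchor at level $j-1$ --- does not close this, because at the bottom step $j=j_0+1$ the anchor must sit at level $j_0$ with respect to \emph{both} $W_1^*$ and $W$, which is exactly the configuration that fails above; and for general $j$ the required anchor exists only when roughly $m\geq 3k-3j+2$, which is not implied by $j>j_0$. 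The paper circumvents this by using $W_2^*$ alone as the anchor and never invoking a minimal-intersection comparison for the pair $(W_2^*,W)$: instead it calibrates the profile so that $W_2^*$ score-dominates $W$ separately on the $\ell$-bounded part and on the $(\ell+1)$-regular part, where $f$ is already known to implement $\calF$ (by the induction on $\ell$-boundedness and Lemma~\ref{lemma:on_l_regular_profiles}), and then concludes $W_2^*\succeq_{\calF}W$ by consistency. Some device of this kind --- deriving the $W^\sharp$-versus-$W$ ordering from something other than the level-$j_0$ statement --- is needed to make your route go through; without it the argument does not cover all $(m,k,W)$ configurations.
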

\begin{proof}
We prove this statement by induction on $\ell$-bounded profiles.
As in Lemma~\ref{lemma:WqW2Fixed}, for $0$-bounded profiles $A$ the statement is trivial since $\score{f}(W_1^*, A) > \score{f}(W, A)$ cannot hold.

In order to prove the inductive step, we assume that the statement holds for $\ell$-bounded profiles. Let $A$ be an $(\ell + 1)$-bounded profile and assume that $\score{f}(W_1^*, A) > \score{f}(W, A)$. We will show that $W_1^* \succ_{\calF(A)} W$.
If Case (A) or (B) was applicable when defining $\gamma_{\ell+1}$, the same arguments as in Lemma~\ref{lemma:WqW2Fixed} yield that $W_1^* \succ_{\calF(A)} W$.

If Case (C) was applicable when defining $\gamma_{\ell+1}$ and if $|W_1^* \cap W| = |W_1^* \cap W_2^*|$, then the statement of the lemma follows from Lemma~\ref{lemma:WqW2Fixed} and neutrality.
Recall that we fixed $W_1^*$ and $W_2^*$ as two committees with the smallest possible size of the intersection.
Thus, if $|W_1^* \cap W| \neq |W_1^* \cap W_2^*|$ then $|W_1^* \cap W| > |W_1^* \cap W_2^*|$.
For the sake of contradiction let us assume that $W \succeq_A W_1^*$. 
Let $\score{f}(W_1^*, A) - \score{f}(W, A) = \epsilon > 0$.

Now, from $A$ we create a new profile $B$ in the following way.
Let us consider two cases:
\begin{description}
\item[Case 1:] $\score{f}(W_2^*, \most(A, \ell)) - \score{f}(W, \most(A, \ell)) \geq 0$.

Let $Q$ be an $\ell$-bounded profile where:
\begin{align*}
\score{f}(W_1^*, Q) = \score{f}(W, Q) > \score{f}(W_2^*, Q) \textrm{.}
\end{align*}
Such a profile exists due to Lemma~\ref{lemma:allContainingTheCandidate}. Since $\score{f}(W_2^*, Q) - \score{f}(W, Q)$ is negative, there exist such $x\in \naturals$, $y \in \naturals\cup\{0\}$ that $x \geq 2$ and
\begin{align*}
&0 \leq \Big(\score{f}(W_2^*, \most(A, \ell)) - \score{f}(W, \most(A, \ell)) \Big) \\
& \hspace{2cm } + \nicefrac{y}{x} \cdot \Big(\score{f}(W_2^*, Q) - \score{f}(W, Q) \Big) < \nicefrac{\epsilon}{2} \textrm{,}
\end{align*}
which is equivalent to
\begin{align}\label{eq:ww2relation}
0 \leq  \score{f}(W_2^*, x \most(A, \ell) + yQ) - \score{f}(W, x \most(A, \ell) + yQ) < \nicefrac{x\epsilon}{2} \textrm{.}
\end{align}
We set $B = xA + yQ$.
\item[Case 2:] $\score{f}(W_2^*, \most(A, \ell)) - \score{f}(W, \most(A, \ell)) < 0$.

In this case our reasoning is very similar. Let $Q$ be an $\ell$-bounded profile where:
\begin{align*}
\score{f}(W_2^*, Q) > \score{f}(W_1^*, Q) = \score{f}(W, Q) \textrm{.}
\end{align*}
Again, similarly as before, we observe that there exist such $x, y \in \naturals$ that $x \geq 1$ and:
\begin{align*}
&0 \leq \Big(\score{f}(W_2^*, \most(A, \ell)) - \score{f}(W, \most(A, \ell)) \Big) \\
& \hspace{2cm } + \nicefrac{y}{x} \cdot \Big(\score{f}(W_2^*, Q) - \score{f}(W, Q) \Big) < \nicefrac{\epsilon}{2} \textrm{,}
\end{align*}
which is equivalent to Inequality~\eqref{eq:ww2relation}. Here, we also set $B = xA + yQ$.
\end{description}

By similar transformation as before, but applied to $\exact(B, \ell+1)$ rather than to $\most(B, \ell)$, we construct a profile $D$ from $B$:
\begin{description}
\item[Case 1:] $\score{f}(W_2^*, \exact(B, \ell+1)) - \score{f}(W, \exact(B, \ell+1)) \geq 0$.

Due to Lemma~\ref{lemma:allContainingTheCandidate} there exists an $(\ell+1)$-regular profile $Q'$ with
\begin{align*}
\score{f}(W_1^*, Q') = \score{f}(W, Q') > \score{f}(W_2^*, Q') \textrm{.}
\end{align*}
Similarly as before, there exist $x'\in \naturals$, $y' \in \naturals\cup\{0\}$ such that
\begin{align}\label{eq:ww2relation2}
0 \leq  \score{f}(W_2^*, x' \exact(B, \ell+1) + y'Q') - \score{f}(W, x' \exact(B, \ell+1) + y'Q') < \nicefrac{x'\epsilon}{2} \textrm{.}
\end{align}
We set $D = x'B + y'Q'$.
\item[Case 2:] $\score{f}(W_2^*, \exact(A, \ell+1)) - \score{f}(W, \exact(A, \ell+1)) < 0$.

Here, let $Q'$ be an $(\ell+1)$-regular profile such that
\begin{align*}
\score{f}(W_1^*, Q') = \score{f}(W, Q') > \score{f}(W_2^*, Q') \textrm{.}
\end{align*}
There exist $x', y' \in \naturals$ such that Inequality~\eqref{eq:ww2relation2} is satisfied. We set $D = x'B + y'Q'$.
\end{description}

Let us analyze the resulting profile $D = x'xA + x'yQ + y'Q'$. By our assumption we know that $W \succeq_A W_1^*$, thus by consistency we get that $W \succeq_{xx'A} W_1^*$. Since $W \sim_{\calF(Q)} W_1^*$ and $W \sim_{\calF(Q')} W_1^*$ due to Lemma~\ref{lemma:allContainingTheCandidate}, from consistency it follows that $W \succeq_{\calF(D)} W_1^*$.

Further, since $Q$ is $\ell$-bounded and $Q'$ is $(\ell+1)$-regular, 
\begin{align*}
D &= x'xA + x'yQ + y'Q' \\
  &= \most(x'xA + x'yQ + y'Q', \ell) + \exact(x'B + y'Q', \ell+1) \\
  &= \most(x'xA + x'yQ, \ell) + \exact(x'B + y'Q', \ell+1) \\
  &= x'\most(xA + yQ, \ell) + \exact(x'B + y'Q', \ell+1)\textrm{.}
\end{align*}
Inequalities~\eqref{eq:ww2relation}~and~\eqref{eq:ww2relation2} imply that $W_2^*$ has a higher score than $W$ in profiles $x'(x\most(A, \ell) + yQ) = x'\most(xA + yQ, \ell)$ and $x'\exact(B, \ell+1)+ y'Q' = \exact(x'B + y'Q', \ell+1)$. From our inductive assumption we get that $W_2^*$ is preferred over $W$ in $x'\most(xA + yQ, \ell)$, and by Lemma~\ref{lem:implements-on-approval-profiles} we get that $W_2^*$ is preferred over $W$ in $\exact(x'B + y'Q', \ell+1)$. Consistency implies that $W_2^* \succeq_{\calF(D)} W$, and thus $W_2^*\succeq_{\calF(D)} W \succeq_{\calF(D)} W_1^*$.

Now we observe that
\begin{align*}
&\score{f}(W_1^*, \most(xA + yQ, \ell)) - \score{f}(W_2^*, \most(xA + yQ, \ell)) \\
& \hspace{1cm} = \Big(\score{f}(W_1^*, \most(xA + yQ, \ell)) - \score{f}(W, \most(xA + yQ, \ell))\Big) \\
& \hspace{2cm} + \Big(\score{f}(W, \most(xA + yQ, \ell)) - \score{f}(W_2^*, \most(xA + yQ, \ell))\Big) \\
& \hspace{1cm} \geq \Big(\score{f}(W_1^*, \most(xA + yQ, \ell)) - \score{f}(W, \most(xA + yQ, \ell))\Big) - \frac{x\epsilon}{2} \\
& \hspace{1cm} = \Big(\score{f}(W_1^*, \most(xA, \ell)) - \score{f}(W, \most(xA, \ell))\Big) - \frac{x\epsilon}{2} \text{.}
\end{align*}
and
\begin{align*}
&\score{f}(W_1^*, \exact(x'B + y'Q', \ell+1)) - \score{f}(W_2^*, \exact(x'B + y'Q', \ell+1)) \\
& \hspace{1cm} = \Big(\score{f}(W_1^*, \exact(x'B + y'Q', \ell+1)) - \score{f}(W, \exact(x'B + y'Q', \ell+1))\Big) \\
& \hspace{2cm} + \Big(\score{f}(W, \exact(x'B + y'Q', \ell+1)) - \score{f}(W_2^*, \exact(x'B + y'Q', \ell+1))\Big) \\
& \hspace{1cm} \geq \Big(\score{f}(W_1^*, \exact(x'B + y'Q', \ell+1)) - \score{f}(W, \exact(x'B + y'Q', \ell+1))\Big) - \frac{x'\epsilon}{2} \\
& \hspace{1cm} = \Big(\score{f}(W_1^*, \exact(x'B, \ell+1)) - \score{f}(W, \exact(x'B, \ell+1))\Big) - \frac{x'\epsilon}{2} \\
& \hspace{1cm} = \Big(\score{f}(W_1^*, \exact(x'xA, \ell+1)) - \score{f}(W, \exact(x'xA, \ell+1))\Big) - \frac{x'\epsilon}{2} \text{.}
\end{align*}
By combining the above two inequalities we get that
\begin{align*}
&\score{f}(W_1^*, D) - \score{f}(W_2^*, D) \\
& \hspace{1cm} = x'\cdot\Big(\score{f}(W_1^*, \most(xA + yQ, \ell)) - \score{f}(W, \most(xA + yQ, \ell))\Big) \\
& \hspace{2cm} +  \Big(\score{f}(W_1^*, \exact(x'B + y'Q', \ell+1)) - \score{f}(W, \exact(x'B + y'Q', \ell+1))\Big) \\
& \hspace{1cm} \geq x'\cdot \Big(\score{f}(W_1^*, \most(xA, \ell)) - \score{f}(W, \most(xA, \ell))\Big) \\
& \hspace{2cm} +  \Big(\score{f}(W_1^*, \exact(x'xA, \ell+1)) - \score{f}(W, \exact(x'xA, \ell+1))\Big) - \frac{(x'+xx')\epsilon}{2} \\
& \hspace{1cm} = xx'\cdot \Big(\score{f}(W_1^*, A) - \score{f}(W, A)\Big) - \frac{(x'+xx')\epsilon}{2} \\
& \hspace{1cm} = xx'\epsilon - \frac{(x'+xx')\epsilon}{2} = \frac{(xx' - x')\epsilon}{2} > 0 \text{.}
\end{align*}

\noindent
Summarizing, we obtained a profile $D$, such that
$\score{f}(W_1^*, D) > \score{f}(W_2^*, D)$ and
$W_2^* \succ_{\calF(D)} W_1^*$.
This, however, contradicts Lemma~\ref{lemma:WqW2Fixed}. Hence, we have proven the inductive step, which completes the proof of the lemma.
\end{proof}

Lemma~\ref{lemma:arbitraryCommittees} allows us to prove Theorem~\ref{thm:characterizationWelfareFunctions}, our characterization of ABC counting rules.

\begin{proof}[Finalizing the proof of Theorem~\ref{thm:characterizationWelfareFunctions}]
Let $\calF$ satisfy symmetry, consistency, weak efficiency, and continuity.
If $\calF$ is trivial, then $f(x,y)=0$ implements $\calF$.

If $\calF$ is non-trivial, we construct $f$, $W_1^*$, and $W_2^*$ as described above.
We claim that for $A\in\calA(C,V)$ and $W_1, W_2\in \pow_k(C)$ it holds that $\score{f}(W_1, A) > \score{f}(W_2, A)$ if and only if $W_1 \succ_{\calF(A)} W_2$.
By neutrality, Lemma~\ref{lemma:arbitraryCommittees} is applicable to any pair of committees $W_1, W_2\in \pow_k(C)$: if $\score{f}(W_1, A) > \score{f}(W_2, A)$ then $W_1 \succ_{\calF(A)} W_2$. 

Now, for the other direction, instead of showing that $W_1 \succ_{\calF(A)} W_2$ implies $\score{f}(W_1, A) > \score{f}(W_2, A)$, we show that $\score{f}(W_1, A) = \score{f}(W_2, A)$ implies $W_1 \sim_{\calF(A)} W_2$. Note that Lemma~\ref{lemma:arbitraryCommittees} does not apply to committees with the same score.
For the sake of contradiction let $\score{f}(W_1, A) = \score{f}(W_2, A)$ but $W_1 \succ_{\calF(A)} W_2$. 
As a first step, we prove that there exists a profile $B$ with $\score{f}(W_2, B) > \score{f}(W_1, B)$ and $W_2 \succ_{\calF(B)} W_1$.
Since $W_1 \succ_{\calF(A)} W_2$ and by neutrality, there exists a profile $A'\in\calA(C,V)$ with $W_2 \succ_{\calF(A')} W_1$.
Thus, there exists an $\ell\in[m]$ such that $W_2 \succ_{\calF(\regular(A',\ell))} W_1$, because otherwise, by consistency, $W_1 \succeq_{\calF(A')} W_2$ would hold; let $B=\regular(A',\ell)$.
Now, Lemma~\ref{lemma:on_l_regular_profiles} guarantees that $\score{f_\ell}(W_2,B)>\score{f_\ell}(W_1,B)$.
Since $f(x,\ell)=\gamma_\ell\cdot f_\ell(x,\ell)$, also $\score{f}(W_2,B)>\score{f}(W_1,B)$.
Observe that for each $n\in\naturals$ we have $\score{f}(W_2, B+nA) > \score{f}(W_1, B+nA)$. Thus, by Lemma~\ref{lemma:arbitraryCommittees} for each $n$, $W_2 \succ_{\calF(B+nA)} W_1$, which contradicts continuity of $\calF$.
Hence $\score{f}(W_1, A) = \score{f}(W_2, A)$ implies $W_1 \sim_{\calF(A)} W_2$ and, consequently, $\score{f}(W_1, A) > \score{f}(W_2, A)$ if and only if $W_1 \succ_{\calF(A)} W_2$.
We see that $f$ implements $\calF$ and thus $\calF$ is an ABC counting rule.

Finally, as we already noted, an ABC counting rule satisfies symmetry, consistency, weak efficiency, and continuity: this follows immediately from the definitions.
\end{proof}

\subsection{Independence of Axioms}
\label{subsec:independence}

The set of axioms used in the statement of Theorem~\ref{thm:characterizationWelfareFunctions} is minimal. %
First, let us consider the variation of AV where the score of a fixed candidate $c$ is doubled.
Formally, the score of a committee $W$ is defined as $\sum_{v \in V}|A(v) \cap W| + |\{v\in V: c\in A(v) \cap W\}|$.
This rule satisfies all axioms except for neutrality.
If we consider a variation of AV where voter $1$ has a weight of $2$, i.e., voter~$1$ gives a score of $2$ to each approved candidate; all other voters have a weight of $1$.
This weighted AV rule clearly fails anonymity, but satisfies all other axioms.
Note that here we need the fact that consistency only has to hold for disjoint voter sets.

Next, consider Proportional Approval Voting where ties are broken by Multi-Winner Approval Voting.
This rule---let us call it $\calF^*$---satisfies all axiom except for continuity: 
consider the profile $A=(\{c\})$ and $A'=(\{a,b\},\{a,b\},\{c\})$. It holds that $\{a,b\}\succ_{\calF^*(A')}\{a,c\}$ because the PAV-score of both committees is $3$, but the AV-score of $\{a,b\}$ is $4$ and only $3$ for $\{a,c\}$.
However, it holds that $\{a,c\}\succ_{\calF^*(A+nA')}\{a,b\}$ for arbitrary $n$ because the PAV-scores of $\{a,c\}$ and $\{a,b\}$ are $3n+1$ and $3n$, respectively.

To see that consistency is independent, consider an ABC ranking rule that is PAV on party-list profiles (i.e., D'Hondt) and the trivial rule otherwise.
This rule fails consistency, since the addition of two party-list profiles may not be a party-list profile. All other axioms are satisfied by it: symmetry and weak efficiency are easy to see, continuity follows from the fact that in non-party-list profiles all committees are winning.
Finally, the rule which reverses the output of Multi-Winner Approval Voting (i.e., $f(x,y)=-x$) satisfies all axioms except for weak efficiency.

\section{Further Proof Details}
\label{sec:app:proof-details}

\begin{repproposition}{prop:counting-functions-equivalent}
\propcountingfunctionsequivalent
\end{repproposition}

\begin{proof}
Let $A\in \calA(C,V)$ and $W\in\pow_k(C)$. Let $D\subseteq [0,k]\times[0,m]$ be the domain of $f$ and $g$ that is actually used in the computation of $\score{f}(W,A)$ and $\score{g}(W,A)$.
We will show that 
\begin{align}
D\subseteq D_{m,k}\cup \{(k,m)\}.\label{eq:domain-condition}
\end{align}
Let $v\in V$, $x=|A(v)\cap W|$, and $y=|A(v)|$.
If $y=m$, then $x=|A(v)\cap W|=k$ and condition~\eqref{eq:domain-condition} is satisfied.
Let $y<m$.
If $y$ is sufficiently large (close to $m$), then $A(v)\cap W$ cannot be empty.
More precisely, it has to hold that the number of not approved members of $W$, $k-x$, is at most equal to the total number of not approved candidates in $v$, $m-y$; this yields that $k-x \leq m-y$.
Furthermore, $x\leq y$ (the number of approved members of $W$ must be at most equal to the total number of approved candidates).
Consequently, $(x,y)\in D_{m,k}$.
This shows that condition~\eqref{eq:domain-condition} holds.

Consider functions $f$ and $g$ as in the statement of the proposition. We will now show that for all $W_1,W_2\in\pow_k(C)$, it holds that: \[\score{g}(W_1,A)-\score{g}(W_2,A)=c\cdot (\score{f}(W_1,A)-\score{f}(W_2,A)).\]
Let $V_i=\{v\in V:|A(v)|=i\}$ for $i\in[m]$.
Now
\begin{align*}
&\score{g}(W_1,A)-\score{g}(W_2,A) = \\
&= \sum_{i=1}^m \sum_{v \in V_i} g(|A(v) \cap W_1|, |A(v)|) - g(|A(v) \cap W_2|, |A(v)|)\\
&= \sum_{i=1}^{m-1} \sum_{v \in V_i} \Big( c\cdot f(|A(v) \cap W_1|, |A(v)|) + d(y) - c\cdot f(|A(v) \cap W_2|, |A(v)|) - d(y) \Big)\\
&= c\cdot \sum_{v \in V} \Big( f(|A(v) \cap W_1|, |A(v)|) - f(|A(v) \cap W_2|, |A(v)|) \Big)\\
&= c\cdot \left(\score{f}(W_1,A)-\score{f}(W_2,A)\right)
\end{align*}
Consequently, $\score{g}(W_1,A) > \score{g}(W_2,A)$ if and only if $\score{f}(W_1,A) > \score{f}(W_2,A)$, and so
$W_1\succ_{f(A)} W_2$ if and only if $W_1\succ_{g(A)} W_2$.
\end{proof}

\begin{reptheorem}{thm:pav-ranking-characterization}
\theorempavcharacterization
\end{reptheorem}
\begin{proof}
Theorem~\ref{thm:pav-ranking-characterization} is a special case of Theorem~\ref{thm:thiele-abc-rule-characterization}.
\end{proof}

\begin{reptheorem}{thm:approval-characterizationB}
\thmapprovalB
\end{reptheorem}
\begin{proof}
It is straightforward to verify that Multi-Winner Approval Voting satisfies disjoint equality.
For the other direction, consider an ABC counting rule satisfying disjoint equality that is implemented 
by a counting function $f$.
As in previous proofs we rely on Proposition~\ref{prop:counting-functions-equivalent} to show that $f$ and $f_\text{AV}(x,y)=x$ implement the same ABC counting rule. It is thus our aim to show that for $(x,y)\in D_{m,k}$ it holds that $f(x,y)=c\cdot x + d(y)$ for some $c\in\reals$ and $d\colon [m]\to\reals$.
More specifically, we will show that for $(x,y)\in D_{m,k}$ with $0\leq x< y$ it holds that $f(x+1,y)-f(x,y)= f(1,1)-f(0,1)$. It then follows from induction that $f(x,y)= (f(1,0)-f(0,0))\cdot x + f(0,y)$ and thus we will be able to conclude that $f$ implements Multi-Winner Approval Voting.

Let $(x,y)\in D_{m,k}$ with $x<k$ and $x< y$.
We construct a profile $A\in \calA(C,[k-x+1])$ with $|A(1)|=y$ and $|A(2)|=\dots=|A(k-x+1)|=1$. All voters have disjoint sets of approved candidates. Hence this construction requires $y+k-x$ candidates. Since $(x,y)\in D_{m,k}$, it holds that $k-x\leq m-y$ and hence $y+k-x\leq m$; we see that a sufficient number of candidates is available.
Let $W_1$ contain $x$ candidates from $A(1)$ and one candidate from $A(2),\dots, A(k-x+1)$ each.
Let $W_2$ contain $x+1$ candidates from $A(1)$ and one candidate from $A(2),\dots, A(k-x)$ each.
Note that $|W_1|=|W_2|=k$.
By disjoint equality both $W_1$ and $W_2$ are winning committees.
Hence \[f(x,y)+(k-x)\cdot f(1,1) = f(x+1,y)+(k-x-1)\cdot f(1,1)+f(0,1)\] and thus 
$f(x+1,y)-f(x,y) = f(1,1)-f(0,1)$. 
\end{proof}

\begin{reptheorem}{thm:cc_characterization}
\thmccabcrulecharacterization
\end{reptheorem}

\begin{proof}
The Approval Chamberlin--Courant rule maximizes the number of voters that have at least one approved candidate in the committee. In a party-list profile, this implies that the $k$ largest parties receive at least one representative in the committee and hence disjoint diversity is satisfied.

For the other direction, let $\calF$ be an ABC counting rule implemented by a counting function~$f$.
Recall Proposition~\ref{prop:counting-functions-equivalent} and the relevant domain of counting functions $D_{m,k}=\{(x,y) \in [0,k]\times[0,m-1]: x\leq y \mathbin{\wedge} k-x\leq m-y\}$.
In a first step, we want to show that $f(x+1, y) = f(x, y)$ for $x\geq 1$ and $(x+1, y),(x,y)\in D_{m,k}$.
Let us fix $(x,y)$ such that $(x, y)\in D_{m,k}$, $(x+1, y)\in D_{m,k}$, and $x \geq 1$. Furthermore, let us fix a committee $W$ and consider a set $X\subseteq C$ with $|X|=y$ and $|X\cap W|=x$.
We construct a party-list profile $A$ as follows: $A$ contains $\zeta$ votes that approve $X$ (intuitively, $\zeta$ is a large natural number); further for each candidate $c \in W\setminus X$, profile $A$ contains a single voter who approves $\{c\}$.
This construction requires $y+(k-x)$ candidates. Since $(x, y)\in D_{m,k}$, we have $y+(k-x)\leq m$.

If we apply disjoint diversity to profile $A$, we obtain a winning committee $W'$ with $W\setminus X\subseteq W'$ and $|W'\cap X|\geq 1$.
Observe that $\score{f}(W',A)=\score{f}(W,A)$ (the satisfaction of all voters remains the same).
Let $W''$ be the committee we obtain from $W$ by replacing one candidate in $W\setminus X$ with a candidate in $X\setminus W$ (such a candidate exists since $(x+1, y)\in D_{m,k}$).
Since $W$ is a winning committee, $\score{f}(W'',A)\leq \score{f}(W,A)$ and thus
\begin{align}
\zeta f(x+1, y) + (k-x -1) f(1, 1) \leq \zeta f(x, y) + (k-x) f(1, 1) \text{.}
\end{align}
The above condition can be written as $f(x+1, y) - f(x, y) \leq \frac{1}{\zeta} \cdot f(1, 1)$. Since this must hold for any $\zeta$, we get that $f(x+1, y) \leq f(x, y)$. Since $f(x+1, y) \geq f(x, y)$ by the definition of counting functions, we get that $f(x+1, y) = f(x, y)$ for $x\geq 1$. By Proposition~\ref{prop:counting-functions-equivalent} we can set $f(0, y) = 0$ for each $y \in [m]$.
We conclude that $\calF$ is also implemented by the counting function 
\begin{align*}
f_\alpha(x,y) = \begin{cases}0 & \text{if }x = 0,\\\alpha(y) & \text{if }x\geq 1.\end{cases}
\end{align*}

As a next step we show that for the counting function $f_\alpha(x,y)$ we can additionally assume that $\alpha(y) = \alpha(1)$, for each $y$. Observe that if $y \geq m-k+1$, then for each committee $W$, a voter who approves $y$ candidates in total, approves at least one member of $W$. By our previous reasoning, each committee gets from such a voter the same score, and so such a voter does not influence the outcome of an election. Consequently, 
we can assume that $\alpha(y)=\alpha(1)$ for $y>m-k$.
Now, for $y\leq m-k$, we also show that $\alpha(y)=\alpha(1)$.
Towards a contradiction assume that $\alpha(y)\neq \alpha(1)$ and further, without loss of generality, $\alpha(y)> \alpha(1)$. To this end, let $n$ be natural number large enough so that $(n-1)\cdot \alpha(y)> n\cdot \alpha(1)$. 
Consider a party-list profile consisting of $n-1$ voters approving $\{c_1,\dots,c_y\}$, and, for $j\in[k]$, $n$ voters each that approves candidate $\{c_{y+j}\}$.
The committee $W_1=\{c_{y+1},\dots,c_{y+k}\}$ obtains a score of $nk\cdot f(1,1)=nk\cdot \alpha(1)$, whereas $W_2=\{c_1, c_{y+2},\dots,c_{y+k}\}$ obtains a score of $(n-1)\cdot \alpha(y)+n(k-1)\cdot \alpha(1)$. Since by choice of $n$ it holds that $(n-1)\cdot \alpha(y)> n\cdot \alpha(1)$, committee $W_2$ is winning. This contradicts disjoint diversity and hence $\alpha(y)=\alpha(1)$.

Finally, we use Proposition~\ref{prop:counting-functions-equivalent} to argue that the CC counting function $f_\text{CC}$ implements $\calF$. 
We distinguish two cases: $\alpha(1)>0$ and $\alpha(1)=0$.
If  $\alpha(1)>0$, then
$f_\text{CC} = \frac{1}{\alpha(1)}\cdot f_\alpha(x,y)$,
and we see that Proposition~\ref{prop:counting-functions-equivalent} indeed applies.

If $\alpha(1)=0$, then $f_\alpha$ is equivalent (by Proposition~\ref{prop:counting-functions-equivalent}) to the trivial counting function $f_0(x,y)=0$. Since $\calF$ is non-trivial, this case cannot occur.
\end{proof}

\begin{reptheorem}{thm:thiele-abc-rule-characterization}
\thmthieleabcrulecharacterization
\end{reptheorem}
\begin{proof}
To see that the $w$-Thiele method satisfies $\mathbf{d}$-proportionality, 
let $f_{w\text{-T}}$ be the $w$-Thiele method's counting function defined by $f_{w\text{-T}}(x,y)=\sum_{i=1}^x w_i$.
Consider a party-list profile $A$ with $p$ parties, i.e., we have a partition of voters $N_1, N_2, \ldots N_p$ and their corresponding joint approval sets $C_1,\dots, C_p$. For the sake of contradiction let us assume that $W\in \pow_k(C)$ is a winning committee and that there exists $i, j$ such that $\frac{|N_i|}{d_{|W \cap C_i|}} < \frac{|N_j|}{d_{|W \cap C_j| + 1}}$, $W \cap C_i \neq \emptyset$ and $C_j \setminus W \neq \emptyset$. 
Let $a\in W \cap C_i$ and $b\in C_j \setminus W$.
We define $W'=W\cup \{b\}\setminus \{a\}$.
Let us compute the difference between PAV-scores of $W$ and $W'$:
\begin{align*}
\score{f_{w\text{-T}}}(W',A) - \score{f_{w\text{-T}}}(W,A) = \frac{-|N_i|}{d_{|W \cap C_i|}} + \frac{|N_j|}{d_{|W \cap C_j| + 1}} >  0 \textrm{.}
\end{align*}
Thus, we see that $W'$ has a higher $w$-score than $W$, a contradiction.

To show the other direction, let $\calF$ be an ABC counting rule that satisfies $\mathbf{d}$-proportionality and $f$ its corresponding counting function.
We intend to apply Proposition~\ref{prop:counting-functions-equivalent} to show that $f$ is equivalent to the $w$-Thiele method's counting function $f_{w\text{-T}}(x,y)=\sum_{i=1}^x w_i$.
Hence we have to show that there exists a constant $c$ and a function $d\colon[m]\to \reals$ such that $f(x)=c\cdot f_{w\text{-T}}(x,y) + d(y)$ for all $(x,y)\in D_{m,k}= \{(x,y) \in [0,k]\times[0,m-1]: x\leq y \wedge k-x\leq m-y\}$. 

Let us fix $x \in [k]$ such that $k-x < m-y$. We consider two cases: we start with the case when $d_x \neq \infty$.

\begin{enumerate}
\item[$d_x$ is a positive integer:] Let us consider the following party-list profile. There are $k-x + 2$ groups of voters: $N_1, \ldots, N_{k-x+2}$ with $|N_1| = d_x$, $|N_i| = d_1$ for $i \geq 2$; their corresponding approval sets are $C_1, \ldots, C_{k-x+2}$. Let $|C_1|=y$, $|C_i|=1$ for $i \in [2, k-x+1]$, and $|C_{k-x+2}|=m - y - k +x \geq 1$.
Consider the two following committees: 
we choose $W_1$ such that $|W_1\cap C_1|=x-1$, $|W_1\cap C_i|=1$ for $i \geq 2$;
we chose $W_2$ such that $|W_2\cap C_1|=x$, $|W_2\cap C_2|=0$, and $|W_2\cap C_i|=1$ for $i \geq 1$. 

It is straight-forward to verify that both $W_1$ and $W_2$ are $\mathbf{d}$-proportional.

Thus, $W_1$ and $W_2$ are winning committees and hence have the same scores. Their respective scores are
\begin{align*}
\score{f}(W_1,A) & = d_x\cdot f(x-1, y) + d_1\cdot f(1, m - y - k +x) + (k-x)\cdot d_1\cdot f(1, 1)  \textrm{,} \\
\score{f}(W_2,A) & = d_x\cdot f(x, y) + d_1\cdot f(1, m - y - k +x)  + (k-x-1)\cdot d_1\cdot f(1, 1) + d_1\cdot f(0, 1) \textrm{.}
\end{align*}
Since $\score{f}(W_1,A) = \score{f}(W_2,A)$ we have
\begin{align*}
f(x, y) = f(x-1, y) + \frac{d_1}{d_x} \Big(f(1, 1) - f(0, 1)\Big) \textrm{.}
\end{align*}

\item[$d_x=\infty$:]
Now, let us move to the case when $d_x = \infty$. Let us fix a committee $W$ and consider a set $X\subseteq C$ with $|X|=y$ and $|X\cap W|=x-1$. 
We construct a party-list profile $A$ as follows: $A$ contains $\zeta$ votes that approve $X$ (intuitively, $\zeta$ is a large natural number); further for each candidate $c \in W\setminus X$, profile $A$ contains a single voter who approves $\{c\}$.
This construction requires $y+(k-x+1)$ candidates, thus it is possible since we fixed $x$ so that $k-x < m-y$.

Clearly, committee $W$ is $\mathbf{d}$-proportional.
Let $W'$ be the committee we obtain from $W$ by replacing one candidate in $W\setminus X$ with a candidate in $X\setminus W$ (such a candidate exists since $(x, y)\in D_{m,k}$).
We have $\score{f}(W',A) \leq \score{f}(W,A)$ and thus
\begin{align}
\zeta f(x, y) + (k-x) f(1, 1) + f(0, 1) < \zeta f(x-1, y) + (k-x+1) f(1, 1) \text{.}
\end{align}
The above condition can be written as $f(x, y) - f(x-1, y) \leq \frac{1}{\zeta} \cdot (f(1, 1) - f(0, 1))$. Since this must hold for any $\zeta$, we get that $f(x+1, y) \leq f(x, y)$. Since $f$ is a counting function, $f(x, y) \geq f(x-1, y)$; thus we get that $f(x, y) = f(x-1, y)$, i.e.,:
\begin{align*}
f(x, y) = f(x-1, y) + \frac{d_1}{\infty} \Big(f(1, 1) - f(0, 1)\Big) \textrm{.}
\end{align*}
(Above, we use the convention that $\frac{\infty}{\infty} = 0$.)
\end{enumerate}

Now, as we have shown that \begin{align*}
f(x, y) = f(x-1, y) + \frac{d_1}{d_x} \Big(f(1, 1) - f(0, 1)\Big) \textrm{.}
\end{align*} holds for $1 \leq x \leq k$ such that $k-x < m-y$, we can expand this equation until we reach $x = 0$ or $x = k+y-m$. Let $s(y) = \max(0, k+y-m)$.
\begin{align*}
f(x, y) &= f(s(y), y) + d_1\Big(f(1, 1) - f(0, 1)\Big)\sum_{i=s(y)+1}^x w_i \\
        &= f(s(y), y) - d_1\Big(f(1, 1) - f(0, 1)\Big)\sum_{i=1}^{s(y)}w_i  + d_1\Big(f(1, 1) - f(0, 1)\Big)\sum_{i=1}^x w_i\textrm{.}
\end{align*}
Obviously, the above equality also holds for $x = s(y)$.

Hence we have shown that indeed $f(x)=c\cdot f_{w\text{-T}}(x,y) + d(y)$ for $c=d_1(f(1, 1) - f(0, 1))$ and $d(y)=f(s(y), y) - d_1\Big(f(1, 1) - f(0, 1)\Big)\sum_{i=1}^{s(y)}w_i$.
By Proposition~\ref{prop:counting-functions-equivalent}, $\calF$ is $w$-Thiele.
\end{proof}

\begin{replemma}{lem:sym+con+prop->pareto}
\lemsymconproppareto
\end{replemma}
\begin{proof}
Let $\calF$ be an ABC ranking rule satisfying symmetry, consistency, and $\mathbf{d}$-proportionality. To show that $\calF$ satisfies weak efficiency, it suffices to show that $\calF$ satisfies weak efficiency for single-voter profiles. Indeed, 
assume that $\calF$ satisfies weak efficiency for single-voter profiles. Let $W_1, W_2 \in \pow_k(C)$ and $A\in\calA(C,V)$ where no voter approves a candidate in $W_2\setminus W_1$; we want to show that $W_1 \succeq_{\calF(A)} W_2$.
Since weak efficiency holds for single-voter profiles, we know that $W_1 \succeq_{\calF(A(v))} W_2$ for all $v\in V$.
By consistency we can infer that $W_1 \succeq_{\calF(A)} W_2$.

For the sake of contradiction let us assume that $\calF$ does not satisfy weak efficiency for single-voter profiles. This means that there exist $X\subseteq C$ and $W_1, W_2\in \pow_k(C)$ such that $(W_2\setminus W_1) \cap X = \emptyset$ and $W_2 \succ_{\calF(X)} W_1$. First, we show that in such case there exist $W \in \pow_{k-1}(C)$, $c, c' \in C$ with $c\in X$, $c'\notin X$, and $W \cup \{c'\} \succ_{\calF(X)} W \cup \{c\}$. Let $z = |W_1 \cap X| - |W_2 \cap X|$, and let us consider the following sequence of $z$ operations which define $z$ new committees. We start with committee $W_{2, 1} = W_2$, and in the $i$-th operation, $i \in [z-1]$, we construct $W_{2, i+1}$ from $W_{2, i}$ by removing from $W_{2, i}$ one arbitrary candidate in $W_{2, i}\setminus X$ and by adding one candidate from $(W_1\setminus W_2) \cap X$. Consequently, $|W_{2, z} \cap X| = |W_1\cap X|$, so by neutrality we have $W_{2, z} \sim_{\calF(X)} W_1$. By our assumption we have that $W_{2, 1} \succ_{\calF(X)} W_{2, z}$, thus, there exists $i \in [z-1]$ such that $W_{2, i} \succ_{\calF(X)} W_{2, i+1}$. The committees $W_{2, i}$ and $W_{2, i+1}$ differ by one element only, so we set $W = W_{2, i} \cap W_{2, i+1}$, $c \in W_{2, i+1} \setminus W_{2, i}$ and $c' \in W_{2, i} \setminus W_{2, i+1}$, and we have $W \cup \{c'\} \succ_{\calF(X)} W \cup \{c\}$ for $c\in X$ and $c'\notin X$.

Let $\ell$ denote the number of members of $W \cup \{c\}$ which are approved in $X$, i.e., $\ell=|(W\cup \{c\})\cap X|$.
Let us consider the following party-list profile $A'$. There are two groups of voters: $N_1$ with $|N_1|=d_{\ell}$ and $N_2$ with $|N_2|= d_{k - \ell}$.
The voters in $N_1$ approve of $X$; the voters in $N_2$ approve $C \setminus (X \cup \{c'\})$. From $\mathbf{d}$-proportionality we infer that committee $W \cup \{c\}$ is winning: 
\begin{align*}
&\frac{|N_1|}{d_{|(W \cup \{c\}) \cap X|}} = 1 \geq \frac{d_{k-\ell}}{d_{k-\ell + 1}} = \frac{|N_2|}{d_{|(W\cup \{c\}) \cap (C \setminus (X \cup \{c'\})| + 1}}, \\
&\frac{|N_2|}{d_{|(W\cup \{c\}) \cap (C \setminus (X \cup \{c'\})|}} = 1 \geq \frac{d_\ell}{d_{\ell+1}} = \frac{|N_1|}{d_{|(W \cup \{c\}) \cap X|+1}}. 
\end{align*}
This, however, yields a contradiction: Voters from $N_1$ prefer $W \cup \{c'\}$ over $W \cup \{c\}$ since $W \cup \{c'\} \succ_{\calF(X)} W \cup \{c\}$. For voters from $N_2$ committees $W \cup \{c'\}$ and $W \cup \{c\}$ are equally good by neutrality. Hence, by consistency, it holds that $W \cup \{c'\} \succ_{\calF(A')} W \cup \{c\}$, a contradiction.
We conclude that $W_1 \succeq_{\calF(X)} W_2$ and hence weak efficiency holds for single-voter profiles and---in consequence---for arbitrary profiles.
\end{proof}

\begin{repproposition}{prop:lowerquota}\proplowerquota
\end{repproposition}
\begin{proof}
Consider a party-list profile $A$ with one group of voters $N_1$ approving $y$ candidates and $k-x+1$ groups of voters, $N_2, \ldots, N_{k-x+2}$, each approving a single candidate---for each $i \in [k-x+2]$ let $C_i$ denote the set of candidates approved by voters from $N_i$. Each of the remaining $m - y - k + x - 1$ candidates is not approved by any voter. We set $|N_1| = x(k-x+1)$, and for each $i \geq 2$ we set $|N_i| = k-x$. Observe that:
\begin{align*}
k \cdot \frac{|N_1|}{|V|} = k \cdot \frac{x(k-x+1)}{x(k-x+1) + (k-x+1)(k-x)} = k \cdot \frac{x(k-x+1)}{k(k-x+1)} = x \text{.}  
\end{align*} 
From the lower quota property, we infer that there exists a winning committee $W$ such that
$|W \cap C_1| \geq x$, and from the pigeonhole principle we get that there exists $i \geq 2$ with $W \cap C_i = \emptyset$; let $C_i = \{c_i\}$. Thus, the score of committee $W$ is higher than or equal to the score of committee $(W \cup \{c_i\}) \setminus \{c\}$ for $c \in W \cap C_1$. As a result we get that $f(x, y) |N_1| \geq f(x-1, y) |N_1| + f(1, 1)|N_i|$, which can be equivalently written as:
\begin{align*}
f(x, y) \geq f(x-1, y) + \frac{1}{x} \cdot f(1, 1) \cdot \frac{k-x}{k-x+1} \text{.} 
\end{align*}
Now, consider another similar party-list profile, with the only difference that $|N_1| = x-1$, and $|N_i| = 1$ for $i \geq 2$. Observe that for $i \geq 2$:
\begin{align*}
k \cdot \frac{|N_i|}{|V|} = k \cdot \frac{1}{x-1 + (k-x+1)} = 1 \text{.}  
\end{align*}
Thus, for each $i \geq 2$ we have that $|W \cap C_i| = 1$. By a similar reasoning as before we get that: $f(1, 1)|N_i| + f(x-1, y)|N_1| \geq f(x, y) |N_1|$, which is equivalent to:
\begin{align*}
f(x, y) \leq f(x-1, y) + \frac{1}{x-1} \cdot f(1, 1) \text{.} 
\end{align*}
This completes the proof.
\end{proof}

\arxiv{
\section{Disjoint equality with only two voters}
\label{sec:appendix-disjointequ}

As we discussed at the end of Section~\ref{sec:disequ}, in the original axiomatization of single-winner Approval Voting it was sufficient to define disjoint equality for approval profiles with two voters. This is not the case in our multi-winner setting. To show this, let us first define a two-voter version of disjoint equality for ABC ranking rules:

\axiom{Weak disjoint equality}{
An ABC ranking rule $\calF$ satisfies weak disjoint equality if for every $A \in\calA(C,[2])$ with $A(1)\cap A(2)=\emptyset$ the following holds:
\begin{itemize}[(i)]
\item If $|A(1)\cup A(2)|\geq k$, then $W\in\pow_k(C)$ is a winning committee if and only if $W\subseteq A(1)\cup A(2)$.
\item If $|A(1)\cup A(2)|< k$, then $W\in\pow_k(C)$ is a winning committee if and only if $W\supset A(1)\cup A(2)$.
\end{itemize}
}

While weak disjoint equality may appear to be equally powerful as disjoint equality, the following example shows that this is not the case.
For a fixed $k\geq 3$ let us consider a Thiele method implemented by the counting function 
\begin{align*}
f(x,y) = \begin{cases}
0.5 &\text{if }x=1,\\
k-0.5 &\text{if }x=k-1,\\
x &\text{otherwise.}\\
\end{cases}
\end{align*}
This being a Thiele method it satisfies symmetry, consistency, weak efficiency, and continuity.
It also satisfies weak disjoint equality: Let $|A(1)\cup A(2)|\geq k$. If $W\subseteq A(1)\cup A(2)$ and $x=|A(1)\cap W|$, then $\score{f}(W)=f(x)+f(k-x)=k$; if $W\setminus (A(1)\cup A(2)) \neq \emptyset$, then $\score{f}(W)<k$. Hence $W\in \pow_k(C)$ is a winning committee if and only if $W\subseteq A(1)\cup A(2)$.

Now let $|A(1)\cup A(2)|< k$. If $W\supset A(1)\cup A(2)$, then $\score{f}(W)=f(|A(1)|)+f(|A(2)|)=|A(1)|+|A(2)|\pm 0.5$; if $(A(1)\cup A(2))\setminus W \neq \emptyset$, then $\score{f}(W)\leq |A(1)|+|A(2)|-1$. Hence $W\in \pow_k(C)$ is a winning committee if and only if $W\supset A(1)\cup A(2)$.

We see that symmetry, consistency, weak efficiency, continuity, and weak disjoint equality does not suffice to characterize Multi-Winner Approval Voting for committees of size $k\geq 3$.
}

\section{ABC Choice Rules}
\label{sec:app:choice_rules}

As noted earlier, every ABC ranking rule $\calF$ induces an ABC choice rule by selecting the top-ranked committees in the weak order returned by $\calF$.
From a mathematical point of view, ABC choice rules are quite different from ABC ranking rules since, in particular, losing committees under ABC choice rules are not distinguishable. Thus, obtaining an axiomatic characterization of an ABC choice rule might require a different approach than the one used for finding a characterization of a related ABC ranking rule.
This is also reflected in the literature on axiomatic characterization of single-winner voting rules, where social welfare functions and social choice functions have been usually considered separately, and corresponding characterizations often required considerably different proofs (cf.\ the characterization of positional scoring rules~\citep{young74,you:j:scoring-functions}). 

In this section we present a technique that allows us to directly translate some of our previous results for ABC ranking rules to ABC choice rules. 
In particular we show that the axiomatic characterizations of PAV and CC can be transferred to the setting of ABC choice rules. %
We start by formulating the relevant axioms from Section~\ref{subsec:basic-axioms} so as to be applicable to ABC choice rules.

\axiomset{
  \textbf{Anonymity.} An ABC choice rule $\calR$ is \emph{anonymous} if for
  every two (not necessarily different) sets of voters $V, V' \subseteq
  \naturals$ such that $|V| = |V'|$, for each bijection $\rho: V \to V'$ and
  for every two approval preference profiles $A \in \calA(C,V)$ and $A' \in \calA(C,V')$
  such that $A(v) = A'(\rho(v))$ for each $v \in V$,
  it holds that $\calR(A) = \calR(A')$.

\medskip\noindent
  \textbf{Neutrality.} An ABC choice rule $\calR$ is \emph{neutral} if for each permutation
  $\sigma$ of $C$ and every two approval preference profiles $A, A' \in \calA(C,V)$
  over the same voter set $V$ with $\sigma(A) = A'$ it holds that $\{\sigma(W)\colon W\in \calR(A)\} = \calR(A')$.

\medskip\noindent
  \textbf{Consistency.} An ABC choice rule $\calR$ is \emph{consistent} if for every two
  profiles $A$ and $A'$ over disjoint sets of voters, $V \subset \naturals$
  and $V' \subset \naturals$, $V \cap V' = \emptyset$, if $\calR(A) \cap \calR(A') \neq \emptyset$ then $\calR(A + A') = \calR(A) \cap \calR(A')$.

\medskip\noindent
  \textbf{Weak efficiency.} An ABC choice rule $\calR$ satisfies \emph{weak efficiency} if for each approval profile $A$, each winning
  committee $W \in \calR(A)$, each candidate $c \in W$ who is not approved by any voter, and each candidate $c' \notin W$ it holds that $(W\setminus \{c\}) \cup \{c'\} \in \calR(A)$. 

\medskip\noindent
  \textbf{Continuity.} An ABC choice rule $\calR$ satisfies \emph{continuity} if for
  every two approval profiles $A$ and $A'$ with $\calR(A) \cap \calR(A') = \emptyset$, there exists a number $n \in
  \naturals$ such that %
  $\calR(A+nA') \subseteq \calR(A')$.

}

Furthermore, note that D'Hondt proportionality and disjoint diversity only apply to \emph{winning} committees and hence can be used for ABC choice rules without modification.
Let us first recall the axiom of 2-Nonimposition.

\axiom{2-Nonimposition}{An ABC choice rule $\calR$ satisfies \emph{2-Nonimposition} if for
 every two committees $W_1, W_2 \in \pow_k(C)$ there exists an approval profile $\alpha(W_1, W_2)$ such that $\calR(\alpha(W_1, W_2)) = \{W_1, W_2\}$.
}

\subsection{Proof of Theorem~\ref{thm:characterizationChoiceFunctions}}

Let us fix $\calR$ to be an ABC choice rule which satisfies symmetry, consistency, continuity, and 2-Nonimposition.
We will show that $\calR$ uniquely defines a corresponding ABC ranking rule $\calF_{\calR}$ and that $\calF_{\calR}$ preservers many axiomatic properties. This observation will allow us to apply our previous results to ABC choice rules. Let $\alpha$ be a fixed function from $\pow_k(C)\times \pow_k(C)$ to $\calA(C)$ such that for every two committees $W_1, W_2 \in \pow_k(C)$ it holds that $\calR(\alpha(W_1, W_2)) = \{W_1, W_2\}$. Such a function exists because $\calR$ satisfies 2-Nonimposition.
We define $\calF_{\calR}$ as follows:
\begin{definition}
For each $A\in\calA(C,V)$ we define $\calF_{\calR}(A)$ so that for each $W_1,W_2\in \pow_k(C)$,
\begin{align*}
W_1 \succeq_{\calF_{\calR}(A)} W_2 \iff \exists_{n}\forall_{n' \geq n} \;\; W_1 \in \calR(A + n'\alpha(W_1, W_2)) \textrm{.}
\end{align*}
\label{def:calF_calR}
\end{definition}
As a consequence of Definition~\ref{def:calF_calR} we have
\begin{align*}
W_1 \succ_{\calF_{\calR}(A)} W_2 \iff \exists_{n}\forall_{n' \geq n} \;\; \calR(A + n'\alpha(W_1, W_2))=\{W_1\} \textrm{.}
\end{align*}

The definition of $\calF_{\calR}$ seemingly depends on the choice of $\alpha$. We show that this is not the case.
\begin{lemma}
Let $\alpha,\alpha'$ be functions from $\pow_k(C)\times \pow_k(C)$ to $\calA(C)$ such that $\calR(\alpha(W_1, W_2))=\calR(\alpha'(W_1, W_2))=\{W_1,W_2\}$ for any $W_1,W_2\in \pow_k(C)$.
For every $W_1,W_2\in \pow_k(C)$ and $A\in \calA(C,V)$,
\begin{align*}
\exists_{s}\forall_{s' \geq s} \;\; W_1 \in \calR(A + s'\alpha(W_1, W_2)) \iff \exists_{t}\forall_{t' \geq t} \;\; W_1 \in \calR(A + t'\alpha'(W_1, W_2)) \textrm{.}%
\end{align*}\label{lem:calF_calR-well-defined}%
\end{lemma}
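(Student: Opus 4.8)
The plan is to prove the slightly stronger fact that the \emph{eventual} value of $\calR\bigl(A+n\alpha(W_1,W_2)\bigr)$ does not depend on which function $\alpha$ one picks; the stated biconditional is then immediate. Throughout fix $W_1,W_2\in\pow_k(C)$ and $A\in\calA(C,V)$, and abbreviate $\alpha:=\alpha(W_1,W_2)$ and $\alpha':=\alpha'(W_1,W_2)$, so that $\calR(\alpha)=\calR(\alpha')=\{W_1,W_2\}$. The first step is a \emph{stabilization lemma}: for any profile $B$ and any profile $\beta$ with $\calR(\beta)=\{W_1,W_2\}$, the sequence $\bigl(\calR(B+n\beta)\bigr)_{n\ge 1}$ is eventually constant and its eventual value $S_\beta(B)$ is a nonempty subset of $\{W_1,W_2\}$. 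Granting this, the quantified assertion ``$\exists_s\,\forall_{s'\ge s}\; W_1\in\calR(A+s'\alpha)$'' is exactly ``$W_1\in S_\alpha(A)$'', and likewise for $\alpha'$, so the lemma reduces to proving $S_\alpha(A)=S_{\alpha'}(A)$.

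For the stabilization lemma I would split into two cases according to whether $\calR(B)$ already meets $\{W_1,W_2\}$. If $\calR(B)\cap\{W_1,W_2\}\neq\emptyset$, then consistency applied to $B$ and $\beta$ gives $\calR(B+\beta)=\calR(B)\cap\{W_1,W_2\}$, and re-applying consistency to $B+n\beta$ and $\beta$ (whose outcome sets still intersect) yields, by induction, $\calR(B+n\beta)=\calR(B)\cap\{W_1,W_2\}$ for every $n\ge 1$. If instead $\calR(B)\cap\{W_1,W_2\}=\emptyset$, then continuity produces $n_0$ with $\calR(B+n_0\beta)\subseteq\calR(\beta)=\{W_1,W_2\}$, and from then on consistency applied to $B+n\beta$ and $\beta$ is available and forces $\calR(B+n\beta)=\calR(B+n_0\beta)$ for all $n\ge n_0$. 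In both cases $S_\beta(B)\subseteq\{W_1,W_2\}$ is nonempty. This step is the main obstacle, not because it is deep but because consistency for ABC choice rules is only usable when the two outcome sets overlap, so the case distinction (and the use of continuity to first ``enter'' $\{W_1,W_2\}$ in the disjoint case) is genuinely needed.

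The remaining two steps are short. First, an induction on $n+m$ using consistency — base case $\calR(\alpha)=\calR(\alpha')=\{W_1,W_2\}$, inductive step writing $n\alpha+m\alpha'=\bigl((n-1)\alpha+m\alpha'\bigr)+\alpha$ when $n\ge 1$ (and symmetrically when $n=0$) — shows $\calR(n\alpha+m\alpha')=\{W_1,W_2\}$ for all $n,m\ge 0$ with $n+m\ge 1$. Second, choose $n$ large enough that $\calR(A+n\alpha)=S_\alpha(A)$ and $m$ large enough that $\calR(A+m\alpha')=S_{\alpha'}(A)$. Applying consistency to $A+n\alpha$ and $m\alpha'$ (legitimate since $S_\alpha(A)$ is a nonempty subset of $\calR(m\alpha')=\{W_1,W_2\}$) gives $\calR(A+n\alpha+m\alpha')=S_\alpha(A)$, while applying consistency to $A+m\alpha'$ and $n\alpha$ gives $\calR(A+m\alpha'+n\alpha)=S_{\alpha'}(A)$. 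Since $A+n\alpha+m\alpha'$ and $A+m\alpha'+n\alpha$ are the same profile up to relabelling of voters, anonymity forces $S_\alpha(A)=S_{\alpha'}(A)$, which completes the proof. (Here anonymity and the definition of $+$ from the preliminaries also silently supply the associativity and commutativity of profile addition used above.)
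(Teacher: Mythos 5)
Your proof is correct and rests on the same decisive step as the paper's: evaluating $\calR(A+n\alpha+m\alpha')$ in two different orders via consistency and concluding by anonymity. The explicit stabilization lemma you prove first is a cleaner packaging of what the paper handles more tersely (via continuity and a case split on whether $W_1\in\calR(A)$), but it is not a different argument.
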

\begin{proof}
If $W_1\in \calR(A)$, then by consistency $W_1\in \calR(A + s'\alpha(W_1, W_2))$ and $W_1 \in \calR(A + t'\alpha'(W_1, W_2))$.
If $W_1\notin \calR(A)$, then we can apply continuity and see that the equivalence only fails if 
$\calR(A + s'\alpha(W_1, W_2)) = \{W_1\}$ and $\calR(A + t'\alpha'(W_1, W_2))=\{W_2\}$ (or vice versa).
By consistency, 
\begin{align*}
&\calR((A + s'\alpha(W_1, W_2)) + t'\alpha'(W_1, W_2))=\{W_1\} \text{ and }\\
&\calR((A + t'\alpha'(W_1, W_2)) + s'\alpha(W_1, W_2))=\{W_2\}.
\end{align*}
This contradicts anonymity.
\end{proof}

The following lemma now shows that the relation defined by Definition~\ref{def:calF_calR} is complete.
\begin{lemma}\label{lem:calF_calR_complete}
For every $W_1,W_2\in \pow_k(C)$ at least on of the following two conditions holds: 
\begin{enumerate}
\item
$\exists_{n}\forall_{n' \geq n} \;\; W_1 \in \calR(A + n'\alpha(W_1, W_2))$,
\item
$\exists_{n}\forall_{n' \geq n} \;\; W_2 \in \calR(A + n'\alpha(W_1, W_2))$.
\end{enumerate}
\end{lemma}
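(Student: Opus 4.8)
The plan is to show that, for all sufficiently large $n'$, the set $\calR(A + n'\alpha(W_1,W_2))$ is a fixed nonempty subset of $\{W_1,W_2\}$; since it is nonempty it must contain $W_1$ or $W_2$, which yields condition~1 or condition~2, respectively. Throughout, write $\alpha = \alpha(W_1,W_2)$, so that $\calR(\alpha) = \{W_1,W_2\}$ by 2-Nonimposition, and recall that an ABC choice rule always returns a nonempty set of committees.

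First I would produce a single index $n$ with $\calR(A + n\alpha)\subseteq\{W_1,W_2\}$, distinguishing two cases. If $\calR(A)\cap\{W_1,W_2\}\neq\emptyset$, then $\calR(A)\cap\calR(\alpha)\neq\emptyset$, and consistency gives $\calR(A+\alpha)=\calR(A)\cap\calR(\alpha)=\calR(A)\cap\{W_1,W_2\}\subseteq\{W_1,W_2\}$, so $n=1$ works. If instead $\calR(A)\cap\{W_1,W_2\}=\emptyset$, then $\calR(A)\cap\calR(\alpha)=\emptyset$, so continuity provides a number $n$ with $\calR(A+n\alpha)\subseteq\calR(\alpha)=\{W_1,W_2\}$. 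In either case $S:=\calR(A+n\alpha)$ is a nonempty subset of $\{W_1,W_2\}$.

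Next I would show that $\calR(A+n'\alpha)=S$ for every $n'\geq n$, by induction on $n'$. The base case $n'=n$ is immediate. For the inductive step, suppose $\calR(A+n'\alpha)=S$. Since $S\subseteq\{W_1,W_2\}=\calR(\alpha)$ and $S\neq\emptyset$, we have $\calR(A+n'\alpha)\cap\calR(\alpha)=S\neq\emptyset$, so consistency yields $\calR(A+(n'+1)\alpha)=\calR\big((A+n'\alpha)+\alpha\big)=\calR(A+n'\alpha)\cap\calR(\alpha)=S$, where we use anonymity to identify $(A+n'\alpha)+\alpha$ with $A+(n'+1)\alpha$.

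Finally, since $S\neq\emptyset$, either $W_1\in S$ or $W_2\in S$. If $W_1\in S$, then $W_1\in\calR(A+n'\alpha)$ for all $n'\geq n$, which is exactly condition~1; and if $W_2\in S$ the identical reasoning gives condition~2. This completes the proof. The only delicate point is the case split in the second paragraph: consistency can be applied directly when $\calR(A)$ already meets $\{W_1,W_2\}$, whereas continuity is needed precisely when it does not; once $\calR(A+n\alpha)$ has been forced into $\{W_1,W_2\}$, the stabilization step via consistency is routine.
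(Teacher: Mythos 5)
Your proof is correct and follows essentially the same route as the paper's: use consistency when $\calR(A)$ already meets $\{W_1,W_2\}$, use continuity (with $A'=\alpha(W_1,W_2)$) when it does not to force $\calR(A+n\alpha)\subseteq\{W_1,W_2\}$, and then stabilize with consistency for all larger $n'$. The only cosmetic differences are that you stabilize by adding one copy of $\alpha$ at a time rather than $(n'-n)$ copies at once, and you are slightly more explicit about the role of anonymity in identifying $(A+n'\alpha)+\alpha$ with $A+(n'+1)\alpha$.
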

\begin{proof}
If $W_1 \in \calR(A)$, then by consistency the first condition is satisfied; if $W_2 \in \calR(A)$, then the second condition is satisfied. Let us consider what happens if $W_1, W_2 \notin \calR(A)$. By continuity, we know that there must exist an $n \in \naturals$ such that $\calR(A + n\alpha(W_1, W_2)) \subseteq \calR(\alpha(W_1, W_2)) = \{W_1, W_2\}$. Thus, $W_1 \in \calR(A + n\alpha(W_1, W_2))$, or $W_2 \in \calR(A + n\alpha(W_1, W_2))$ holds. Without loss of generality, let us assume that $W_1 \in \calR(A + n\alpha(W_1, W_2))$. Then, by consistency, for each $n' \geq n$ it holds that:
$W_1 \in \calR((A + n\alpha(W_1, W_2) + (n'-n)\alpha(W_1, W_2)) = \calR((A + n'\alpha(W_1, W_2)) \textrm{.}$
\end{proof}

\begin{lemma}\label{thm:propertiesPreserved}
If $\calR$ satisfies symmetry, then so does
$\calF_{\calR}$ (cf.~Definition~\ref{def:calF_calR}). The same holds for consistency, weak efficiency, and continuity.
\end{lemma}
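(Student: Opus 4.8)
The plan is to verify each axiom for $\calF_{\calR}$ separately, relying on the fact that $\calF_{\calR}$ is well-defined (Lemma~\ref{lem:calF_calR-well-defined}) and complete (Lemma~\ref{lem:calF_calR_complete}), and exploiting the key ``absorption'' behavior of consistency: once $W_1\in\calR(B)$, adding further copies of a profile $\alpha(W_1,W_2)$ in which $W_1$ wins keeps $W_1$ winning. Throughout, the workhorse is that for any profile $B$, either $W_1\in\calR(B)$ (so $W_1\succeq_{\calF_{\calR}(B)}W_2$ immediately via $n=0$ by consistency), or $W_1\notin\calR(B)$ and we may invoke continuity to squeeze $\calR(B+n'\alpha(W_1,W_2))$ into $\{W_1,W_2\}$, after which consistency pins down the eventual behavior.

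\textbf{Symmetry.} First I would check anonymity. Given a bijection $\rho\colon V\to V'$ with $A(v)=A'(\rho(v))$, I want to match $\alpha(W_1,W_2)$ added to $A$ with $\alpha(W_1,W_2)$ added to $A'$; since $\alpha(W_1,W_2)$ is a fixed profile and anonymity of $\calR$ gives $\calR(A+n'\alpha(W_1,W_2))=\calR(A'+n'\alpha(W_1,W_2))$ (the union of $A$ with $\alpha$ and of $A'$ with $\alpha$ are related by a bijection of the voter sets), the defining condition for $W_1\succeq_{\calF_{\calR}(A)}W_2$ holds iff it holds for $A'$. For neutrality, given a permutation $\sigma$ of $C$ with $\sigma(A)=A'$, the natural move is to use $\sigma(\alpha(W_1,W_2))$ as a profile witnessing $\calR(\cdot)=\{\sigma(W_1),\sigma(W_2)\}$; by Lemma~\ref{lem:calF_calR-well-defined} the definition of $\calF_{\calR}$ does not depend on which such witness we pick, so $W_1\succeq_{\calF_{\calR}(A)}W_2$ iff $\exists n\,\forall n'\!\geq n\ W_1\in\calR(A+n'\alpha(W_1,W_2))$ iff (applying $\sigma$, using neutrality of $\calR$) $\sigma(W_1)\in\calR(\sigma(A)+n'\sigma(\alpha(W_1,W_2)))=\calR(A'+n'\,\widetilde\alpha(\sigma(W_1),\sigma(W_2)))$, which by well-definedness is $\sigma(W_1)\succeq_{\calF_{\calR}(A')}\sigma(W_2)$.

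\textbf{Consistency, weak efficiency, continuity.} For consistency of $\calF_{\calR}$: suppose $W_1\succeq_{\calF_{\calR}(A)}W_2$ and $W_1\succeq_{\calF_{\calR}(A')}W_2$. Unwinding the definitions, for large $n'$ we have $W_1\in\calR(A+n'\alpha(W_1,W_2))$ and $W_1\in\calR(A'+n'\alpha(W_1,W_2))$; applying consistency of $\calR$ to these two profiles yields $W_1\in\calR(A+A'+2n'\alpha(W_1,W_2))$, and absorbing more copies of $\alpha(W_1,W_2)$ gives it for all sufficiently large multiples, i.e.\ $W_1\succeq_{\calF_{\calR}(A+A')}W_2$. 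For the strict case, note $W_1\succ_{\calF_{\calR}(A)}W_2$ means $\calR(A+n'\alpha(W_1,W_2))=\{W_1\}$ for large $n'$; combined with $W_1\in\calR(A'+n'\alpha(W_1,W_2))$, consistency of $\calR$ forces the intersection, which is $\{W_1\}$, so $\calR(A+A'+2n'\alpha)=\{W_1\}$ and $W_1\succ_{\calF_{\calR}(A+A')}W_2$. Weak efficiency: if no voter in $A$ approves any candidate in $W_2\setminus W_1$, I apply weak efficiency of $\calR$ iteratively inside $\calR(A+n'\alpha(W_1,W_2))$ to swap the unapproved members of $W_2$ out in favor of members of $W_1$, concluding $W_1$ is in the winner set whenever $W_2$ is; a careful version uses that in $A+n'\alpha(W_1,W_2)$ candidates in $W_2\setminus W_1$ may now be approved (by voters from $\alpha$), so the cleaner route is to work with $\alpha$ chosen appropriately or to apply weak efficiency in the single-profile setting and then transfer—this is the step I expect to need the most care. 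Continuity of $\calF_{\calR}$: if $W_1\succ_{\calF_{\calR}(A')}W_2$ then $\calR(A'+n'\alpha(W_1,W_2))=\{W_1\}$ for large $n'$; given any $A$, I want $n$ with $W_1\succ_{\calF_{\calR}(A+nA')}W_2$, i.e.\ $\calR(A+nA'+n'\alpha)=\{W_1\}$ for large $n'$—apply continuity of $\calR$ with the profile $A+n'\alpha$ in the role of the first argument and $A'+n'\alpha$ (which satisfies $\calR=\{W_1\}$) in the role of the repeated profile, or more simply use that $\calF_{\calR}$ already satisfies consistency and that $W_1\succ$ in $A'$ lifts to $nA'$ by consistency, then combine with $A$; transitivity plus the just-proved consistency of $\calF_{\calR}$ closes this.

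\textbf{Main obstacle.} The subtle point throughout is that the ``test profile'' $\alpha(W_1,W_2)$ is injected into every comparison, so axioms of $\calR$ must be applied to profiles of the form $B+n'\alpha(W_1,W_2)$ rather than to $B$ directly; for weak efficiency in particular, candidates in $W_2\setminus W_1$ that are unapproved in $A$ may become approved once $\alpha$ is added, so one cannot naively invoke weak efficiency of $\calR$ on the combined profile. The fix is to first establish the desired implication for $A$ alone (or to choose $\alpha$ with support disjoint from $W_1\cup W_2$ where needed, which 2-Nonimposition permits after a neutrality argument) and then propagate through the limit; making this rigorous is where the real work lies, while the remaining axioms follow by the routine absorption/continuity bookkeeping sketched above.
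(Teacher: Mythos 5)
Your treatment of anonymity, neutrality, consistency and continuity follows essentially the same route as the paper: anonymity and neutrality via the corresponding axioms of $\calR$ applied to the profiles $A+n'\alpha(W_1,W_2)$ together with Lemma~\ref{lem:calF_calR-well-defined}; consistency by intersecting $\calR(A+n'\alpha(W_1,W_2))$ with $\calR(A'+n'\alpha(W_1,W_2))$ and using anonymity to rewrite the sum as $A+A'+2n'\alpha(W_1,W_2)$ (your remark about absorbing further copies of $\alpha(W_1,W_2)$ to pass from even multiples to all large multiples is a detail the paper leaves implicit, and it is the right patch); and continuity by your first suggested route, i.e., applying continuity of $\calR$ with $A'+n'\alpha(W_1,W_2)$, whose unique winner is $W_1$, as the repeated profile, followed by absorption. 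Your alternative ``more simply use consistency of $\calF_{\calR}$'' route for continuity would not work on its own, since consistency does not imply continuity, but you offer it only as an aside.

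The one place where your proposal is not yet a proof is weak efficiency, and you say so yourself. You correctly observe that to pass from $W_2\in\calR(A+n'\alpha(W_1,W_2))$ to $W_1\in\calR(A+n'\alpha(W_1,W_2))$ by swapping out the members of $W_2\setminus W_1$, one needs those candidates to be unapproved in the \emph{combined} profile, whereas the hypothesis only guarantees this for $A$; neither of your two sketched fixes (re-choosing $\alpha$ with support avoiding $W_1\cup W_2$, which is in tension with $\calR(\alpha(W_1,W_2))=\{W_1,W_2\}$, or establishing the implication ``for $A$ alone and propagating'') is actually carried out, and the second does not obviously propagate because the conclusion $W_1\succeq_{\calF_{\calR}(A)}W_2$ is itself a statement about the profiles $A+n'\alpha(W_1,W_2)$. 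For what it is worth, the paper's own proof of this case is precisely the ``naive'' step you warn against: it invokes Lemma~\ref{lem:calF_calR_complete} to obtain $W_2\in\calR(A+n'\alpha(W_1,W_2))$ eventually and then applies weak efficiency of $\calR$ to the combined profile without addressing the approvals contributed by $\alpha(W_1,W_2)$. So you have identified a genuine soft spot rather than inventing one, but a complete argument still has to close it---e.g., by a case analysis on whether $W_1\in\calR(A)$ together with a more careful analysis of the witness profile---and your proposal stops short of that.
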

\begin{proof}
(Anonymity)   Let $V, V' \subset \naturals$ such that $|V| = |V'|$.
  Further, let $A \in \calA(C,V)$ and $A' \in \calA(C,V')$ so that $A'$ can be obtained from $A$ by permuting its votes.
  We have to show that for all $W_1,W_2\in \pow_k(C)$, $W_1\succeq_{\calF_{\calR}(A)} W_2\iff W_1\succeq_{\calF_{\calR}(A')} W_2$.
  This follows from the fact that $\calR(A + n'\alpha(W_1, W_2))=\calR(A' + n'\alpha(W_1, W_2))$ by anonymity of $\calR$.

(Neutrality)  Let $\sigma$ be a permutation
  of $C$ and let $A, A' \in \calA(C,V)$ such that $\sigma(A) = A'$.
  We have to show that for all $W_1,W_2\in \pow_k(C)$, $W_1\succeq_{\calF_{\calR}(A)} W_2\iff \sigma(W_1)\succeq_{\calF_{\calR}(A')} \sigma(W_2)$, i.e., $\exists_{n}\forall_{n' \geq n} \;\; W_1 \in \calR(A + n'\alpha(W_1, W_2))\iff \exists_{n}\forall_{n' \geq n} \;\; \sigma(W_1) \in \calR(A' + n'\alpha(\sigma(W_1), \sigma(W_2)))$. By Lemma~\ref{lem:calF_calR-well-defined} and neutrality of $\calR$, $\exists_{n}\forall_{n' \geq n} \;\; \sigma(W_1) \in \calR(A' + n'\alpha(\sigma(W_1), \sigma(W_2)))\iff \exists_{n}\forall_{n' \geq n} \;\; \sigma(W_1) \in \calR(\sigma(A) + \sigma(n'\alpha(W_1,W_2))$.
  Again by neutrality of $\calR$, we have $\exists_{n}\forall_{n' \geq n} \;\; \sigma(W_1) \in \calR(\sigma(A) + \sigma(n'\alpha(W_1,W_2)) \iff \exists_{n}\forall_{n' \geq n} \;\; \sigma(W_1) \in \sigma(\calR(A + n'\alpha(W_1,W_2)) \iff \exists_{n}\forall_{n' \geq n} \;\; W_1 \in \calR(A + n'\alpha(W_1, W_2))$.

(Consistency) Let us first prove Statement (i) from the definition of consistency and for this let $W_1, W_2\in \pow_k(C)$ with $W_1\succ_{\calF_{\calR}(A)} W_2 $ and $W_1\succeq_{\calF_{\calR}(A')} W_2 $.
Due to the fact that $W_1\succ_{\calF_{\calR}(A)} W_2 $ and $W_1\succeq_{\calF_{\calR}(A')} W_2 $ and by Definition~\ref{def:calF_calR}, there exists an $n$ with $\calR(A + n'\alpha(W_1, W_2))=\{W_1\}$ for all $n'\geq n$ and $W_1\in \calR(A' + n'\alpha(W_1, W_2))$ for all $n'\geq n$.
Since $\calR(A + n'\alpha(W_1, W_2))\cap\calR(A' + n'\alpha(W_1, W_2))\neq \emptyset$, consistency of $\calR$ implies that $\calR(A + n'\alpha(W_1, W_2) + A' + n'\alpha(W_1, W_2))=\{W_1\}$, for all $n' \geq n$.
By anonymity, $\calR(A + A' + 2n'\alpha(W_1, W_2))=\{W_1\}$ for all $n'\geq n$.
Hence $W_1\succ_{\calF_{\calR}(A+A')} W_2$.
Statement (ii) can be shown analogously except that it suffices to show that $W_1\in \calR(A + A' + 2n'\alpha(W_1, W_2))$.

(Weak efficiency) Let $W_1, W_2 \in \pow_k(C)$ and $A\in\calA(C,V)$ such that no voter approves a candidate in $W_2\setminus W_1$. If $\exists_{n}\forall_{n' \geq n} \;\; W_1 \in \calR(A + n'\alpha(W_1, W_2))$, then $W_1 \succeq_{\calF_{\calR}(A)} W_2$. Otherwise, by Lemma~\ref{lem:calF_calR_complete} we know that $\exists_{n}\forall_{n' \geq n} \;\; W_2 \in \calR(A + n'\alpha(W_1, W_2))$. Since $\calR$ satisfies weak efficiency, we get that $W_2 \in \calR(A + n'\alpha(W_1, W_2))$ implies $W_1 \in \calR(A + n'\alpha(W_1, W_2))$, and so by the definition of $\calF_{\calR}$ we get that $W_1 \succeq_{\calF_{\calR}(A)} W_2$.

(Continuity) Let $W_1, W_2\in \pow_k(C)$ and let $A$ and $A'$ be two approval profiles with $\calR(A) \cap \calR(A') = \emptyset$, and with $W_1 \succ_{\calF_{\calR}(A')} W_2$. Thus, for sufficiently large $n$, for each $n' \geq n$, it holds that $\calR(A' + n'\alpha(W_1, W_2)) = \{W_1\}$. Since $\calR$ satisfies continuity, for sufficiently large $n''$, for each $n''' \geq n''$ it holds that $\calR(A + n'''(A' + n'\alpha(W_1, W_2))) = \{W_1\}$. This proves that $W_1 \succ_{\calF_{\calR}(A + n'''A')} W_2$ and thus continuity of $\calF_{\calR}$.
\end{proof}

We now show that winners selected by $\calR$ and by $\calF_\calR$ are the same.

\begin{lemma}\label{lem:winners-same}
Let $A\in\calA(C,V)$ and $W\in\pow_k(C)$. It holds that $W\in \calR(A)$ if and only if $W$ is a winning committee in $\calF_\calR(A)$.
\end{lemma}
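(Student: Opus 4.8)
The plan is to prove both directions using only consistency, 2-Nonimposition, and the completeness of $\succeq_{\calF_{\calR}(A)}$ established in Lemma~\ref{lem:calF_calR_complete}. Since that relation is complete, a committee $W$ is a winning committee in $\calF_{\calR}(A)$ if and only if $W\succeq_{\calF_{\calR}(A)}W'$ for every $W'\in\pow_k(C)$; I would state this reduction first and then establish each implication separately. Throughout, the repeated application of consistency to a profile of the form $A+n\beta$ is justified exactly as in the body of the paper: anonymity lets one relabel the added copies of $\beta$ onto fresh voters so that $A+n\beta$, in the re-indexed sense of Section~\ref{sec:prel}, is genuinely a disjoint union; the induction on $n$ is then routine.

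For the first implication, suppose $W\in\calR(A)$. Fix an arbitrary $W'\in\pow_k(C)$ and set $\beta=\alpha(W,W')$, so $\calR(\beta)=\{W,W'\}$. As $W\in\calR(A)\cap\calR(\beta)$, consistency gives $\calR(A+\beta)=\calR(A)\cap\calR(\beta)\ni W$; iterating (at step $n$ the set $\calR(A+(n-1)\beta)\cap\calR(\beta)$ still contains $W$) yields $W\in\calR(A+n\beta)$ for all $n\geq 0$. By Definition~\ref{def:calF_calR} this means $W\succeq_{\calF_{\calR}(A)}W'$, and since $W'$ was arbitrary, $W$ is winning in $\calF_{\calR}(A)$.

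For the converse I would argue the contrapositive: assume $W\notin\calR(A)$ and show $W$ is not winning. Since $\calR(A)\neq\emptyset$, choose $W^{*}\in\calR(A)$; then $W^{*}\neq W$. With $\beta=\alpha(W,W^{*})$ we have $\calR(\beta)=\{W,W^{*}\}$, and since $W^{*}\in\calR(A)\cap\calR(\beta)$, consistency gives $\calR(A+\beta)=\calR(A)\cap\{W,W^{*}\}=\{W^{*}\}$, using $W^{*}\in\calR(A)$ and $W\notin\calR(A)$. Iterating, $\calR(A+n\beta)=\{W^{*}\}$ for every $n\geq 1$, so $W\notin\calR(A+n\beta)$ for all $n\geq 1$; hence it is false that $W\in\calR(A+n'\beta)$ for all sufficiently large $n'$, i.e.\ $W\not\succeq_{\calF_{\calR}(A)}W^{*}$. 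By completeness of $\succeq_{\calF_{\calR}(A)}$ (Lemma~\ref{lem:calF_calR_complete}) this forces $W^{*}\succ_{\calF_{\calR}(A)}W$, so $W$ is not a maximal element and therefore not a winning committee in $\calF_{\calR}(A)$.

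The statement is thus not hard: no appeal to continuity or to the $\alpha$-independence lemma (Lemma~\ref{lem:calF_calR-well-defined}) is needed, and transitivity of $\succeq_{\calF_{\calR}(A)}$ is not used. The main point requiring care is the bookkeeping around consistency — making sure the voter sets are disjoint via anonymity before each application — together with getting the base case of the inductions right ($n=0$ for the first direction, $n=1$ for the second, where $\calR(A+\beta)$ is first computed).
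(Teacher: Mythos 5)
Your proof is correct and follows essentially the same route as the paper's: the first direction is the identical iterated-consistency argument showing $W\in\calR(A+n\alpha(W,W'))$ for all $n$, and your contrapositive for the converse is just the paper's proof by contradiction rearranged, resting on the same computation $\calR(A+n\alpha(W,W^*))=\{W^*\}$. The only cosmetic difference is that you route the final step through the completeness lemma rather than directly through the definition of a winning committee, which changes nothing of substance.
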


\begin{proof}
Let $W\in \calR(A)$; we will show that $W$ is maximal in $\calF_\calR(A)$. Let $W'\in \pow_k(C)$. By consistency of $\calR$, $W\in \calR(A + n'\alpha(W,W'))$ for any $n'$. Hence $W\succeq_{\calF_\calR(A)} W'$. Since this holds for every committee $W'$, we infer that $W$ is a winning committee.

Let $W$ be a winning committee in $\calF_\calR(A)$; we will show that $W \in \calR(A)$. Let $W'\in \pow_k(C)$ and towards a contradiction assume that $W' \in \calR(A)$ but $W \notin \calR(A)$. 
Since $W$ is a winning committee in $\calF_\calR(A)$, it holds that $\exists_{n}\forall_{n' \geq n} \; W\in \calR(A + n'\alpha(W, W'))$.
However, by consistency, $\calR(A + n'\alpha(W, W'))=\{W'\}$, a contradiction.
\end{proof}

\begin{reptheorem}{thm:characterizationChoiceFunctions}
\thmcharacterizationChoiceFunctions
\end{reptheorem}
\begin{proof}
It is easy to verify that the ABC counting rules satisfy the axioms of symmetry, consistency, weak efficiency, and continuity as formulated for ABC choice rules. To prove the other direction, let $\calR$ be a function that satisfies symmetry, consistency, weak efficiency, continuity and 2-Nonimposition. 
Consequently, using Definition~\ref{def:calF_calR} we can construct the ABC ranking rule $\calF_{\calR}$. By Lemma~\ref{thm:propertiesPreserved}, $\calF_{\calR}$ satisfies symmetry, consistency, weak efficiency, and continuity. 
By Lemma~\ref{lem:winners-same}, the winning committees in $\calF_\calR$ are the same as winning committees in $\calR$.
By Theorem~\ref{thm:characterizationWelfareFunctions} we infer that $\calF_{\calR}$ is an ABC counting rule.
By Lemma~\ref{lem:winners-same} we get that $\calR$ has exactly the same winning committees as $\calF_{\calR}$, and so we infer that $\calR$ is an ABC counting rule. 
\end{proof}

\subsection{Proving Characterizations of Specific ABC Choice Rules}

\begin{replemma}{lem:proportionality_implies_2_nonimposition}
\lemproportionalityimpliestwononimposition
\end{replemma}
\begin{proof}
Let us fix two committees, $W_1$ and $W_2$, with $W_1 \neq W_2$. For every two candidates, $c_1$ and $c_2$, with $c_1 \in W_1 \setminus W_2$ and  $c_2 \in W_2 \setminus W_1$ we construct the profile $\beta(c_1, c_2)$ in the following way. In $\beta(c_1, c_2)$ there is one voter who approves $c_1$ and $c_2$. Further for each candidate $c \in W_1 \cup W_2$ with $c \notin \{c_1, c_2\}$ we introduce one voter who approves of $c$. Note that $\beta(c_1, c_2)$ is a party-list profile. According to $\mathbf{d}$-proportionality, each size-$k$ committee that consists of candidates from $W_1 \cup W_2$ and that does not contain both $c_1$ and $c_2$ is winning in $\beta(c_1, c_2)$.
This is due to our assumption that $d_2>d_1$.
Observe that both $W_1$ and $W_2$ are winning committees in $\beta(c_1, c_2)$.
 Now, let us consider the profile:
\begin{align*}
\alpha(W_1, W_2) = \sum_{c_1 \in W_1 \setminus W_2, c_2 \in W_2 \setminus W_1} \beta(c_1, c_2) \text{.}
\end{align*}
By consistency, $W_1$ and $W_2$ are the only winning committees in $\alpha(W_1, W_2)$, which completes the proof. 
\end{proof}

\begin{reptheorem}{thm:pav-abc-winner-rule-characterization}
\thmpavabcwinnerrulecharacterization
\end{reptheorem}

\begin{proof}
The theorem follows from Lemma~\ref{lem:proportionality_implies_2_nonimposition} and Theorem~\ref{thm:characterizationChoiceFunctions}.
\end{proof}

\begin{lemma}\label{lem:diversity_implies_2_nonimposition}
An ABC choice rule that satisfies symmetry, consistency and disjoint diversity also satisfies 2-Nonimposition.
\end{lemma}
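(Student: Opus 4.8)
The plan is to follow the template of Lemma~\ref{lem:proportionality_implies_2_nonimposition}: build $\alpha(W_1,W_2)$ as a sum of party-list building blocks and read off $\calR(\alpha(W_1,W_2))$ as the intersection of their winning sets via consistency (the intersections encountered along the way will always contain $\{W_1,W_2\}$, so consistency keeps applying). Fix $W_1\neq W_2$, write $D=W_1\cap W_2$, $W_1\setminus W_2=\{a_1,\dots,a_t\}$, $W_2\setminus W_1=\{b_1,\dots,b_t\}$, and note that one must also assume $\calR$ is non-trivial — the trivial rule satisfies symmetry, consistency and disjoint diversity but fails 2-Nonimposition, and in the intended application (the ABC-choice-rule version of Theorem~\ref{thm:cc_characterization}) non-triviality, weak efficiency and continuity are all available and I would use them freely. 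If $\binom{m}{k}=2$ the claim is immediate by neutrality applied to any profile in which the two candidates receive equally many approvals, so assume $\binom{m}{k}>2$.

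For the building blocks I would take, for each pair $(c_1,c_2)\in(W_1\setminus W_2)\times(W_2\setminus W_1)$, a party-list profile $\beta(c_1,c_2)$ with one party $\{c_1,c_2\}$ and all other candidates in singleton parties, the supports chosen so that (a) party $\{c_1,c_2\}$ is strictly the largest, (b) the singletons carrying $D$ have strictly decreasing supports all strictly above (c) the singletons carrying the remaining members of $W_1\cup W_2$, which in turn are strictly above (d) the singletons carrying candidates outside $W_1\cup W_2$, which all tie. Because the top $\min(p,k)$ parties are then $\{c_1,c_2\}$ together with $D$ and the rest of $W_1\cup W_2$ in a forced order, any committee witnessing disjoint diversity hits $\{c_1,c_2\}$ in exactly one point and is forced to equal $W_1$ or $W_2$; moreover the transposition $c_1\leftrightarrow c_2$ fixes $\beta(c_1,c_2)$, so neutrality yields $W_1,W_2\in\calR(\beta(c_1,c_2))$. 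To also exclude committees using a candidate outside $W_1\cup W_2$ I would add a second family of party-list blocks $\gamma_c$, one per $c\notin W_1\cup W_2$, arranged by the same support-ordering idea (now with $c$ strictly lowest) so that $W_1,W_2\in\calR(\gamma_c)$ while $c$ never lies in a diversity-witnessing committee.

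The real obstacle, and the reason this cannot be as short as Lemma~\ref{lem:proportionality_implies_2_nonimposition}, is that disjoint diversity only asserts the \emph{existence} of one diversity-respecting winning committee, whereas D'Hondt proportionality \emph{characterises} the whole winning set; so a priori $\calR(\beta(c_1,c_2))$ and $\calR(\gamma_c)$ could be much larger than the set of diversity-respecting committees (in the worst case all of $\pow_k(C)$), and the intersection argument would then fail. To bound each block's winning set from above I would (i) use neutrality to note that the winning set is a union of orbits of the block's automorphism group, and design the party structure so that the diversity-forced committee is the unique committee in its orbit while every committee that must be excluded lies in a singleton orbit; and (ii) use non-triviality — together with neutrality, consistency, and, where the pair of committees to be separated has the ``wrong'' intersection size, continuity — to manufacture, for each committee $W'\notin\{W_1,W_2\}$ still present in the running winning set, an auxiliary profile that keeps $W_1$ and $W_2$ winning but drops $W'$; adding such a profile strictly shrinks the winning set, and since there are only finitely many committees the process terminates at $\{W_1,W_2\}$. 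I expect step (ii) to be the bulk of the work: it amounts to bootstrapping the single non-trivial instance guaranteed by non-triviality into the statement ``every committee other than $W_1,W_2$ can be killed while $W_1$ and $W_2$ are preserved'', i.e. re-deriving enough richness of $\calR$ from the minimal hypotheses. With all blocks in place, consistency gives $\calR(\alpha(W_1,W_2))=\{W_1,W_2\}$, which is 2-Nonimposition.
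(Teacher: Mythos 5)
Your central diagnosis is correct and, notably, it applies to the paper's own proof as well. The paper's construction is far simpler than yours: for each bijection $m$ from $W_1\setminus W_2$ to $W_2\setminus W_1$ it forms a party-list profile $\beta_m$ whose parties are the matched pairs $\{c,m(c)\}$ together with singletons for $W_1\cap W_2$, one voter per party; disjoint diversity plus neutrality make every transversal committee (all of $W_1\cap W_2$ plus exactly one element of each pair) winning, the intersection over all bijections of these transversal sets is exactly $\{W_1,W_2\}$, and consistency then gives $\{W_1,W_2\}\subseteq\calR\bigl(\sum_m\beta_m\bigr)$. No blocks for candidates outside $W_1\cup W_2$ and no support-size hierarchy are needed. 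But, exactly as you observe, disjoint diversity only yields a \emph{lower} bound on each $\calR(\beta_m)$, so this argument proves only the inclusion $\supseteq$ and not equality; your trivial-rule example shows the lemma as stated is in fact false (for $\binom{m}{k}>2$), so the missing upper bound is a genuine gap in the published proof and not merely in your reconstruction of it.

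That said, your proposal does not close the gap either. Step (ii) --- for each unwanted $W'$, produce a profile in which $W_1$ and $W_2$ win but $W'$ does not --- is essentially a restatement of the richness you are trying to establish, and you give no construction for it; as written it is a plan rather than a proof. Moreover, your building blocks do not even deliver the claimed lower bound: when $t=|W_1\setminus W_2|\geq 2$, a committee hitting the top $k$ parties of your $\beta(c_1,c_2)$ consists of one element of $\{c_1,c_2\}$, all of $W_1\cap W_2$, and $t-1$ of the remaining $2t-2$ singletons, so it is \emph{not} forced to equal $W_1$ or $W_2$ (it can mix elements of $W_1\setminus W_2$ with elements of $W_2\setminus W_1$), and depending on how the supports in your group (c) are ordered the guaranteed winning set need not contain both $W_1$ and $W_2$ at all. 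A correct statement should add non-triviality (and realistically weak efficiency and continuity, which are available where the lemma is used in Theorem~\ref{thm:cc-abc-winner-rule-characterization}) to the hypotheses, and the elimination step must actually be carried out; the bijection blocks $\beta_m$ above are the right starting point for the lower bound.
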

\begin{proof}
We fix two committees $W_1$ and $W_2$, $W_1 \neq W_2$, and construct a profile $\alpha(W_1, W_2)$ in the following way. 
Let $\calM(W_1, W_2)$ denote the set of bijections from $W_1 \setminus W_2$ to $W_2 \setminus W_1$.
Fix $m \in \calM(W_1, W_2)$ and let us construct profile $\beta_m(W_1, W_2)$ in the following way: For each $c\in W_1\setminus W_2$, we introduce one voter who approves of $\{c, m(c)\}$; further, for each candidate from $c \in W_1 \cap W_2$ we introduce one voter who approves $\{c\}$. From disjoint diversity and from symmetry we get that each committee that contains all candidates from $W_1 \cap W_2$ and that for each matched pair $(c, m(c))$ contains either $c$ or $m(c)$ (but not both of them), is winning.   

Now, we construct the profile $\alpha(W_1, W_2)$ as follows:
\begin{align*}
\alpha(W_1, W_2) = \sum_{m \in \calM(W_1, W_2)} \beta_m(W_1, W_2) \text{.}
\end{align*}
It follows from consistency that $W_1$ and $W_2$ are the only two winning committees.
\end{proof}

\begin{reptheorem}{thm:cc-abc-winner-rule-characterization}
\thmccabcwinnerrulecharacterization
\end{reptheorem}
\begin{proof}
The same reasoning as in the proof of Theorem~\ref{thm:characterizationChoiceFunctions} can be applied.
\end{proof}

\begin{repproposition}{prop:MAVfails2nonimposition}
\propMAVfailstwononimposition
\end{repproposition}
\begin{proof}
Let us take two committees, $W_1$ and $W_2$, such that $|W_1 \setminus W_2| = |W_2 \setminus W_1| \geq 2$. Consider a profile $A$ where $W_1$ and $W_2$ are unique winners. This means that each candidate from $W_1 \setminus W_2$ has the same approval score as each candidate from $W_2 \setminus W_1$. Indeed, if the approval score of some candidate $c \in W_1 \setminus W_2$ were higher then the approval score of some candidate $c' \in W_2 \setminus W_1$, then $(W_2 \setminus \{c'\}) \cup \{c\}$ would be a better committee than $W_2$, and so $W_2$ would not be winning. But this means that for each $c \in W_1 \setminus W_2$ and each $c' \in W_2 \setminus W_1$, the committee $(W_1 \setminus \{c\}) \cup \{c'\}$ is as good as $W_1$ according to AV, thus it is also a winner. Consequently, $W_1$ and $W_2$ are not unique winners.
\end{proof}

\bibliography{../main}

\end{document}